\tikzstyle{dist}=[circle,fill,inner sep=1.5pt] 
\tikzstyle{state}=[circle, draw]
\newcommand\R{\mathbb R}
\newcommand\N{\mathbb N}
\newcommand\Q{\mathbb Q}
\newcommand\Qnn{\mathbb Q _{\geq 0}}
\renewcommand{\epsilon}{\varepsilon}
\DeclareMathOperator*{\opt}{opt}
\newcommand{\Expected}{\mathbb{E}}
\renewcommand{\Pr}{\text{Pr}} 
\newcommand{\Distr}{\textit{Distr}}
\newcommand\mdp{\mathcal{M}}
\newcommand{\states}{S}
\newcommand\Act{\textit{Act}}
\newcommand\Trans{\mathbf{P}}
\newcommand\rew{\mathcal{R}}
\newcommand\indexc [1] [\mdp] {#1_c}
\newcommand\mdptup[1] [\refl]{(#1[S], \allowbreak #1[\Act], \allowbreak #1[\Trans])}
\newcommand\Paths[1][\mdp]{\textit{Paths}^{#1}}
\newcommandx\Pathsfrom[2][1=\mdp,2=\state]{\textit{Paths}^{#1}(#2)}
\newcommand\finPaths[1][\mdp]{\textit{Paths}_{\textit{fin}}^{#1}}
\newcommandx\finPathsfrom[2][1=\mdp, 2=\state]{\textit{Paths}_{\textit{fin}}^{#1}(#2)}
\newcommand{\last}{\textit{last}}
\newcommand{\state}{s}
\newcommand{\action}{\alpha}
\newcommand{\statei}{\state_0}
\newcommand{\statea}{\state_1}
\newcommand{\stateb}{\state_2}
\newcommand\dtmc{\mathcal{D}}
\newcommand\dtmctup[1] [\refl]{(#1[S], \allowbreak #1[\Trans])}
\newcommand\sched{\sigma}
\newcommand{\Scheds}[1][\mdp]{\Sigma^{#1}}
\newcommand{\SchedsMD}[1][\mdp]{\Sigma^{#1}_{\textit{MD}}}
\newcommand{\HyperPCTL}{\textsf{\small HyperPCTL}\xspace} 
\newcommand{\PHL}{\textsf{\small PHL}\xspace}
\newcommand{\comp}{\textsf{\small comp}\xspace}
\renewcommand{\phi}{\varphi}
\renewcommand\implies{\Rightarrow}
\renewcommand\iff{\Leftrightarrow}
\newcommand\varstate{\hat{s}}
\newcommand\varsched{\hat{\sched}}
\newcommand\Finally{\LTLdiamond}
\newcommand{\Prob}{\mathbb{P}}
\newcommand{\RelReach}{\textsf{\small RelReach}\xspace} 
\newcommand{\RelReachMD}{\textsf{\small RelReach$_{\textsf{MD}}$}\xspace} 
\renewcommand{\comp}{\bowtie}
\newcommand{\diseq}{\trianglerighteq}
\newcommand{\init}{\textit{init}}
\newcommand{\Init}{\textit{Init}}
\newcommand{\PTIME}{\textsf{\smaller PTIME}\xspace}
\newcommand{\NP}{\textsf{\smaller NP}\xspace}
\newcommand{\PSPACE}{\textsf{\smaller PSPACE}\xspace}
\newcommand{\EXPTIME}{\textsf{\smaller EXPTIME}\xspace}
\newcommand{\comb}{\mathsf{Comb}}
\newcommand{\relInd}{\mathit{ind}}
\newcommand{\TA}{\textbf{TA}\xspace}
\newcommand{\TAone}{\textbf{TA(1)}\xspace}
\newcommand{\TAtwo}{\textbf{TA(2)}\xspace}
\newcommand{\TS}{\textbf{TS}\xspace}
\newcommand{\PW}{\textbf{PW}\xspace}
\newcommand{\PWone}{\textbf{PW(1)}\xspace}
\newcommand{\PWtwo}{\textbf{PW(2)}\xspace}
\newcommand{\SD}{\textbf{SD}\xspace}
\newcommand{\VN}{\textbf{VN}\xspace}
\newcommand{\RT}{\textbf{RT}\xspace}
\newcommand{\TO}{\textbf{TO}}
\newcommand{\OOM}{\textbf{OOM}}
\newcommand{\HyperProb}{\textsc{HyperProb}\xspace} 
\newcommand{\HyperPaynt}{\textsc{HyperPAYNT}\xspace} 
\newcommand{\storm}{\textsc{Storm}\xspace}
\newcommand{\uppaalsmc}{\textsc{\small UPPAAL-SMC}\xspace}
\newcommand{\PRISM}{\textsc{PRISM}\xspace}
\newcommand{\orcid}[1]{\href{https://orcid.org/#1}{\includegraphics[width=3mm]{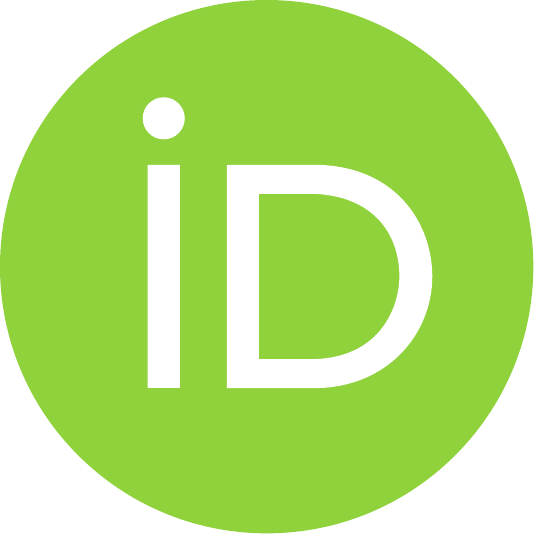}}}
\definecolor{mGreen}{rgb}{0,0.6,0}
\definecolor{mGray}{rgb}{0.5,0.5,0.5}
\definecolor{mPurple}{rgb}{0.58,0,0.82}
\definecolor{backgroundColour}{rgb}{0.95,0.95,0.92}
\lstdefinestyle{CStyle}{
	backgroundcolor=\color{backgroundColour},   
	commentstyle=\color{mGreen},
	keywordstyle=\color{magenta},
	numberstyle=\tiny\color{mGray},
	stringstyle=\color{mPurple},
	basicstyle=\footnotesize,
	breakatwhitespace=false,         
	breaklines=true,                 
	captionpos=b,                    
	keepspaces=true,                 
	numbers=left,                    
	numbersep=2pt,                  
	showspaces=false,                
	showstringspaces=false,
	showtabs=false,                  
	tabsize=1,
	language=C
}
\begin{document}
\title{Efficient Probabilistic Model Checking for Relational Reachability}
%
%
\author{
    Lina Gerlach\inst{1}\orcid{0009-0002-5506-6181} \and 
    Tobias Winkler\inst{1}\orcid{0000-0003-1084-6408} \and 
    Erika {\'A}brah{\'a}m\inst{1}\orcid{0000-0002-5647-6134} 
    \and \\ 
    Borzoo Bonakdarpour\inst{2}\orcid{0000-0003-1800-5419} \and 
    Sebastian~Junges\inst{3}\orcid{0000-0003-0978-8466}
}

\authorrunning{Gerlach et al.}
%
\institute{
    RWTH Aachen University, Aachen, Germany \and 
    Michigan State University, East Lansing, MI, USA \and 
    Radboud University, Nijmegen, the Netherlands
}
\maketitle              

\newtoggle{extended} 
\toggletrue{extended} 

\iftoggle{extended}{
    \SetWatermarkText{} 
}{
    \SetWatermarkAngle{0}
    \SetWatermarkText{\raisebox{12.5cm}{
        \hspace{0.1cm}
        \href{https://doi.org/10.5281/zenodo.15209574}{\includegraphics{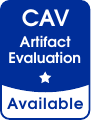}}
        \hspace{9cm}
        \includegraphics{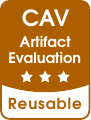}
    }}
}

\begin{abstract}
	
Markov decision processes model systems subject to nondeterministic and 
probabilistic uncertainty. A plethora of verification techniques addresses 
variations of reachability properties, such as: \emph{Is there a 
scheduler resolving the nondeterminism such that the probability to 
reach an error state is above a threshold?} We consider an understudied 
extension that relates different reachability probabilities, such as: \emph{Is there a scheduler such that two sets of states are reached with different probabilities?} These questions appear naturally in the design of randomized 
algorithms and in various security applications. 
We provide a tractable 
algorithm for many variations of this problem, while proving
computational hardness of some others.
An implementation of our algorithm beats solvers for more general probabilistic hyperlogics by orders of magnitude, on the subset of their benchmarks that are within our fragment. 
\end{abstract}
\section{Introduction}
\label{sec:intro}

Markov decision processes (MDPs) are the ubiquitous model to describe system behavior subject to both nondeterminism and probabilistic uncertainty. 
During the execution of an MDP, a \emph{scheduler} (aka \emph{policy} or \emph{strategy}) resolves the nondeterminism at every state by choosing one out of several available actions specifying the possible next states and their probabilities.
A classic verification task in an MDP is to determine whether there is a 
scheduler such that, using this scheduler, the probability to reach an error 
state exceeds a threshold.
To decide this, one can compute and evaluate the performance of the \emph{maximizing} scheduler, as is well established both in theory and in practice~\cite{bk08-book}.
Such \emph{reachability properties} are the cornerstone to support, e.g., linear temporal logics~\cite{DBLP:conf/focs/CourcoubetisY88}.  

However, evaluating the maximizing (or minimizing) scheduler as outlined above is only applicable to the most basic types of reachability properties.
Indeed, for \emph{relational reachability} properties like:
\begin{center}
    \enquote{Is there a scheduler reaching state $a$ with a \emph{different} probability than state $b$?}
\end{center}
we are not necessarily interested in maximizing 
the probability of reaching either $a$ or $b$ --- the two probabilities just have to be different.
\emph{This paper contributes practically efficient algorithms and complexity results for model checking relational reachability properties}, including the above example and many more (see below).

Relational reachability properties go beyond the queries considered in multi-objective model 
checking~\cite{chatterjeeMarkovDecision2006,chatterjeeMarkovDecision2007,etessamiMultiObjectiveModel2008,quatmannVerificationMultiobjective2023}, while they can be expressed in (generally intractable) probabilistic 
hyperlogics~\cite{ab18,dabb22,dft20}.
We discuss the precise relation to these works in \Cref{sec:related}.
Our algorithm solves some standard benchmarks for these hyperlogics  orders of magnitude faster than the state of the art~\cite{dabb21,andriushchenkoDeductiveController2023}.

\paragraph{Motivating example.}
\label{ex:motivating_ex}
We wish to simulate an unbiased, perfectly random coin flip using an infinite stream of possibly \emph{biased} random bits,
each of which is $0$ with an unknown but fixed probability $0<p <1$ and otherwise $1$.
The following simple solution is due to von Neumann~\cite{vonNeumann1951}:
Extract the first two bits from the stream; if they are different, return the value of the first; otherwise try again.
Now, consider a variation of the problem where the stream comprises random bits with 
\emph{different}, unknown biases $p_0, p_1, {\ldots}$ which are, however, all known to lie in an 
interval $[\underline{p}, \overline{p}] \subset (0,1)$.
Is von Neumann's solution still applicable in this new situation?
To address this question for a concrete interval $[\underline{p}, \overline{p}]$, say $[0.59,0.61]$, 
we may model the situation as shown in~\Cref{fig:illustrativeExample} and formalize the property as 
\begin{align}
    \forall \sched .\
    \Pr_{s_0}^{\sched}(\Finally \{01\}) \,\approx_{\epsilon}\, \Pr_{s_0}^{\sched}(\Finally \{10\})
    \tag{$\dagger$}\label{eq:vonNeumannExampleProperty}
\end{align}
where $\sched$ is a universally quantified scheduler, 
$s_0$ is the initial state, and $\approx_{\epsilon}$ means approximate equality up to absolute error $\epsilon \geq 0$.
Using the techniques presented in this paper, we can establish automatically that, as expected, \eqref{eq:vonNeumannExampleProperty} is false for $\epsilon = 0$ (exact equality), but holds if we relax the constraint to $\epsilon = 0.05$.
In words, von Neumann's trick continues to work \enquote{approximately} in the new setting.
Note that the universal quantification in \eqref{eq:vonNeumannExampleProperty} is over \emph{general} policies that may use both unbounded memory and randomization.
This is essential to model the problem properly:
Without randomization, all biases would be either $\underline{p}$ or $\overline{p}$ and a bounded-memory policy would induce an ultimately periodic stream of biases.

\begin{figure}[t]
    \centering
    \scalebox{.7}{
    \begin{tikzpicture}[on grid,node distance=7.5mm and 25mm,semithick,>=stealth]
        \node[state] (start) {$s_0$};
        \node[circle,fill=black,inner sep=1.5pt,above=of start] (startl) {};
        \node[circle,fill=black,inner sep=1.5pt,below=of start] (startu) {};
        \node[state,left=of start] (0) {$0$};
        \node[state,right=of start] (1) {$1$};
        \node[circle,fill=black,inner sep=1.5pt,above=of 0] (0l) {};
        \node[circle,fill=black,inner sep=1.5pt,below=of 0] (0u) {};
        \node[circle,fill=black,inner sep=1.5pt,above=of 1] (1l) {};
        \node[circle,fill=black,inner sep=1.5pt,below=of 1] (1u) {};
        \node[state,accepting,left=of 0] (01) {$01$};
        \node[state,accepting,right=of 1] (10) {$10$};
        
        \draw[-] (start) -- (startl);
        \draw[-] (start) -- (startu);
        \draw[->] (startl) -- node[near start,above] {$0.59$} (0);
        \draw[->] (startl) -- node[near start,above] {$0.41$} (1);
        \draw[->] (startu) -- node[near start,below] {$0.61$} (0);
        \draw[->] (startu) -- node[near start,below] {$0.39$} (1);
        \draw[-] (0) -- (0l);
        \draw[-] (0) -- (0u);
        \draw[-] (1) -- (1l);
        \draw[-] (1) -- (1u);
        \draw[->] (0l) edge[out=0,in=170] node[near start,above] {$0.59$} (start);
        \draw[->] (0l) -- node[near start,above] {$0.41$} (01);
        \draw[->] (0u) edge[out=0,in=190] node[near start,below] {$0.61$} (start);
        \draw[->] (0u) -- node[near start,below] {$0.39$} (01);
        \draw[->] (1l) edge[out=180,in=10] node[near start,above] {$0.41$} (start);
        \draw[->] (1l) -- node[near start,above] {$0.59$} (10);
        \draw[->] (1u) edge[out=180,in=-10] node[near start,below] {$0.39$} (start);
        \draw[->] (1u) -- node[near start,below] {$0.61$} (10);

        \path[-latex', draw]
            (01) edge[loop left] (01)
            (10) edge[loop right] (10);
    \end{tikzpicture}
}
    \caption{MDP from illustrative example. }
    \label{fig:illustrativeExample}
\end{figure}
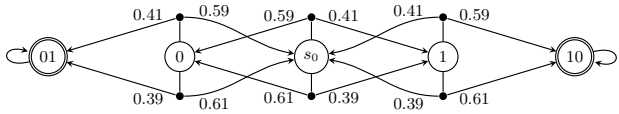

\paragraph{A zoo of relational properties.}
In this paper, we consider variations of relational reachability properties, like
asking for a single 
scheduler ensuring that a set of states is reached with higher probability than a different set of 
states (inequality properties), or that these probabilities are approximately or exactly the same 
(equality properties). These probabilities can be evaluated from the same or from different initial 
states. 
Furthermore, one can consider two schedulers, which allows us to express, e.g., that any pair of 
schedulers induces roughly the same reachability probability. On top of these variations, we 
consider weighted sums of reachability probabilities. We formally introduce the properties in 
\cref{sec:problem_statement}.

\paragraph{Verifying relational reachability properties.}
Given an MDP and a relational reachability property, we provide an algorithm to decide satisfaction and construct, if possible, 
the corresponding witnessing scheduler(s).
The key insight is that these (possibly randomized 
memoryful) schedulers can be constructed by translating the given property to \emph{expected
reward} computations in a series of mildly transformed MDPs.
For inequality properties, it suffices to optimize the total expected reward.
For (approximate) 
equality, the main idea is to construct a randomized memoryful scheduler from the schedulers witnessing the corresponding two inequality properties (\cref{sec:algo}).
Our prototypical implementation of this approach on top of \storm \cite{henselProbabilisticModel2022} shows its practical feasibility (\Cref{sec:evaluation}). 

\paragraph{Computational complexity.}
The algorithm outlined above is exponential \emph{only} in the number of 
different target sets that occur in the property, i.e., the algorithm is \emph{fixed-parameter tractable} (FPT)~\cite{groheDescriptiveParameterized1999}.
In \Cref{sec:complexity}, we study the computational complexity of the problem in greater detail.
Besides the FPT 
result, we show that the problem is in general \PSPACE-hard, but can be solved in \PTIME under one of various (mild) assumptions. Furthermore, when restricting the schedulers to be \emph{memoryless and deterministic} (MD), several types of equality properties are strongly \NP-hard.
We also list various fragments where we can compute MD schedulers in polynomial 
time. 
\cref{tab:complexity_overview_m=2} gives an overview on the complexity of \emph{selected} fragments that compare two probabilities, illustrating the border between \PTIME and strong \NP-hardness for MD schedulers.

\begin{table}[t]
    \caption{Complexity of selected classes of simple relational reachability properties over MD schedulers, where $\epsilon > 0$ and $\diseq~ \in \{ \geq, >, \not\approx_{\epsilon'} \mid \epsilon' \geq 0 \}$.
    Over general schedulers, all variants considered here are in \PTIME (\Cref{th:general_PTIME}).
    }
    \label{tab:complexity_overview_m=2}
    \setlength\tabcolsep{0pt}
    \centering
    \begin{tabular*}{\linewidth}{@{\extracolsep{\fill}} l  l }
        \toprule
         \bf Property class & \bf  Complexity over MD schedulers 
         \\
        \midrule
         $\exists \sched .\ \Pr^{\sched}_{\state}(\Finally T_1) =
       \Pr^{\sched}_{\state}(\Finally T_2)$
        & strongly \NP-complete [Th.~\ref{th:MD_NP-complete}\ref{item:1sched1state}] 
        \\
        $\exists \sched .\ \Pr^{\sched}_{\state}(\Finally T_1) \approx_\epsilon 
       \Pr^{\sched}_{\state}(\Finally T_2)$
        & \NP-complete [Th.~\ref{th:MD_NP-complete}\ref{item:1sched1state}] 
        \\
        $\exists \sched .\ \Pr^{\sched}_{\state}(\Finally T_1) \diseq \Pr^{\sched}_{\state}(\Finally T_2)$
        & in \NP [Th.~\ref{th:MD_NP}]; \PTIME if $T_1, T_2$ absorb. [Th.~\ref{th:MD_PTIME}\ref{item:MD_n-comb}]
        \\
        \midrule 
        $\exists \sched .\ \Pr^{\sched}_{\state_1}(\Finally T_1) = \Pr^{\sched}_{\state_2}(\Finally T_2)$
        & strongly \NP-complete [Th.~\ref{th:MD_NP-complete}\ref{item:1sched2state}]
        \\
        $\exists \sched .\ \Pr^{\sched}_{\state_1}(\Finally T_1) \approx_\epsilon \Pr^{\sched}_{\state_2}(\Finally T_2)$
        & \NP-complete [Th.~\ref{th:MD_NP-complete}\ref{item:1sched2state}]
        \\
        $\exists \sched .\ \Pr^{\sched}_{\state_1}(\Finally T_1) \diseq \Pr^{\sched}_{\state_2}(\Finally T_2)$
        & in \NP [Th.~\ref{th:MD_NP}]
        \\
        \midrule 
        $\exists \sched_1, \sched_2 .\ \Pr^{\sched_1}_{\state_1}(\Finally T_1) = \Pr^{\sched_2}_{\state_2}(\Finally T_2)$
        & strongly \NP-complete [Th.~\ref{th:MD_NP-complete}\ref{item:2sched2state}]
        \\
        $\exists \sched_1, \sched_2 .\ \Pr^{\sched_1}_{\state_1}(\Finally T_1) \approx_\epsilon \Pr^{\sched_2}_{\state_2}(\Finally T_2)$
        & \NP-complete [Th.~\ref{th:MD_NP-complete}\ref{item:2sched2state}]
        \\
        $\exists \sched_1, \sched_2 .\ \Pr^{\sched_1}_{\state_1}(\Finally T_1) \diseq \Pr^{\sched_1}_{\state_2}(\Finally T_2)$
        & \PTIME [Th.~\ref{th:general_PTIME}\ref{item:independent}]
        \\
        \bottomrule
    \end{tabular*}
\end{table}

\paragraph{Summary and contributions.}
In summary, in this paper we present model checking for relational reachability properties, which go 
beyond standard reachability and multi-objective properties while remaining tractable 
(see~\Cref{sec:related} for a discussion of related work). The tractability is in sharp contrast with the 
more general probabilistic hyperlogics. 
The key contributions are the efficient algorithm (\cref{sec:algo}), a 
prototypical implementation thereof (\cref{sec:evaluation}), and a study of the complexity landscape for relational 
reachability properties (\cref{sec:complexity}). 
\iftoggle{extended}{}{
For details on the algorithms, proofs and benchmarks we refer to the extended version~\cite{extendedVersion}.}

\section{Preliminaries}
\label{sec:prelim}

We use $\N$, $\Q$, $\Qnn$, and $\R$ to denote the sets of natural, rational, non-negative rational and real numbers, respectively. 
For $r,r', \epsilon \in \R$ with $\epsilon \geq 0$, we write $r \approx_\epsilon r'$ iff $|r-r'| \leq \epsilon$, and $r \not\approx_\epsilon r'$ iff $|r - r'| > \epsilon$.
The set $\Distr(V)$ of \emph{probability distributions} over a finite set $V$ contains all $\mu \colon V \rightarrow [0,1]$ s.t.\ $ \sum_{v \in V}\mu(v) = 1$.

\begin{definition}
    \label{def:MDP}
    A \emph{Markov decision process (MDP)} is a triple $\mdp {=} \mdptup$ s.t.\ 
    $\states$ is a non-empty finite set of \emph{states},
    $\Act$ is a non-empty finite set of \emph{actions}, and
    $\Trans \colon \states \times \Act \times \states \rightarrow [0,1]$ is a \emph{transition 
    probability function} s.t.\ for all $\state \in \states$ the set of its \emph{enabled actions}
	$
		\Act(\state)= \left\{\action\in\Act \mid \sum_{\state' \in \states} \Trans(\state, 
		\action, \state') =1\right\}
	$
	is non-empty and 
	$\sum_{\state' \in \states} \Trans(\state, \action, \state') = 0$
 for all $\action \in \Act \setminus \Act(s)$.
\end{definition}
A state $\state \in \states$ is \emph{absorbing} if $\Trans(\state, \action, \state) = 1$ for all $\action \in \Act(\state)$.
A set of states $S' \subseteq \states$ is absorbing if all $\state \in S'$ are absorbing.

An \emph{(infinite) path} of an MDP $\mdp$ is an infinite sequence of states and actions $\pi = s_0 \action_0 s_1 \action_1 \ldots$ such that for all $i$ we have 
$\Trans(s_i, \action_i, s_{i+1})>0$.
A \emph{finite path} is a finite prefix of an infinite path ending in a state.
We use $\Paths$ (respectively, $\finPaths$) to denote the set of all infinite (respectively, finite) paths of $\mdp$, and for some state $\state \in \states$ we use $\Pathsfrom$ (respectively, $\finPathsfrom$) to denote the set of all infinite (respectively, finite) paths of $\mdp$ starting at $\state$.
For a finite or infinite path $\pi$ we use $\pi(i) := s_i$ to denote the $i^{th}$ state of $\pi$.
For a finite path $\pi = s_0 \action_0 \ldots \action_{n-1} s_n$, we define $\last(\pi) := s_n$ and $|\pi| := n$.

A \emph{scheduler} resolves the nondeterminism in an MDP.
\begin{definition}
    \label{def:MDPsched}
    A \emph{scheduler} for an MDP $\mdp = \mdptup$ is a function $\sched \colon \finPaths \allowbreak\to\Distr(\Act)$ with $\sched(\pi)(\action) = 0$ for all $\pi \in \finPaths$ and $\action \in \Act \setminus \Act(\last(\pi))$.
\end{definition}

A scheduler is \emph{memoryless} if it can be defined as a function $\sched \colon \states \to 
\Distr(\Act)$, and \emph{memoryful} (oder history-dependent)
otherwise.
A scheduler is \emph{deterministic} if $\sched(\pi) \in \{0,1\}$ for all $\pi \in \finPaths$, and \emph{randomized} otherwise.
The set of all general (i.e., history-dependent randomized (HR)) schedulers for an MDP $\mdp$ is denoted by $\Scheds$, the set of all memoryless deterministic (MD) ones by $\SchedsMD$.

Applying a scheduler to an MDP induces a 
\emph{discrete-time Markov chain (DTMC)}, which is an MDP where the set of actions is a singleton. 
We usually omit the actions and define a DTMC as a tuple $\dtmc = \dtmctup$.

\begin{definition}
    For an MDP $\mdp = \mdptup$ and a scheduler $\sched \in \Scheds$, 
    the \emph{DTMC induced by $\mdp$ and $\sched$} is defined as $\mdp^{\sched} = (\states^{\sched}, \Trans^{\sched})$ with $\states^{\sched} = \finPaths$,
    \[\Trans(\pi, \pi') = \begin{cases}
        \Trans(\last(\pi), \action, s') \cdot \sched(\pi)(\action) & \text{if } \pi' = \pi\action s' \\
        0 & \text{otherwise}\ .
    \end{cases}\]
\end{definition}

For an MDP $\mdp$, a scheduler $\sched$ and a state $\state \in \states^\sched$ in the induced DTMC $\mdp^\sched$, we use $\Pr^{\mdp, \sched}_{\state}$ or simply $\Pr^\sched_\state$ to denote the associated probability measure.
For a target set $T \subseteq \states$, we further use $\Pr^\sched_\state(\Finally T)$ to denote the probability of reaching $T$ from $\state$ in $\mdp^\sched$. 
We refer to~\cite{bk08-book} for details.

\section{Problem Statement}
\label{sec:problem_statement}

The problem we study in this paper is formally defined as follows:

\begin{mdframed}[backgroundcolor=white,linewidth=0.75pt]
    \textbf{Problem \RelReach:} 
    Given an MDP $\mdp$ with states $\states$, decide whether
    \[
        \exists \sched_1, \ldots, \sched_n \in \Scheds .~
        \sum_{i=1}^{m} q_i \cdot \Pr^{\sched_{k_i}}_{\state_{i}}(\Finally T_i) ~\comp~ q_{m+1}
        ~,
        \qquad\text{where}
    \]
    
    \begin{itemize}
        \item $m,n$ are natural numbers with $m \geq n$,
        \item $q_1, \ldots, q_{m+1}$ are rational coefficients,
        \item $\state_1, \ldots, \state_m \in \states$ are (not necessarily distinct) initial states,
        \item $\{k_1, \ldots, k_m\} = \{1, \ldots, n\}$ is a set of indices,
        \item $T_1, \ldots, T_m \subseteq \states$ are (not necessarily distinct) target sets, and 
        \item $\comp~ \in \{ >, \geq, \approx_\epsilon, \not\approx_\epsilon 
        \mid \epsilon \in \Qnn\}$ is a comparison operator.
    \end{itemize}
\end{mdframed}

The comparison operators $=$ and $\neq$ are supported via $\approx_0$ and $\not\approx_0$, respectively, while properties with $\comp~\in\{<,\leq\}$ can be reduced to the above form by multiplying coefficients with $-1$.
Purely universally quantified properties (e.g.\ \eqref{eq:vonNeumannExampleProperty} on page~\pageref{eq:vonNeumannExampleProperty}) are readily reducible to \RelReach via negation.
We do not consider properties with quantifier alternations.
Throughout the paper, the universally quantified variant is loosely referred to as \enquote{relational reachability} as well, but ``\RelReach'' is reserved for the existential variant defined above (this distinction is relevant for the complexity results).
Note that properties like $\exists \sched .\ \Pr^\sched_s(\Finally T) = \Pr^\sched_s(\Finally T)$ can be brought into the above form by subtracting the right-hand-side on both sides of the equality, since we allow positive and negative coefficients.
Below, we provide some further examples properties:
\begin{enumerate}[(1)]
    \item \emph{The probability of reaching $T$ from $s$ is \enquote{approximately scheduler-independent}:}
    \[\forall \sigma_1 \forall \sigma_2.~\Pr_s^{\sigma_1}(\Finally T) \approx_\epsilon \Pr_s^{\sigma_2}(\Finally T) ~.\]
    \item \emph{The probability to reach $T$ from $s_1$ is at least twice \textnormal{/} 10\% higher than the probability of reaching $T$ from $s_2$, no matter the scheduler:}
    \[
        \forall \sigma .~ \Pr_{s_1}^{\sigma}(\Finally T) \geq 2\cdot \Pr_{s_2}^{\sigma}(\Finally T)
        \quad\text{/}\quad
        \forall \sigma .~ \Pr_{s_1}^{\sigma}(\Finally T) \geq \Pr_{s_2}^{\sigma}(\Finally T) + 0.1
        ~.
    \]
    \item \emph{There is a scheduler that, \emph{in expectation}, visits more (different) targets from $\{T_1,\ldots,T_k\}$ than from $\{U_1,\ldots,U_\ell\}$:}
    \[\exists \sigma.~\Pr_s^\sigma(\Finally T_1) + \ldots + \Pr_s^\sigma(\Finally T_k) > \Pr_s^\sigma(\Finally U_1) + \ldots + \Pr_s^\sigma(\Finally U_\ell)~.\]
\end{enumerate}

\section{Verifying Relational Reachability Properties Efficiently}
\label{sec:algo}

\newcommand{\attractor}{A}
\newcommand{\lb}[1]{\underline{#1}}
\newcommand{\ub}[1]{\overline{#1}}

Assume an arbitrary MDP $\mdp = \mdptup$ and a \RelReach property 
\begin{align*}
    \exists \sched_1, \ldots, \sched_n \in \Scheds .~ \sum_{i=1}^{m} q_i \cdot \Pr^{\sched_{k_i}}_{\state_{i}}(\Finally T_i) ~\comp~ q_{m+1} ~.
    \tag{$\star$} \label{eq:genRelReachProp}
\end{align*}
In the following, we outline a four-step procedure that checks whether property (\ref{eq:genRelReachProp}) holds and, if yes, constructs (possibly memoryful randomized) witness schedulers.
\iftoggle{extended}{
    The procedure is summarized in \cref{alg:linear_general}. 
}{
    The procedure is summarized in \cref{alg:linear_general_simplified} for the comparison relation $\approx_\epsilon$. 
}

\begin{example}
    \label{ex:runningExIntro}
    The MDP in \Cref{fig:runningEx} (left) together with the property
    \[
        \exists \sched .~~ \Pr^{\sched}_{s_1}(\Finally T_1) -\nicefrac{1}{2} \cdot \Pr^{\sched}_{s_1}(\Finally T_2) - \nicefrac{1}{2} \cdot \Pr^{\sched}_{s_2}(\Finally T_2) ~\approx_\epsilon~ 0
        ~,
    \]
    (\emph{Does there exist a scheduler such that the probability of reaching $T_1$ from $s_1$ is approximately equal to the mean of the probabilities of reaching $T_2$ from $s_1$ and $T_2$ from $s_2$?}) will serve as a running example throughout the section.
\end{example}

\subsubsection{Step 1: Collect combinations of initial states and schedulers.}
\label{step1}
We start by analyzing the relationship of schedulers and states in the property. 

\begin{definition}[State-scheduler combinations]
    We define 
    $\comb = \{ (s_i, \sched_{k_i}) \mid i=1, \ldots, m\}$, the set of all different combinations of initial states and schedulers that occur in the property \eqref{eq:genRelReachProp}.
    Furthermore, for every $c = (s, \sched) \in \comb$, we define $\relInd(c)$ as the set of indices $i$ such that $(s_i, \sched_{k_i}) = c$.
\end{definition}

\begin{example}
    In the property from \Cref{ex:runningExIntro} we have $\comb = \{c_1=(s_1,\sched), c_2=(s_2,\sched)\}$.
    Furthermore, $\relInd(c_1) = \{1,2\}$ and $\relInd(c_2) = \{3\}$. 
\end{example}

Notice that $n \leq |\comb| \leq m$.
State-scheduler combinations allow introducing fresh scheduler variables, one per combination:
\begin{lemma}
    \label{thm:whyComb}
    Let $\comb = \{c_1,\ldots,c_{k}\}$.
    Then property \eqref{eq:genRelReachProp} is equivalent to
    \[
        \exists \sched_{c_1},\ldots,\sched_{c_k}.~ \sum\nolimits_{i=1}^{k}  \Big[ \sum\nolimits_{j \in \relInd(c_i)} q_j \cdot \Pr_{s_j}^{\sched_{c_i}}(\Finally T_j) \Big]
        ~\comp~ q_{m+1}
        ~.
    \]
\end{lemma}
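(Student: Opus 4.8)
The crux is a \emph{locality} observation: for any scheduler $\sched$ and state $\state$, the value $\Pr^{\sched}_{\state}(\Finally T)$ depends only on the restriction of $\sched$ to finite paths that start in $\state$, i.e.\ to $\finPathsfrom[\mdp][s]$, since the induced DTMC $\mdp^{\sched}$ and the reachability probability from $\state$ are entirely determined by these paths. Moreover, for distinct states $\state \neq \state'$ the sets $\finPathsfrom[\mdp][s]$ and $\finPathsfrom[\mdp][s']$ are disjoint, as a path's starting state is fixed. Hence a scheduler's behaviour from one initial state can be altered independently of its behaviour from another. I will use this to move freely between the $n$ scheduler variables of \eqref{eq:genRelReachProp} and the $k = |\comb|$ variables of the reformulation, proving equivalence of the two existential statements by establishing both implications.

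First I would note that the index sets $\relInd(c_1), \ldots, \relInd(c_k)$ partition $\{1,\ldots,m\}$: every $j$ lies in exactly the block of its combination $(\state_j, \sched_{k_j})$. Consequently the double sum in the statement ranges over the very same terms $j \in \{1,\ldots,m\}$ as the single sum in \eqref{eq:genRelReachProp}; the only difference is which scheduler is attached to each term, namely $\sched_{k_j}$ in the original versus $\sched_{c_i}$ with $j \in \relInd(c_i)$ in the reformulation. It therefore suffices to match these schedulers term by term so that $\Pr^{\sched_{k_j}}_{\state_j}(\Finally T_j) = \Pr^{\sched_{c_i}}_{\state_j}(\Finally T_j)$ for every $j$, after which the whole sums, and thus the comparisons against $q_{m+1}$, coincide.

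For the direction $(\Rightarrow)$, given witnesses $\sched_1, \ldots, \sched_n$ I simply set $\sched_{c} := \sched_{k_j}$ for the combination $c = (\state_j, \sched_{k_j})$; this is well defined because $c$ already determines its scheduler component, so all $j$ in one block agree. Each term is then literally unchanged. The interesting direction is $(\Leftarrow)$: here the reformulation may have split a single original scheduler $\sched_r$ into several variables $\sched_c$, one for each combination $c = (\state, \sched_r)$ occurring in $\comb$. To recover $\sched_r$ I would define it piecewise, letting it agree with $\sched_c$ on $\finPathsfrom[\mdp][s]$ for each such $c$ and fixing it arbitrarily on all remaining paths. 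This is consistent precisely because the states $\state$ appearing in these combinations are pairwise distinct (distinct combinations sharing a scheduler must differ in their state), so the relevant path-sets are disjoint. By locality, $\Pr^{\sched_r}_{\state}(\Finally T) = \Pr^{\sched_c}_{\state}(\Finally T)$ for every $c = (\state, \sched_r)$, yielding the required term-by-term equality.

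I expect the main obstacle to be making the locality and well-definedness arguments rigorous: one must verify that changing a scheduler only on paths \emph{not} starting in $\state$ leaves $\Pr^{\sched}_{\state}(\Finally T)$ invariant (formally, that reachability in $\mdp^{\sched}$ from $\state$ factors through the restriction to $\finPathsfrom[\mdp][s]$), and that the merged scheduler in $(\Leftarrow)$ is a genuine element of $\Scheds$. Both follow from the disjointness of path-sets for distinct initial states, but this is the step carrying the actual content; the bookkeeping with $\relInd$ and the partition of $\{1,\ldots,m\}$ is then routine.
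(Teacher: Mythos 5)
Your proposal is correct and takes essentially the same approach as the paper: the paper's proof rests on exactly your key observation, namely that since memoryful schedulers may depend on the initial state, two schedulers used from distinct initial states can be merged into one scheduler $\hat{\sched}$ matching both reachability probabilities, which is your disjoint-path-set construction in the $(\Leftarrow)$ direction. You merely spell out, via the locality and well-definedness arguments, the merged scheduler whose existence the paper asserts without explicit construction.
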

\iftoggle{extended}{
    \begin{proof}
    Quantifying over each state-scheduler combination individually is justified because of the following:
    For every pair of distinct combinations $c = (s,\sched), c' = (s',\sched') \in \comb$ it holds that either the scheduler variables are already distinct (i.e., $\sched\neq\sched'$), or else $\sched = \sched'$ and $s \neq s'$.
    In the latter case, since \emph{memoryful schedulers may depend on the initial state}, we can instead quantify over two different schedulers.
    More formally, the following is true in any MDP $\mdp$ with states $s \neq s'$ (and target sets $T, T'$):
    For every $\sched,\sched' \in \Scheds[\mdp]$ there exists a (memoryful) $\hat{\sched}\in \Scheds[\mdp]$  such that 
    $\Pr^{\sched}_{s}(\Finally T) = \Pr^{\hat{\sched}}_{s}(\Finally T)$ and $\Pr^{\sched'}_{s'}(\Finally T') = \Pr^{\hat{\sched}}_{s'}(\Finally T')$.
    \hfill\qed
\end{proof}
}{
    \begin{proof}[Sketch]
        Quantifying over each state-scheduler combination individually is justified because schedulers may use memory and thus remember the initial state, see \cite{extendedVersion} for details. 
        \hfill\qed
    \end{proof}
}

\begin{example}
    Applying \Cref{thm:whyComb} to the property from \Cref{ex:runningExIntro} yields the following equivalent property (over general, memoryful schedulers):
    \[
        \exists \sched_{c_1}, \sched_{c_2} .~
        \underbrace{\left[1 \cdot \Pr^{\sched_{c_1}}_{s_1}(\Finally T_1) -\tfrac{1}{2} \cdot \Pr^{\sched_{c_1}}_{s_1}(\Finally T_2)\right]}_{\text{combination } c_1 \,=\, (s_1,\textcolor{gray}{\sched})}
        +
        \underbrace{\left[-\tfrac{1}{2} \cdot \Pr^{\sched_{c_2}}_{s_2}(\Finally T_2)\right]}_{\text{combination } c_2 \,=\, (s_2,\textcolor{gray}{\sched})}
        ~\approx_\epsilon~
        0
    ~.
    \]
\end{example}

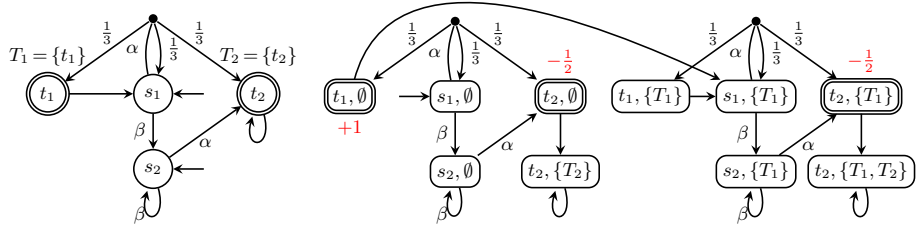
\begin{figure}[t]
    \centering
    \begin{minipage}[t]{0.34\textwidth}
        \begin{tikzpicture}[initial text=,initial where=right,node distance=10mm and 14mm,on grid,semithick,>=stealth, every node/.style={scale=0.8}]
            \node[state,initial] (s1) {$s_1$};
            \node[state,initial,below=of s1] (s2) {$s_2$};
            \node[state,accepting,right=of s1,label=90:{$T_2\,{=}\,\{t_2\}$}] (t2) {$t_2$};
            \node[dist,above =of s1] (a) {};
            \node[state,accepting,left=of s1,label=90:{$T_1\,{=}\,\{t_1\}$}] (t1) {$t_1$};
            \draw[->] (s1) edge node[left] {$\beta$} (s2);
            \draw[-] (s1) edge[bend left=15] node[left] {$\alpha$} (a);
            \draw[->] (s2) edge node[below] {$\alpha$} (t2);
            \draw[->] (s2) edge[loop below] node[left] {$\beta$} (s2);
            \draw[->] (t2) edge[loop below] (t2);
            \draw[->] (t1) edge (s1);
            \draw[->] (a) edge[bend left=15] node[right] {$\tfrac 1 3$} (s1);
            \draw[->] (a) edge node[above] {$\tfrac 1 3$}(t1);
            \draw[->] (a) edge node[above] {$\tfrac 1 3$} (t2);
        \end{tikzpicture}
    \end{minipage}
    \hfill
    \begin{minipage}[t]{0.65\textwidth}
        \begin{tikzpicture}[initial text=,initial where=left,node distance=10mm and 14mm,on grid,semithick,>=stealth, every node/.style={scale=0.8}]
            \node[state,rectangle,rounded corners,initial] (s1) {$s_1,\emptyset$};
            \node[state,rectangle,rounded corners,below=of s1] (s2) {$s_2,\emptyset$};
            \node[state,rectangle,rounded corners,accepting,right=of s1,label={90:\textcolor{red}{$-\frac 1 2$}}] (t2) {$t_2,\emptyset$};
            \node[dist,above =of s1] (a) {};
            \node[state,rectangle,rounded corners,accepting,left=of s1,label={-90:\textcolor{red}{$+1$}}] (t1) {$t_1,\emptyset$};
            \node[state,rectangle,rounded corners,below=of t2] (t2new) {$t_2,\{T_2\}$};
            
            \node[state,rectangle,rounded corners,right=40mm of s1] (s1') {$s_1,\{T_1\}$};
            \node[state,rectangle,rounded corners,below=of s1'] (s2') {$s_2,\{T_1\}$};
            \node[state,rectangle,rounded corners,accepting,right=of s1',label={90:\textcolor{red}{$-\frac 1 2$}}] (t2') {$t_2,\{T_1\}$};
            \node[dist,above =of s1'] (a') {};
            \node[state,rectangle,rounded corners,left=of s1'] (t1') {$t_1,\{T_1\}$};
            \node[state,rectangle,rounded corners,below=of t2'] (t2new') {$t_2,\{T_1,T_2\}$};
            
            \draw[->] (s1) edge node[left] {$\beta$} (s2);
            \draw[-] (s1) edge[bend left=15] node[left] {$\alpha$} (a);
            \draw[->] (s2) edge node[below] {$\alpha$} (t2);
            \draw[->] (s2) edge[loop below] node[left] {$\beta$} (s2);
            \draw[->] (t2) edge (t2new);
            \draw[->] (t2new) edge[loop below] (t2new);
            \draw[->] (t1) edge[out=75,in=155] (s1');
            \draw[->] (a) edge[bend left=15] node[right] {$\tfrac 1 3$} (s1);
            \draw[->] (a) edge node[above] {$\tfrac 1 3$}(t1);
            \draw[->] (a) edge node[above] {$\tfrac 1 3$} (t2);
            
            \draw[->] (s1') edge node[left] {$\beta$} (s2');
            \draw[-] (s1') edge[bend left=15] node[left] {$\alpha$} (a');
            \draw[->] (s2') edge node[below] {$\alpha$} (t2');
            \draw[->] (s2') edge[loop below] node[left] {$\beta$} (s2');
            \draw[->] (t2') edge (t2new');
            \draw[->] (t2new') edge[loop below] (t2new');
            \draw[->] (t1') edge (s1');
            \draw[->] (a') edge[bend left=15] node[right] {$\tfrac 1 3$} (s1');
            \draw[->] (a') edge node[above] {$\tfrac 1 3$}(t1');
            \draw[->] (a') edge node[above] {$\tfrac 1 3$} (t2');
        \end{tikzpicture}
    \end{minipage}
    \caption{An MDP (left) and its goal unfolding with \textcolor{red}{rewards} (right).}
    \label{fig:runningEx}
\end{figure}

\subsubsection{Step 2: Unfold targets and set up reward structures.}
\label{step2}
Next we process each combination $c \in \comb$ individually.
We rely on two established techniques from the literature:
Including reachability information in the state space \cite{quatmannVerificationMultiobjective2023,forejtQuantitativeMultiobjective2011} 
and encoding reachability probabilities as \emph{expected rewards} (e.g.,~\cite[pp.~51~ff.]{quatmannVerificationMultiobjective2023}).
For the sake of completeness, we detail these steps nonetheless:

\begin{definition}[Goal unfolding]
    \label{def:goalUnfolding}
    Let $c \in \comb$.
    The \emph{goal unfolding} of $\mdp$ w.r.t.\ $c$ is the MDP $\indexc = (\states_c, \Act, \indexc[\Trans])$ where $\states_c = \states \times 2^{\mathcal{T}_c}$ for $\mathcal{T}_c = \{T_i \mid i \in \relInd(c)\} \neq\emptyset$ the target sets corresponding to $c$, and
    $\indexc[\Trans]$ is defined as follows:
    For $\state, \state' \in \states$, $\mathcal{T}, \mathcal{T}' \subseteq \mathcal{T}_c$, and $\action \in \Act$,
    \[
        \indexc[\Trans]\big((\state, \mathcal T), \action, (\state', \mathcal T')\big)
        ~=~ 
        \begin{cases}
            \Trans(\state, \action, \state') & \text{if } 
            \mathcal T' = \mathcal T \cup \mathcal \{ T \in \mathcal{T}_c \mid s \in T \} \\
        0 & \text{else}
        \end{cases}
        ~.
    \]
\end{definition}
For a combination $c = (s,\textcolor{gray}{\sched})$ we use $\state_{c}$ to denote the state $(\state, \emptyset)$ in $\mdp_c$.

\begin{example}
    \label{ex:unfolding}
    The goal unfolding of the MDP in \Cref{fig:runningEx} (left) w.r.t.\ combination $c_1 = (s_1,\textcolor{gray}{\sched})$, for which we have $\mathcal T_{c_1} = \{T_1,T_2\}$, is depicted in \Cref{fig:runningEx} (right).
\end{example}

\begin{definition}[Reward structure for state-scheduler combination]
    \label{def:rewForComb}
    Let $c \in \comb$.
    We define the reward structure $\indexc[\rew] \colon \states_c \to \Q$ on the goal unfolding $\indexc$ by $\indexc[\rew] = \sum_{T \in \mathcal T_c} q_T \cdot \rew_{T}$, 
    where $q_T = \sum_{i \in \{\relInd(c)\mid T = T_i\}} q_i$ and
    \[
        \rew_T \colon \indexc[\states] \to \Q,~
        (\state, \mathcal T) \mapsto
        \begin{cases}
            1 & \text{if } \state \in T \wedge T \not\in \mathcal T \\
            0 & \text{else}
        \end{cases}
        ~.
    \] 
\end{definition}
Intuitively, we collect reward equal to the sum of the coefficients occurring together with a target $T \in \mathcal T_c$ when we visit $T$ \emph{for the first time}.
For any $\sched \in \Scheds[\indexc]$, the reward function $\indexc[\rew]$ can be naturally lifted to $\indexc^\sched$ and further to infinite paths of $\indexc^\sched$ by letting $\indexc[\rew](\pi) = \sum_{i=0}^{\infty} \indexc[\rew](\pi(i))$ for $\pi \in \Paths[\indexc]$; this is well-defined because we collect reward only finitely often on any path.
The \emph{expected reward} of $\indexc[\rew]$ on $\indexc$ from $s_c$ under some $\sched \in \Scheds[\indexc]$ is then defined as the expectation of the function $\indexc[\rew](\pi) = \sum_{i=0}^{\infty} \indexc[\rew](\pi(i))$.
Then, we can reduce our query to a number of expected reward queries\iftoggle{extended}{}{ (see \cite{extendedVersion} for the proof)}:

\begin{lemma}
    \label{thm:weightedReachProbsAsExpectedRew}
    For every combination $c = (s, \textcolor{gray}{\sched}) \in \comb$ and $\opt 
    \in \{\min,\max\}$:
    \[
        \opt_{\sched \in \Scheds[\mdp]} \sum_{j \in \relInd(c)} q_j \cdot \Pr_{s}^{\mdp,\sched}(\Finally T_j)
        ~=~
        \opt_{\sched \in \Scheds[\indexc]} \Expected^{\indexc, \sched}_{s_c}(\indexc[\rew])
        ~.
    \]
\end{lemma}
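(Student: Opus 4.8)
The plan is to exploit that the goal unfolding $\indexc$ augments $\mdp$ only with \emph{deterministic bookkeeping}: the memory component of a state in a run is uniquely determined by the history of visited states. I would first make this precise by constructing a probability- and action-preserving bijection between finite paths of $\indexc$ issuing from $s_c = (s,\emptyset)$ and finite paths of $\mdp$ issuing from $s$, and then use it to transfer schedulers. Concretely, map a finite $\mdp$-path $\pi = s_0 \action_0 \cdots s_k$ (with $s_0 = s$) to $\mapPaths(\pi) = (s_0,\mathcal{T}_0)\,\action_0\cdots(s_k,\mathcal{T}_k)$ in $\indexc$, where $\mathcal{T}_0 = \emptyset$ and $\mathcal{T}_{i+1} = \mathcal{T}_i \cup \{T \in \mathcal{T}_c \mid s_i \in T\}$. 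Because the successor memory is forced by \Cref{def:goalUnfolding}, $\mapPaths$ is a bijection onto the finite paths reachable from $s_c$, and each transition it traverses carries probability $\Trans(s_i,\action_i,s_{i+1})$, so $\mapPaths$ preserves cylinder probabilities. Since the enabled actions at $(\state,\mathcal{T})$ are exactly $\Act(\state)$, composing a scheduler with $\mapPaths$ (resp. its inverse) yields a correspondence $\sched \leftrightarrow \tilde\sched$ between $\Scheds[\mdp]$ and $\Scheds[\indexc]$ under which the induced DTMCs assign equal probabilities to matched cylinder events.

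Next I would analyze the reward. By the memory update, $\mathcal{T}_i = \{T \in \mathcal{T}_c \mid \exists j < i.\ s_j \in T\}$, so $\rew_T(s_i,\mathcal{T}_i) = 1$ precisely when $i$ is the \emph{first} index with $s_i \in T$. Hence along any path $\rew_T$ is collected at most once, which both justifies well-definedness of the infinite sum and shows that the accumulated $\rew_T$-reward equals the indicator of the event $\Finally T$. Taking expectations gives $\Expected^{\indexc,\tilde\sched}_{s_c}(\rew_T) = \Pr^{\indexc,\tilde\sched}_{s_c}(\Finally\,(T\times 2^{\mathcal{T}_c})) = \Pr^{\mdp,\sched}_{s}(\Finally T)$, where the last step is the path bijection. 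By linearity of expectation and the definition $\indexc[\rew] = \sum_{T\in\mathcal{T}_c} q_T\cdot\rew_T$, summing over $T$ yields $\Expected^{\indexc,\tilde\sched}_{s_c}(\indexc[\rew]) = \sum_{T\in\mathcal{T}_c} q_T\cdot\Pr^{\mdp,\sched}_s(\Finally T)$. Finally, regrouping the left-hand side of the lemma by distinct target sets and using $q_T = \sum_{i\in\{\relInd(c)\mid T=T_i\}} q_i$, I get $\sum_{j\in\relInd(c)} q_j\cdot\Pr^{\mdp,\sched}_s(\Finally T_j) = \sum_{T\in\mathcal{T}_c} q_T\cdot\Pr^{\mdp,\sched}_s(\Finally T)$, so the two per-scheduler quantities coincide.

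To conclude, since $\sched\mapsto\tilde\sched$ is a value-preserving bijection between $\Scheds[\mdp]$ and the reachable behaviors of $\Scheds[\indexc]$, the two sets of attainable values are identical, and therefore so are their minimum and maximum; this gives the claimed equality for both choices of $\opt$.

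The step I expect to be the main obstacle is setting up the scheduler correspondence rigorously: one must argue that a scheduler on $\indexc$ may behave arbitrarily on paths not in the image of $\mapPaths$ (which are unreachable from $s_c$) without affecting $\Expected^{\indexc,\sched}_{s_c}$, so that the measure from $s_c$ genuinely factors through the restriction to lifted paths and matches an $\mdp$-scheduler. The remaining ingredients — that $\rew_T$ fires exactly once and linearity of expectation — are routine once the path bijection is in place.
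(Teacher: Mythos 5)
Your proposal is correct and takes essentially the same route as the paper's proof: the unfolding preserves the (optimal) weighted reachability values, the reward $\rew_T$ is collected exactly on the first visit so that $\Expected^{\indexc,\sched}_{s_c}(\rew_T) = \Pr^{\sched}_{s_c}(\Finally T)$, and linearity of expectation together with the coefficient grouping $q_T = \sum_{i \in \{\relInd(c) \mid T = T_i\}} q_i$ closes the chain. The only difference is presentational: you spell out, via the path bijection $\mapPaths$ and the induced scheduler correspondence (including the harmless freedom of $\indexc$-schedulers on unreachable paths), the first equality that the paper dispatches with \enquote{by construction}.
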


\iftoggle{extended}{\begin{proof}
    Let $c = (s, \textcolor{gray}{\sched}) \in \comb$ and $\opt 
    \in \{\min,\max\}$.
    By construction, we have
    \[
        \opt_{\sched \in \Scheds[\mdp]} \sum_{j \in \relInd(c)} q_j \cdot \Pr_{s}^{\mdp,\sched}(\Finally T_j) 
        = 
        \opt_{\sched \in \Scheds[\indexc]} \sum_{j \in \relInd(c)} q_j \cdot \Pr_{s_c}^{\indexc,\sched}(\Finally T_j) 
        ~.
    \]
    Since $\rew_{T_j}$ collects reward 1 only on the first visit to $T_j$, it further follows that 
    \[
        \opt_{\sched \in \Scheds[\indexc]} \sum_{j \in \relInd(c)} q_j \cdot \Pr_{s_c}^{\indexc,\sched}(\Finally T_j)
        = 
        \opt_{\sched \in \Scheds[\indexc]} \sum_{j \in \relInd(c)} q_j \cdot \Expected^{\indexc, \sched}_{s_c}(\rew_{T_j}) 
        ~.
    \]
    By linearity of expectations, it holds that
    \begin{align*}
        \opt_{\sched \in \Scheds[\indexc]} \sum_{j \in \relInd(c)} q_j \cdot \Expected^{\indexc, \sched}_{s_c}(\rew_{T_j}) 
        &= 
        \opt_{\sched \in \Scheds[\indexc]} \Expected^{\indexc, \sched}_{s_c}\left(\sum_{j \in \relInd(c)} q_j \cdot \rew_{T_j}\right) 
    \end{align*}
    and finally
    \begin{align*}
        \opt_{\sched \in \Scheds[\indexc]} \Expected^{\indexc, \sched}_{s_c}\left(\sum_{j \in \relInd(c)} q_j \cdot \rew_{T_j}\right) 
        &= 
        \opt_{\sched \in \Scheds[\indexc]} \Expected^{\indexc, \sched}_{s_c}(\indexc[\rew]) 
        ~,
    \end{align*}
    since 
    $\indexc[\rew] = \sum_{T \in \mathcal T_c} (\sum_{j \in \{\relInd(c)\mid T = T_j\}} q_j) \cdot \rew_{T} = \sum_{j \in \relInd(c)} q_j \cdot \rew_{T_j}$.    
    \hfill\qed
\end{proof}}{}

\begin{example}
    Following \Cref{ex:unfolding}, the (non-zero) rewards $\rew_c$ for $c = (s_1,\textcolor{gray}{\sched})$ are given in \textcolor{red}{red} next to the states in \Cref{fig:runningEx} (right).
\end{example}

\subsubsection{Step 3: Compute expected rewards.}
\label{step3}
The next step is to compute, for each individual scheduler-state combination $c$, the maximal and minimal rewards occurring in \Cref{thm:weightedReachProbsAsExpectedRew}.
Again, we rely on existing techniques from the literature~\cite{puterman1994markov,karmarkar1984new,quatmannVerificationMultiobjective2023}.\footnote{Note that we (must) rely on results supporting positive and negative reward, since we allow positive and negative coefficients.} 
\iftoggle{extended}{}{We refer to \cite{extendedVersion} for the proof.}

\begin{lemma}
    \label{thm:exRewPtimeMD}
    Let $c \in \comb$ and $\opt \in \{\max,\min\}$.
    The optimal expected reward $\opt_{\sched \in \Scheds[\indexc]} \Expected^{\indexc, \sched}_{s_c}(\indexc[\rew]) \in \Q$ is computable in time polynomial in the size of $\mdp_c$.
    Moreover, the optimum is attained by an MD scheduler $\sigma \in \SchedsMD[\mdp_c]$.
\end{lemma}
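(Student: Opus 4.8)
The plan is to recognize $\opt_{\sched} \Expected^{\indexc, \sched}_{s_c}(\indexc[\rew])$ as an instance of the classical \emph{expected total reward} optimization on the single MDP $\indexc$, and to argue that the structure imposed by the goal unfolding places it in the benign regime where optimal MD schedulers exist and a linear program of size polynomial in $|\indexc|$ computes the value. (Note that the claim is only polynomiality in $|\indexc|$; the exponential dependence on the number of distinct target sets is hidden in the construction of $\indexc$ and accounted for separately.)

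First I would establish that the objective is well-defined and finite for every scheduler. By \Cref{def:goalUnfolding} the memory component of a state only grows along any path, so each summand $\rew_T$ of $\indexc[\rew]$ contributes at most once; reward is therefore collected at most $|\mathcal T_c|$ times on any path, and each collection strictly enlarges the recorded memory. Consequently the accumulated reward on every infinite path lies in the bounded interval $[-B,B]$ with $B := \sum_{T \in \mathcal T_c} |q_T|$, so $\Expected^{\indexc, \sched}_{s_c}(\indexc[\rew])$ is finite for every $\sched$, and lies in $\Q$ since all $q_T$ are rational. Crucially, this shows reward collection is \emph{transient}: after finitely many collecting steps the process remains in the zero-reward part of $\indexc$ forever. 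I would then invoke the standard theory of total-reward MDPs~\cite{puterman1994markov}: in this transient, uniformly bounded regime the optimal value satisfies the Bellman optimality equations, is attained by a memoryless deterministic scheduler, and (for $\opt=\max$) is the least solution of the value LP
\[
    \min \sum_{u} x_u \quad\text{s.t.}\quad x_u \ge \indexc[\rew](u) + \sum_{u'} \indexc[\Trans](u,\action,u')\, x_{u'} \quad \text{for all states } u \text{ and } \action \in \Act,
\]
with the dual (maximize, reversed inequality) characterizing $\opt=\min$. This LP has polynomially many variables and constraints in $|\indexc|$ and is solvable in polynomial time~\cite{karmarkar1984new}; the reduction from weighted reachability to expected reward mirrors~\cite{quatmannVerificationMultiobjective2023}.

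The main obstacle I expect is the \emph{mixed sign} of the rewards, already flagged in the footnote: for the undiscounted total-reward criterion with both positive and negative rewards, the value LP need not characterize $V^*$ and value iteration need not converge, unless the instance is suitably proper. The resolution is exactly the transience established above. To make this fully rigorous I would take one of two routes. Route (a): reduce to a genuine stochastic shortest path instance by redirecting every path, once it enters a region from which no further reward can ever be collected, into a fresh absorbing sink with reward $0$; proper policies then reach the sink almost surely and the SSP results apply verbatim. Route (b), which exploits the unfolding most directly: solve by backward induction over the finite lattice of memory sets $2^{\mathcal T_c}$, where the top layer (full memory) has value $0$ everywhere, and each lower layer is an ordinary reachability-reward problem whose terminal values are the already-computed values of the higher layers it transitions into. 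Either route yields both the MD optimality and the polynomial LP, completing the proof.
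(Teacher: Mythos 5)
Your setup (finiteness and rationality of the objective via first-visit transience of the reward, MD optimality from standard total-reward theory, and flagging the mixed-sign obstacle) is sound and parallels the paper. But both of your proposed resolutions of the mixed-sign issue miss the actual difficulty, which is not transience of the reward but the presence of \emph{zero-reward end components} in which remaining forever (continuation value $0$) can be \emph{strictly optimal}. Route (a) fails concretely on the paper's own running example: in the goal unfolding of \Cref{fig:runningEx}, from state $(s_2,\{T_1\})$ further reward \emph{can} still be collected (action $\alpha$ leads to $(t_2,\{T_1\})$ with reward $-\tfrac12$), so your transformation does not redirect this state to the sink; yet the maximizing behavior is the \emph{improper} policy that loops on $\beta$ forever, achieving value $0$ (cf.\ \Cref{ex:expectedRewards}), while every proper policy from $(s_2,\{T_1\})$ is forced to collect $-\tfrac12$. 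The standard SSP assumptions require improper policies to be infinitely bad, which fails here since all total rewards are bounded; the SSP machinery applied ``verbatim'' would return $-\tfrac12$ instead of the correct value $0$. Route (b) inherits the same problem one level down: within a fixed memory layer of $2^{\mathcal{T}_c}$, the ``exit'' states carrying the terminal values are \emph{not attracting} — the process may legitimately wander in an end component of the layer forever and cash out $0$ — so the per-layer subproblem is not an ``ordinary'' reachability-reward problem with a unique Bellman/LP solution (indeed, your displayed LP is incorrect or unbounded in the presence of such components, e.g.\ a pure self-loop yields only the vacuous constraint $x_u \ge x_u$).

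The missing idea, and the paper's actual route, is the \emph{MEC quotient}: collapse each maximal end component $E$ of $\mdp_c$ into a state $s_E$ and add an explicit transition from every $s_E$ to a fresh absorbing state $s_\bot$, modeling the option of staying in $E$ forever with continuation reward $0$. Since $\rew_c$ is zero on all EC states (rewards are first-visit only), the reward transfers to the quotient; since every scheduler eventually remains in some MEC, $s_\bot$ is \emph{attracting} in the quotient, the Bellman equations for the resulting reachability-reward objective have a \emph{unique} solution, and only then does a polynomial-size LP characterize the optimum (with $\min$ handled by negating the reward). Your layered decomposition in Route (b) is a fine way to re-derive transience, but it does not substitute for this quotient step — repairing either of your routes would amount to re-introducing exactly this cash-out construction on the (M)ECs.
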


\iftoggle{extended}{\begin{proof}
    Existence of MD optimal schedulers is a well-known result, see, e.g.~\cite[Thm.~7.1.9]{puterman1994markov}.
    To compute the expected rewards in polynomial time, we rely on \emph{linear programming} (LP)~\cite{karmarkar1984new}.
    To express the $\max$-expected reward as the optimal solution of an LP we follow \cite[Ch.~4]{quatmannVerificationMultiobjective2023} and first cast the expected total reward objective to an \emph{expected reachability reward} objective~\cite[Def.~2.27]{quatmannVerificationMultiobjective2023} in a certain modified MDP $\mdp_c'$, the so-called \emph{MEC-quotient} of $\mdp_c$.
    Intuitively, $\mdp_c'$ arises from $\mdp_c$ by collapsing its MECs $E$ into individual states $s_E$, and adding a fresh absorbing state $s_\bot$ with an ingoing transition from every $s_E$ to model the option of staying within $E$ forever.
    Since every scheduler eventually remains inside some MEC, the state $s_{\bot}$ is \emph{attracting} in $\mdp_c'$, i.e., $\forall \sched \in \Scheds[\mdp_c'].~ \Pr^\sched(\Finally \{s_\bot\}) = 1$.
    Next, observe that $\rew_c$ (\Cref{def:rewForComb}) assigns zero reward to all states inside the ECs of $\mdp_c$ because reward is only collected upon reaching a target set \emph{for the first time}.
    We can thus naturally translate $\rew_c$ to a reward function $\rew_c'$ in the quotient $\mdp_c'$.
    
    The optimal expected reward in $\mdp_c$ is then equal to the optimal expected \emph{reachability reward} w.r.t.\ $\{\state_\bot\}$ in $\mdp_c'$, i.e., the expected reward collected until visiting $\state_\bot$ for the first time (see~\cite[Sec.~4.1.5]{quatmannVerificationMultiobjective2023}):
    \[
        \opt_{\sched \in \Scheds[\indexc]} \Expected^{\indexc, \sched}_{s_c}(\indexc[\rew])
        ~=~
        \opt_{\sched \in \Scheds[\mdp_c']} \Expected^{\indexc', \sched}_{s_c}(\indexc[\rew]' \Finally \{\state_\bot\})
    \]
    Since $\{\state_\bot\}$ is attracting, the \emph{Bellman-equations} associated with the above reachability reward objective in $\mdp_c'$ have a \emph{unique solution}~\cite[Thm.~4.8]{quatmannVerificationMultiobjective2023} and can thus be readily expressed as an LP of size linear in the size of $\mdp_c'$~\cite[Fig.~4.10]{quatmannVerificationMultiobjective2023}.
    
    The above approach applies to $\min$-expected rewards as well by noticing that in any MDP $\mdp$ with a state $\state$ and a reward function $\rew$ such that the optimal expected rewards are well-defined, $\min_{\sigma \in \Scheds} \Expected_s^\sched(\rew) = -\max_{\sigma \in \Scheds} \Expected_s^\sched(-\rew)$.
    \hfill\qed
\end{proof}}{}

\begin{example}
    \label{ex:expectedRewards}
    Reconsider the MDP in \Cref{fig:runningEx} (right).
    The maximal expected reward from initial state $(s_1,\emptyset)$ is $\tfrac 1 4$ and is attained by the MD strategy that always chooses $\alpha$ in $(s_1,\emptyset)$ and thus eventually reaches either $(t_2,\emptyset)$ or $(t_1,\emptyset)$ with probability $\tfrac 1 2$ each.
    In the latter case, the strategy then selects $\beta$ in $(s_1,\{T_1\})$ to reach $(s_2,\{T_1\})$ and remain there forever, not collecting any further reward.
    Overall, this strategy collects a total expected reward of $\tfrac 1 2 \cdot 1 + \tfrac 1 2 \cdot (-\tfrac 1 2) = \tfrac 1 4$.
    The minimal expected reward is easily seen to be $-\tfrac 1 2$.
\end{example}

\begin{remark}[Approximate vs exact]  
    \label{remark:approx_exact}
    In practice, \emph{exact} computation of the optimal expected reward via LP as suggested by \Cref{thm:exRewPtimeMD} (and its proof) may be significantly slower than approximation~\cite{DBLP:conf/tacas/HartmannsJQW23}.
    Fortunately, it is possible to amend our algorithm to \emph{approximate} expected reward computation.
    To retain soundness, it is crucial to employ a procedure such as \emph{Sound Value Iteration}~\cite{DBLP:conf/cav/QuatmannK18} that yields guaranteed \emph{under-} and \emph{over-approximations} of the true result.
    Appropriate handling of such approximations is detailed in \iftoggle{extended}{\Cref{alg:linear_general}}{\cite{extendedVersion}}.
    Note that approximation inherently leads to incompleteness, i.e., the algorithm may return \enquote{inconclusive} in some cases.
    Further, for each $c \in \comb$ with $|\relInd(c)|>1$ we can view $\opt_{\sched \in \Scheds[\mdp]} \sum_{j \in \relInd(c)} q_j \cdot \Pr_{s}^{\mdp,\sched}(\Finally T_j)$ as a \emph{weighted-sum optimization problem} and employ multi-objective model-checking techniques \cite{quatmannVerificationMultiobjective2023}. 
    (For $|\relInd(c)|=1$ this is a single-objective model-checking query.)
\end{remark}

\subsubsection{Step 4: Aggregate results and check relational property.}
\label{step4}
We now combine the optimal expected rewards $\opt_{\sched \in \Scheds[\indexc]} \Expected^{\indexc, \sched}_{s_c}(\indexc[\rew])$ for each state-scheduler combination $c \in \comb$ obtained via the previous two steps. 
We exemplify this for the comparison relation $\approx_\epsilon$, the other relations are handled similarly (see \iftoggle{extended}{\Cref{alg:linear_general}}{\cite{extendedVersion}}).

\begin{lemma}
    \label{thm:cruxApproxEqual}
    For each $c \in \comb$ and $\opt \in \{\max,\min\}$ let
    \[
        v_c^{\opt}
        ~=~
        \opt_{\sched \in \Scheds[\indexc]} \Expected^{\indexc, \sched}_{s_c}(\indexc[\rew])
        ~.
    \]
    Furthermore, let $v^{\opt} = \sum_{c \in \comb} v_c^{\opt}$.
    Then, assuming that the comparison operator $\comp$ in the property \eqref{eq:genRelReachProp} is $\approx_\epsilon$ for some $\epsilon \geq 0$:
    \[
        q_{m+1} \in [v^{\min} - \epsilon, v^{\max} + \epsilon] ~\iff~ \text{Property \eqref{eq:genRelReachProp} holds} ~.
    \]
\end{lemma}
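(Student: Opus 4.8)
The plan is to prove the biconditional by establishing that the set of achievable values
\[
V = \left\{ \sum_{c \in \comb} \Expected^{\indexc, \sched_c}_{s_c}(\indexc[\rew]) ~\Big|~ \sched_c \in \Scheds[\indexc] \text{ for each } c \in \comb \right\}
\]
is exactly the closed interval $[v^{\min}, v^{\max}]$. Once this is shown, the equivalence follows almost immediately: by \Cref{thm:whyComb} and \Cref{thm:weightedReachProbsAsExpectedRew}, property \eqref{eq:genRelReachProp} with $\comp\, = \,\approx_\epsilon$ holds \Iff there is a choice of schedulers $(\sched_c)_{c \in \comb}$ whose combined expected reward $r \in V$ satisfies $|r - q_{m+1}| \leq \epsilon$, i.e.\ $r \in [q_{m+1} - \epsilon, q_{m+1} + \epsilon]$. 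This latter condition is satisfiable precisely when the two intervals $[v^{\min}, v^{\max}]$ and $[q_{m+1} - \epsilon, q_{m+1} + \epsilon]$ intersect, which (since both are closed and $\epsilon \geq 0$) is equivalent to $q_{m+1} - \epsilon \leq v^{\max}$ and $v^{\min} \leq q_{m+1} + \epsilon$, i.e.\ to $q_{m+1} \in [v^{\min} - \epsilon, v^{\max} + \epsilon]$.

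The crux is therefore the claim $V = [v^{\min}, v^{\max}]$. First I would note the two endpoints lie in $V$: by definition $v^{\opt} = \sum_c v_c^{\opt}$, and by \Cref{thm:exRewPtimeMD} each $v_c^{\opt}$ is attained by an (MD) scheduler in $\indexc$, so choosing these optimal schedulers simultaneously realizes $v^{\min}$ and $v^{\max}$ respectively (this works because the combinations $c$ are processed \emph{independently} — each $\sched_c$ is quantified separately and acts on its own goal unfolding $\indexc$). That the sum of per-combination optima equals the optimum of the sum is exactly the content of $v^{\opt} = \sum_c v_c^{\opt}$ together with independence. For the reverse inclusion $V \subseteq [v^{\min}, v^{\max}]$, every achievable sum is bounded componentwise: $v_c^{\min} \leq \Expected^{\indexc,\sched_c}_{s_c}(\indexc[\rew]) \leq v_c^{\max}$ for each $c$, and summing gives $v^{\min} \leq r \leq v^{\max}$.

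The main obstacle — and the step deserving genuine care — is the \emph{intermediate value} direction: that every value strictly between $v^{\min}$ and $v^{\max}$ is also in $V$, i.e.\ $V$ is an \emph{interval} and not merely contained in one. The cleanest argument is to fix, for each combination $c$, the two optimizing schedulers $\sched_c^{\max}$ and $\sched_c^{\min}$ and then interpolate between them using a \emph{randomized} scheduler: for a parameter $\lambda \in [0,1]$, let $\sched_c^\lambda$ be the (memoryful) scheduler that, at the very start, flips a $\lambda$-biased coin and thereafter behaves like $\sched_c^{\max}$ with probability $\lambda$ and like $\sched_c^{\min}$ with probability $1 - \lambda$. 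By linearity of expectation over the initial coin flip, $\Expected^{\indexc, \sched_c^\lambda}_{s_c}(\indexc[\rew]) = \lambda \, v_c^{\max} + (1-\lambda) v_c^{\min}$, which sweeps continuously over $[v_c^{\min}, v_c^{\max}]$ as $\lambda$ ranges over $[0,1]$. Choosing a single common $\lambda$ across all $c$ then makes the total sweep continuously over the whole segment from $v^{\min}$ to $v^{\max}$, establishing that every intermediate value is attained and hence $V = [v^{\min}, v^{\max}]$. This is precisely where generality of the scheduler class (allowing randomization) is used, and it is worth flagging that an MD-only argument would fail here — connecting back to the paper's emphasis that over general schedulers the problem stays tractable while MD variants become \NP-hard.
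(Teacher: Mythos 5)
Your proof is correct and takes essentially the same approach as the paper's: both reduce the property (via \cref{thm:whyComb,thm:weightedReachProbsAsExpectedRew}) to whether the set of achievable aggregated values meets $[q_{m+1}-\epsilon, q_{m+1}+\epsilon]$, and both obtain the intermediate values by a convex combination of the extremal schedulers with a common $\lambda$ --- your initial-coin-flip scheduler is precisely the $\oplus_\lambda$ construction the paper invokes from \cite[p.~71]{quatmannVerificationMultiobjective2023}. The only cosmetic difference is that you package the argument as the identity $V=[v^{\min},v^{\max}]$, whereas the paper argues by a case distinction on explicit witnesses for the two associated inequality properties; the substance is identical.
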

\Cref{thm:cruxApproxEqual} relies on the fact that any value in the interval of achievable probabilities
$[v_c^{\min}, v_c^{\max}]$ can be achieved by constructing the \emph{convex combination} (e.g.,~\cite[p.\ 71]{quatmannVerificationMultiobjective2023}) of the minimizing and the maximizing scheduler.

\iftoggle{extended}{
    \begin{proof}[of \cref{thm:cruxApproxEqual}]
    We show \enquote{$\Rightarrow$}.
    It follows from \Cref{thm:whyComb,thm:weightedReachProbsAsExpectedRew} that the properties
    \begin{align*}
        & \exists \sched_1, \ldots, \sched_n \in \Scheds .~ \sum_{i=1}^{m} q_i \cdot \Pr^{\sched_{k_i}}_{\state_{i}}(\Finally T_i)
        ~\boldsymbol{\geq}~
        q_{m+1} \boldsymbol{-} \epsilon \qquad\text{and}\\
        & \exists \sched_1, \ldots, \sched_n \in \Scheds .~ \sum_{i=1}^{m} q_i \cdot \Pr^{\sched_{k_i}}_{\state_{i}}(\Finally T_i)
        ~\boldsymbol{\leq}~
        q_{m+1} \boldsymbol{+} \epsilon
    \end{align*}
    both hold in $\mdp$.    
    Let $\sched_1^{\geq},\ldots,\sched_n^{\geq}$ and $\sched_1^{\leq},\ldots,\sched_n^{\leq}$ be witnessing schedulers for the two properties.
    Further, let $\ub{v} \geq q_{m+1} - \epsilon$ and $\lb{v} \leq q_{m+1} + \epsilon$ be the weighted sums of probabilities induced by these schedulers.
    We distinguish two cases:
    
    First, if we have $\ub{v} \leq q_{m+1} + \epsilon$ \emph{or} $\lb{v} \geq q_{m+1} - \epsilon$, then $\sched_1^{\geq},\ldots,\sched_n^{\geq}$ or $\sched_1^{\leq},\ldots,\sched_n^{\leq}$ are already witnessing schedulers for the property and we are done.
    
    Otherwise, we have $\ub{v} > q_{m+1} + \epsilon$ \emph{and} $\lb{v} < q_{m+1} - \epsilon$.
    This means that $q_{m+1} \in (\lb v,\ub v)$ and thus there exists $\lambda \in (0,1)$ such that $q_{m+1} = \lambda \lb v + (1-\lambda) \ub v$.
    The crux is now to consider the schedulers $\sched_i^{\lambda} = [\sched^{\leq}_i \oplus_\lambda \sched^{\geq}_i]$, $i = 1,\ldots,n$ (for details, see \cite[p.~71]{quatmannVerificationMultiobjective2023}).
    These schedulers satisfy
    \begin{align*}
        \sum_{i=1}^{m} q_i \cdot \Pr^{\sched_{k_i}^{\lambda}}_{\state_{i}}(\Finally T_i) 
        ~=~&\lambda \cdot \sum_{i=1}^{m} q_i \cdot \Pr^{\sched_{k_i}^{\leq}}_{\state_{i}}(\Finally T_i) ~+~ (1-\lambda) \cdot  \sum_{i=1}^{m} q_i \cdot \Pr^{\sched_{k_i}^{\geq}}_{\state_{i}}(\Finally T_i) \\
        ~=~&\lambda \lb v + (1-\lambda) \ub v = q_{m+1}
        ~,
    \end{align*}
    hence they witness satisfaction of property \eqref{eq:genRelReachProp} with \emph{exact} equality ($\approx_0$).
    
    We show \enquote{$\Leftarrow$} by contraposition.
    Assume that $q_{m+1} < v^{\min} - \epsilon$ (the argument for the other case $q_{m+1} > v^{\min} + \epsilon$ is analogous).
    It follows that $q_{m+1} + \epsilon < \min_{\sched_1, \ldots, \sched_n} \sum_{i=1}^{m} q_i \cdot \Pr^{\sched_{k_i}}_{\state_{i}}(\Finally T_i)$, i.e., all schedulers give a value at least $\epsilon$ apart from $q_{m+1}$ and hence property \eqref{eq:genRelReachProp} does not hold.
    \hfill\qed
\end{proof}
}{
\begin{proof}[of \Cref{thm:cruxApproxEqual}; sketch]
    The interesting direction is ``$\Rightarrow$'':
    By \Cref{thm:whyComb,thm:weightedReachProbsAsExpectedRew} there exist schedulers $\sched_1^{\geq},\ldots,\sched_n^{\geq}$ and $\sched_1^{\leq},\ldots,\sched_n^{\leq}$ for $\mdp$ such that   
    \begin{align*}
        \ub{v} := \sum_{i=1}^{m} q_i \cdot \Pr^{\sched^{\geq}_{k_i}}_{\state_{i}}(\Finally T_i)
        ~\geq~
        q_{m+1} - \epsilon~, \quad
        \lb{v} := \sum_{i=1}^{m} q_i \cdot \Pr^{\sched^{\leq}_{k_i}}_{\state_{i}}(\Finally T_i)
        ~\leq~
        q_{m+1} + \epsilon ~ .
    \end{align*}
    If $\ub{v} \leq q_{m+1} + \epsilon$ \emph{or} $\lb{v} \geq q_{m+1} - \epsilon$, then $\sched_1^{\geq},\ldots,\sched_n^{\geq}$ or $\sched_1^{\leq},\ldots,\sched_n^{\leq}$ are already witnessing schedulers and there is nothing else to show.
    Otherwise, $\ub{v} > q_{m+1} + \epsilon$ \emph{and} $\lb{v} < q_{m+1} - \epsilon$.
    Thus there exists $\lambda \in (0,1)$ such that $q_{m+1} = \lambda \lb v + (1-\lambda) \ub v$.
    The schedulers $\sched_i^{\lambda} = [\sched^{\leq}_i \oplus_\lambda \sched^{\geq}_i]$, $i = 1,\ldots,n$ \cite[p.~71]{quatmannVerificationMultiobjective2023} then witness satisfaction of property \eqref{eq:genRelReachProp} with \emph{exact} equality ($\approx_0$).
    See \cite{extendedVersion} for more details.
    \hfill\qed
\end{proof}
}

\begin{example}
    We wrap up our running example by proving that the property from \Cref{ex:runningExIntro} indeed holds in the MDP in \Cref{fig:runningEx} (left).
    For combination $c_1 = (s_1,\textcolor{gray}{\sched})$  we have already established in \Cref{ex:expectedRewards} that $v_{c_1}^{\max} = \tfrac 1 4$ and $v_{c_1}^{\min} = -\tfrac 1 2$.
    For the other combination $c_2$ one easily finds $v_{c_2}^{\max} = 0$ and $v_{c_2}^{\min} = -\tfrac 1 2$. Summing up these values yields $v^{\min} = -1$ and $v^{\max} = \tfrac 1 4$.
    Since $0 \in [v^{\min}, v^{\max}]$, the property is satisfiable, even with exact equality $\approx_0$.
\end{example}

We remark that for $\comp~ \in \{\geq, >\}$, it actually suffices to compute \emph{only} $v^{\max}$ rather than both $v^{\max}$ and $v^{\min}$ as in \Cref{thm:cruxApproxEqual}.

\subsubsection{Overall algorithm.}

\iftoggle{extended}{
    \begin{algorithm}[t]
    \caption{Approximate expected reward (black box)~\cite{DBLP:conf/cav/QuatmannK18},\cite[Alg.~4.6]{quatmannVerificationMultiobjective2023}}
    \Input{
        MDP $\mdp=\mdptup$,
        reward function $\rew \colon \states \to \Q$,
        initial state $s \in \states$,
        \emph{attracting} set $\attractor \subseteq \states$ (i.e., $\forall \sigma \in \Scheds[\mdp] .~ \Pr_s(\Finally \attractor) = 1$),
        absolute tolerance $\tau \geq 0$ (use $\tau = 0$ for exact computation) 
    }
    \Output{
        $\lb{v}, \ub{v} \in \Q$ such that $\lb{v} \leq \max_{\sched \in \Scheds[\mdp]}\Expected_s^{\sched}(\rew^{\Finally \attractor}) \leq \ub{v}$ and $\ub{v} - \lb{v} \leq \tau$
    }
    \label{alg:black-box}
\end{algorithm}

\begin{algorithm}
    \caption{Efficient Solution of \RelReach} 
    \label{alg:linear_general}
    \Input{
        MDP $\mdp = \mdptup$, \RelReach property \medskip \newline 
        {\centering
        $\exists \sched_1, \ldots, \sched_n \in \Scheds .~ \sum_{i=1}^{m} q_i \cdot \Pr^{\sched_{k_i}}_{\state_{i}}(\Finally T_i) \comp q_{m+1}$}
        ~,
        \medskip \newline 
        and tolerance $\tau \geq 0$ ($\tau = 0$ yields exact computation)
    }
    \Output{Whether the property is true in $\mdp$, or \enquote{inconclusive} (the latter can only happen if $\tau > 0$)}
    \tcp{Step 1: Loop over all state-scheduler combinations:}
    \For{$c = (s,\sched) \in \comb = \{ (s_i, \sched_{k_i}) \mid i=1, \ldots, m\}$}{
        \tcp{Step 2: Unfold and define reward structures:}
        $\indexc \gets$ goal unfolding of $\mdp$ w.r.t.\ target sets for $c$ \tcp*{See \Cref{def:goalUnfolding}}
        $s_{c} \gets $ the state $(s,\emptyset)$ in $\indexc$ \tcp*{Just for readability}
        $\indexc[\rew] \gets$ reward structure on $\indexc$ for $c$ \tcp*{See \Cref{def:rewForComb}}
        \tcp{Step 3: Compute (or approximate) expected rewards:}
        $\mdp_c', \state_\bot \gets$ MEC-quotient of $\mdp_c$ and its absorbing state \;
        $\indexc[\lb{v}]^{\max}, \indexc[\ub{v}]^{\max} \gets \mathsf{AprxExRew}(\indexc[\mdp'],\indexc[\rew],s_{c},\{\state_\bot\}, \tau)$  \tcp*{using the black box} 
        \If{$\comp~\in \{ \approx_\epsilon, \not\approx_\epsilon \}$}{
            \tcp{Compute $\min$-expected reward by flipping signs}
            $\indexc[\ub{v}]^{\min}, \indexc[\lb{v}]^{\min} \gets (-1) \cdot \mathsf{AprxExRew}(\indexc[\mdp'],-\indexc[\rew],s_{c},\{\state_\bot\}, \tau)$ \;
        }
    }
    \tcp{Step 4: Aggregate results from individual state-scheduler combinations and check appropriate conditions (depending on $\comp$):}
    $\lb{v}^{\max} \gets \sum_{c \in \comb} \indexc[\lb{v}]^{\max} \,;~~ \ub{v}^{\max}  \gets \sum_{c \in \comb} \indexc[\ub{v}]^{\max}$ \;
    \If{$\comp~\in \{ >, \geq \}$}{
        \lIf{\hspace{7.2mm}$\lb{v}^{\max} \comp q_{m+1}$}{
            \Return{true} \label{line:linear_general_return_first}
        }
        \lElseIf{$\ub{v}^{\max} \not\comp q_{m+1}$}{
            \Return{false}
        }
        \lElse{\Return{\enquote{inconclusive}}}
    }
    \ElseIf{$\comp~\in \{ \approx_\epsilon, \not\approx_\epsilon \}$}{
        $\lb{v}^{\min} \gets \sum_{c \in \comb} \indexc[\lb{v}]^{\min} \,;~~ \ub{v}^{\min}  \gets \sum_{c \in \comb} \indexc[\ub{v}]^{\min}$ \;
        \If{$\comp$ is $\approx_\epsilon$}{
            \lIf{\hspace{7.2mm}$q_{m+1} \in [\ub{v}^{\min} - \epsilon, \lb{v}^{\max} + \epsilon]$}{
                \Return{true}
            }
            \lElseIf{$q_{m+1} \notin [\lb{v}^{\min} - \epsilon, \ub{v}^{\max} + \epsilon]$}{
                \Return{false}
            }
            \lElse{\Return{\enquote{inconclusive}}}
        }
        \ElseIf{$\comp$ is $\not\approx_\epsilon$}{
            \lIf{\hspace{7.2mm}$q_{m+1} \notin [\lb{v}^{\max} - \epsilon, \ub{v}^{\min} + \epsilon]$}{
                \Return{true}
            }
            \lElseIf{$q_{m+1} \in [\ub{v}^{\max} - \epsilon, \lb{v}^{\min} + \epsilon]$}{
                \Return{false}
                 \label{line:linear_general_return_last}
            }
            \lElse{\Return{\enquote{inconclusive}}}
        }
    }
\end{algorithm}
}{
    \begin{algorithm}[t]
        \caption{Solving \RelReach with $\approx_\epsilon$} 
        \label{alg:linear_general_simplified}
        \Input{
            MDP $\mdp = \mdptup$ and a \RelReach property\\\medskip
            \centering
            \enquote{$\exists \sched_1, \ldots, \sched_n \in \Scheds .~ \sum_{i=1}^{m} q_i \cdot \Pr^{\sched_{k_i}}_{\state_{i}}(\Finally T_i) \approx_\epsilon q_{m+1}$}
            \medskip
        }
        \Output{Whether the property is true in $\mdp$}
        \tcp{Step 1: Loop over all state-scheduler combinations:}
        \For{$c = (s,\sched) \in \comb = \{ (s_i, \sched_{k_i}) \mid i=1, \ldots, m\}$}{
            \tcp{Step 2: Unfold and define reward structures:}
            $\indexc \gets$ goal unfolding of $\mdp$ w.r.t.\ target sets for $c$ \tcp*{See \Cref{def:goalUnfolding}}
            $\indexc[\rew] \gets$ reward structure on $\indexc$ for $c$ \tcp*{See \Cref{def:rewForComb}}
            \tcp{Step 3: Compute (or approximate) expected rewards:}
            $\indexc[{v}]^{\max} \gets  {\displaystyle\max_{\sched \in \Scheds[\indexc]}} \Expected^{\indexc, \sched}_{s_c}(\indexc[\rew]) ~;~
            \indexc[{v}]^{\min} \gets  {\displaystyle\min_{\sched \in \Scheds[\indexc]}} \Expected^{\indexc, \sched}_{s_c}(\indexc[\rew])$
            \label{line:linear_general_simplified_step3}
        }
        \tcp{Step 4: Aggregate results from state-scheduler combinations and check:}
        $v^{\max} \gets \sum_{c \in \comb} \indexc[{v}]^{\max} ~;~ {v}^{\min}  \gets \sum_{c \in \comb} \indexc[{v}]^{\min}$ \;
        \label{line:linear_general_simplified_step4}
        \Return{$q_{m+1} \in [v^{\min} - \epsilon, v^{\max} + \epsilon]$}
        \label{line:linear_general_simplified_return}
    \end{algorithm}
}

\iftoggle{extended}{
    The algorithm resulting from Steps 1--4 is stated explicitly as \cref{alg:linear_general}, using a black box for the approximate computation of expected rewards~\cite{DBLP:conf/cav/QuatmannK18},\cite[Alg.~4.6]{quatmannVerificationMultiobjective2023} (\cref{alg:black-box}).
}{
    The algorithm resulting from Steps 1--4 is stated explicitly as \cref{alg:linear_general_simplified} for $\approx_\epsilon$ and in full generality in \cite{extendedVersion}.
}

\begin{theorem}[Correctness and time complexity]
    \label{thm:correctness}
    \iftoggle{extended}{%
        \cref{alg:linear_general}%
    }{%
        \cref{alg:linear_general_simplified}%
    }
    adheres to its input-output specification.
    It can be implemented with worst-case running time of $\mathcal{O}(m \cdot \textit{poly}(2^m \cdot |\mdp|))$, where $|\mdp|$ is the size of (an explicit encoding of) $\mdp$ and $m$ is the number of probability operators in the property.
\end{theorem}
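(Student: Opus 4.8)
The plan is to prove correctness and the running-time bound separately, in both cases reusing the four lemmas established alongside Steps 1--4. For correctness I would argue that the algorithm faithfully implements a chain of equivalences. Starting from property \eqref{eq:genRelReachProp}, \Cref{thm:whyComb} replaces the $n$ quantified schedulers by one fresh scheduler variable per state-scheduler combination $c \in \comb$, so that the quantified schedulers range \emph{independently} over the combinations. \Cref{thm:weightedReachProbsAsExpectedRew} then identifies the contribution of each $c$ with the expected-reward objective $\Expected^{\indexc,\sched}_{s_c}(\indexc[\rew])$ on the goal unfolding $\indexc$, and \Cref{thm:exRewPtimeMD} guarantees that the per-combination optima $v_c^{\min}, v_c^{\max}$ are attained and computable. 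This precisely matches what Steps 1--3 of \Cref{alg:linear_general} compute.

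The central step is that the optimum of the \emph{sum} decomposes into the sum of per-combination optima. I would make this precise by observing that, after \Cref{thm:whyComb}, the scheduler variables $\sched_c$ for distinct $c$ are independent, so $\max$ (respectively $\min$) distributes over the outer sum and $v^{\max} = \sum_{c \in \comb} v_c^{\max}$, $v^{\min} = \sum_{c \in \comb} v_c^{\min}$ are exactly the extremal achievable values of the left-hand side of the property. For $\comp$ equal to $\approx_\epsilon$, correctness of the returned answer is then \Cref{thm:cruxApproxEqual} verbatim; for $\comp \in \{>, \geq\}$ it suffices to compare $v^{\max}$ against $q_{m+1}$ (only the achievable upper value is relevant, as noted after \Cref{thm:cruxApproxEqual}); and for $\not\approx_\epsilon$ the condition is the set-complement of the $\approx_\epsilon$ interval, using that the achievable values form the \emph{full} interval $[v^{\min}, v^{\max}]$ via the convex-combination argument underlying \Cref{thm:cruxApproxEqual}. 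To cover $\tau > 0$, I would replace each exact $v_c$ by the under-/over-approximations $\lb{v}_c, \ub{v}_c$ returned by the black box (\Cref{alg:black-box}) and verify that every \textbf{true}/\textbf{false} branch fires only when the approximation interval already decides the comparison, so the answer is sound; \enquote{inconclusive} is returned exactly when the interval straddles the threshold, which cannot happen for $\tau = 0$.

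For the running time I would bound one loop iteration and multiply by $|\comb| \leq m$. The goal unfolding $\indexc$ has state space $\states \times 2^{\mathcal{T}_c}$ with $|\mathcal{T}_c| \leq m$, hence $|\indexc| = \mathcal{O}(2^m \cdot |\mdp|)$; computing the MEC-quotient and the expected reward (by LP, or by the black box) is polynomial in $|\indexc|$ by \Cref{thm:exRewPtimeMD}. Step 4 performs $\mathcal{O}(m)$ rational additions and a constant number of comparisons. Summing over the loop yields the claimed bound $\mathcal{O}(m \cdot \textit{poly}(2^m \cdot |\mdp|))$.

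The main obstacle I anticipate is the decomposition argument, i.e., showing that the $\max$/$\min$ of the summed objective genuinely factorizes into independent per-combination optimizations; this hinges on the scheduler variables being truly independent after \Cref{thm:whyComb}, and must be combined with a careful case analysis of soundness across \emph{all} comparison branches under approximation, in particular the bookkeeping of which of $\lb{v}, \ub{v}$ and $v^{\min}, v^{\max}$ to use in each branch of \Cref{alg:linear_general}.
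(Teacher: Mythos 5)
Your proposal is correct and follows essentially the same route as the paper: correctness is delegated to \Cref{thm:cruxApproxEqual} (which already encapsulates the per-combination decomposition via \Cref{thm:whyComb,thm:weightedReachProbsAsExpectedRew} and the convex-combination achievability of every value in $[v^{\min}, v^{\max}]$), and the runtime bound is obtained exactly as you do, from $|\mdp_c| \leq 2^m \cdot |\mdp|$, polynomial expected-reward computation (\Cref{thm:exRewPtimeMD}), and $|\comb| \leq m$ loop iterations. Your only loose phrase is calling the $\not\approx_\epsilon$ condition the \enquote{set-complement of the $\approx_\epsilon$ interval} --- the correct interval is $[v^{\max}-\epsilon,\, v^{\min}+\epsilon]$, not the complement of $[v^{\min}-\epsilon,\, v^{\max}+\epsilon]$ --- but the reasoning you cite (full achievable interval) does yield the right condition.
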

\begin{proof}
    \Cref{thm:cruxApproxEqual} establishes correctness.
    Regarding time complexity, notice that 
    for each $c \in \comb$, the size of the goal unfolding $\indexc$ is bounded by $2^m \cdot |\mdp|$ (Step 2).
    Exact computation of expected rewards is possible in time polynomial in the size of the MDP $\indexc$ (Step 3).
    Steps 2 and 3 have to be executed for at most $|\comb|\leq m$ state-scheduler combinations.
    \hfill\qed
\end{proof}

\section{Complexity of Relational Reachability}
\label{sec:complexity}

In this section, we analyze the computational complexity of the \RelReach problem over general and over memoryless deterministic schedulers, respectively.
We also identify restricted variants of \RelReach that are decidable in polynomial time.

\subsection{General Schedulers}

The runtime analysis from \cref{thm:correctness} yields an \EXPTIME upper bound for the complexity of the \RelReach problem.

\begin{theorem}
    \label{th:general_EXPTIME}
    Problem \RelReach is \PSPACE-hard and decidable in \EXPTIME.
\end{theorem}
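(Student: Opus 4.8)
The plan is to prove the two bounds separately. The \EXPTIME membership is immediate from the algorithmic analysis already established: by \Cref{thm:correctness}, \RelReach is decided in time $\mathcal{O}(m \cdot \mathit{poly}(2^m \cdot |\mdp|))$, where $m$ is the number of probability operators in the property. Since $m$ is bounded by the length of the input, $2^m$ is at most single-exponential in the input size $N$, so the whole running time is $2^{\mathcal{O}(N)}$; hence \RelReach lies in \EXPTIME. The remainder of the plan concerns the \PSPACE lower bound, which is the substantive part.

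For \PSPACE-hardness I would reduce from the canonical \PSPACE-complete problem TQBF (truth of a fully quantified Boolean formula). Given $\Psi = Q_1 x_1 \cdots Q_n x_n.\ \phi$ with $\phi = C_1 \wedge \cdots \wedge C_M$ in CNF, I construct in polynomial time an MDP $\mdp$ together with one target set $T_j$ per clause $C_j$. The MDP assigns the variables in prefix order, using one gadget per $x_i$. At an existentially quantified variable the state offers two actions, read as setting $x_i := 0$ or $x_i := 1$; at a universally quantified variable the single available action flips a fair coin to set $x_i$. Crucially, the chosen values are \emph{not} stored in the state space; instead, the moment a literal is made true, the run is routed through a short chain that visits, for every clause containing that literal, one designated state belonging to the corresponding $T_j$. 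After all $n$ variables are processed the run enters an absorbing sink. The state space is thus of size $\mathcal{O}(n \cdot M)$, so the reduction is polynomial.

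The associated \RelReach query is the single-scheduler property $\exists \sched.\ \sum_{j=1}^{M} \Pr^{\sched}_{\state}(\Finally T_j) \geq M$, with $\state$ the start of the gadget. Correctness follows from: each summand is at most $1$, so the sum attains its maximum $M$ \Iff every $\Pr^{\sched}_{\state}(\Finally T_j) = 1$, i.e.\ \Iff under $\sched$ clause $C_j$ is satisfied for \emph{every} coin outcome. Since the scheduler is history-dependent, at an existential variable it may depend on the already-revealed (coin-set) universal variables, so a deterministic scheduler is exactly a strategy of the $\exists$-player in the evaluation game of $\Psi$; and because each fair coin outcome has positive probability, ``probability $1$'' quantifies universally over all universal assignments. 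Hence such a scheduler exists \Iff $\Psi$ is true. Note this already holds for one scheduler and one initial state: the hardness stems from the \emph{unbounded number of distinct targets}, mirroring the $2^m$ blow-up in the algorithm and consistent with the \PTIME entries for two targets in \Cref{tab:complexity_overview_m=2}.

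The main obstacle is twofold. First, keeping the reduction polynomial forces the assignment to be forgotten, so satisfaction must be recorded incrementally by marking a clause-target upon the \emph{first} setting of one of its literals; I must verify that this first-visit marking faithfully encodes $\phi$ irrespective of the order in which literals are fixed and of repeated visits. Second, \RelReach ranges over general \emph{randomized} schedulers, whereas the evaluation game is won by pure strategies, so I must rule out that randomization helps: if a randomized $\sched$ achieves $\sum_j \Pr^{\sched}_{\state}(\Finally T_j) = M$, then conditioning on any fixed realization of the scheduler's internal randomness (an event of positive probability) still yields all clauses satisfied almost surely over the coin flips, exhibiting a deterministic memoryful winning strategy and thus showing $\Psi$ true. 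Establishing this determinization step cleanly, alongside the polynomiality and faithfulness of the gadget, is where the real work lies.
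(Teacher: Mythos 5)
Your proof is correct, and the \EXPTIME half coincides with the paper's: both obtain membership directly from \Cref{thm:correctness}, noting that $m$ is bounded by the input length so the $2^m$ factor is only singly exponential. For \PSPACE-hardness, however, you take a genuinely different route. The paper disposes of it in one line: it cites the known \PSPACE-completeness of \emph{simultaneous almost-sure reachability} of $m$ target sets~\cite[Theorem 2]{DBLP:conf/cav/RandourRS15} and merely observes that this problem is expressible as the \RelReach property $\exists \sched .\ \Pr_{s}^{\sched}(\Finally T_1) + \ldots + \Pr_{s}^{\sched}(\Finally T_m) \geq m$ --- exactly the property shape your reduction also arrives at. You instead re-prove that hardness from scratch via a direct TQBF reduction (existential variables as scheduler choices, universal variables as fair coins, clause targets marked at the moment a literal is set), effectively inlining a proof of the cited result. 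What the paper's approach buys is brevity and modularity; what yours buys is self-containedness and an explicit localization of the hardness in the \emph{unbounded number of distinct target sets}, consistent with the fixed-parameter tractability in \Cref{th:general_PTIME}\ref{item:fixed-param} and the \PTIME entries for two targets. The two obstacles you flag are genuine but your handling of them is sound: first-visit marking is harmless because reachability is insensitive to repeated visits and to the order in which literals are fixed, and the determinization-by-conditioning step works in your gadget because its reachable part is acyclic with finitely many histories, so the measure induced by a randomized scheduler is a finite convex combination of measures induced by deterministic ones (a Kuhn-style argument), and an event of probability one must have probability one under every component of positive weight.
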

\begin{proof}
    For \PSPACE-hardness observe that \emph{simultaneous almost-sure reachability} of $m$ target sets $T_1,\ldots,T_m$ (which is known to be \PSPACE-complete~\cite[Theorem 2]{DBLP:conf/cav/RandourRS15}) is expressible as the \RelReach property \enquote{$\exists \sigma . \Pr_{s}^\sigma(\Finally T_1) + \ldots \Pr_{s}^\sigma(\Finally T_m) \geq m$}.

    Membership in \EXPTIME follows from \cref{thm:correctness}.
    \hfill\qed
\end{proof}

Tighter bounds are ongoing work.
Taking a closer look,  
we however observe that \RelReach can be solved in \PTIME if the number of probability operators $m$ is fixed. 
Hence, \RelReach is \emph{fixed-parameter tractable} \cite{groheDescriptiveParameterized1999} with parameter $m$. 
The following theorem generalizes this observation and further states that the exponential blow-up of the goal unfolding can be avoided if all target states are absorbing, or if each probability operator is evaluated under a different scheduler (i.e., if $n=m$).
\iftoggle{extended}{}{We refer to \cite{extendedVersion} for the proof.}

\begin{theorem}
    \label{th:general_PTIME}
    The following special cases of \RelReach are in \PTIME:
    \begin{enumerate}[label=(\alph*)]
        \item \label{item:fixed-param}
        The number of \emph{different} target sets $|\{ T_1,\ldots,T_m\}|$ is \emph{at most a constant}.

        \item \label{item:absorbing}
        The target sets $T_1,\ldots,T_m$ are all \emph{absorbing}.

        \item \label{item:independent}
        $n=m$, i.e., each probability operator in the property has \emph{its own quantifier}.
    \end{enumerate}
\end{theorem}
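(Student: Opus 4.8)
The goal is to establish three polynomial-time special cases of \RelReach. My plan is to treat each case by reducing the exponential cost of the goal unfolding (the only source of exponential blow-up in \Cref{thm:correctness}) to a polynomial one, while reusing the correctness framework of Steps~1--4 and \Cref{thm:cruxApproxEqual} verbatim. The common strategy is: show that the relevant expected-reward computations can be carried out on MDPs of polynomial size, so that aggregation in Step~4 runs in \PTIME.

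For part~\ref{item:fixed-param}, I would observe that the goal unfolding $\indexc$ for a combination $c$ has state space $\states \times 2^{\mathcal{T}_c}$, where $\mathcal{T}_c \subseteq \{T_1,\ldots,T_m\}$. If the number of \emph{distinct} target sets is bounded by a constant $k$, then $|2^{\mathcal{T}_c}| \le 2^k$ is constant, so each $\indexc$ has size $\mathcal O(2^k \cdot |\mdp|) = \mathcal O(|\mdp|)$. The expected rewards of \Cref{thm:exRewPtimeMD} are then computable in time polynomial in $|\mdp|$, and there are at most $m \le $ (number of operators) combinations, giving an overall \PTIME bound. This is essentially reading off the runtime in \Cref{thm:correctness} under the extra hypothesis that $m$ in the exponent is replaced by the constant $k$.

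For part~\ref{item:absorbing}, the key insight is that when every target set $T_i$ is absorbing, we never need to record \emph{which} targets have already been visited, because an absorbing target cannot be left and hence cannot be reached a second time after leaving. I would argue that the first-visit reward structure $\rew_T$ can be replaced by a plain state reward on the \emph{original} MDP $\mdp$ (with no unfolding): assign to each absorbing state the sum of coefficients of the target sets containing it, and $0$ elsewhere. Formally, I would show $\Pr^\sched_s(\Finally T_j) = \Expected^{\mdp,\sched}_s(\rew_{T_j})$ directly on $\mdp$ because the reward is collected exactly once (upon absorption) on almost every path reaching $T_j$. This lets us skip the unfolding entirely, so the reward computations run on $\mdp$ itself, yielding \PTIME.

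For part~\ref{item:independent}, when $n = m$ each probability operator has its own scheduler, so every combination $c \in \comb$ satisfies $|\relInd(c)| = 1$, i.e., $\mathcal{T}_c$ is a \emph{singleton}. Then $2^{\mathcal{T}_c}$ has only two elements, so $\indexc$ has size at most $2 \cdot |\mdp|$, again polynomial. Alternatively, and more cleanly, since each term is a single reachability probability $\Pr^{\sched_{k_i}}_{s_i}(\Finally T_i)$ optimized independently, I would note that $\opt_{\sched} q_i \cdot \Pr^\sched_{s_i}(\Finally T_i)$ is a standard single-objective reachability query on $\mdp$, solvable in \PTIME without any unfolding, and the aggregation in Step~4 combines the $m$ resulting values. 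The main obstacle across all three parts is part~\ref{item:absorbing}: I must verify carefully that the equivalence $\Pr^\sched_s(\Finally T_j) = \Expected^\sched_s(\rew_{T_j})$ holds on the un-unfolded MDP, i.e., that absorption guarantees each target contributes its coefficient exactly once, even when target sets overlap or a state belongs to several $T_j$; the bookkeeping that the unfolding normally handles must be shown to be unnecessary precisely because absorbing states are never revisited.
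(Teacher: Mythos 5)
Parts~\ref{item:fixed-param} and~\ref{item:independent} of your argument are correct and essentially the paper's own proof: both bound the reachable size of the goal unfolding ($2^k \cdot |\mdp|$ for a constant number $k$ of distinct targets, and $2 \cdot |\mdp|$ when $|\relInd(c)| = 1$), then invoke \Cref{thm:exRewPtimeMD} and the correctness framework of \Cref{thm:correctness}.

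Part~\ref{item:absorbing}, however, contains a genuine error. Your shortcut --- a plain state reward on the \emph{original} MDP assigning to each absorbing state the sum of the coefficients of the targets containing it --- does not satisfy $\Pr^{\sched}_{s}(\Finally T_j) = \Expected^{\mdp,\sched}_{s}(\rew_{T_j})$ under the total-reward semantics the paper uses, namely $\rew(\pi) = \sum_{i=0}^{\infty} \rew(\pi(i))$. A path that reaches an absorbing target state $t \in T_j$ stays in $t$ forever and collects the state reward at \emph{every} subsequent time step, so the total reward diverges on every such path, and the expected total reward is infinite (not the reachability probability) whenever $\Pr^{\sched}_{s}(\Finally T_j) > 0$. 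You flag the wrong obstacle at the end: the danger is not that overlapping targets get double-counted or that an absorbing state is ``reached a second time after leaving''; it is that the state is \emph{occupied} at infinitely many positions of the path, and a per-step state reward accumulates each time. This infinite accumulation is exactly what the first-visit bookkeeping of \Cref{def:rewForComb} (reward $1$ only if $s \in T$ and $T \notin \mathcal{T}$) prevents, which is why the paper's proof of~\ref{item:absorbing} does \emph{not} discard the unfolding: it keeps the unfolding and observes that, when all targets are absorbing, only linearly many of its states are reachable --- each state is paired with at most the memory value $\emptyset$ and, once absorbed, the set of targets containing it, giving at most $(m+1)\cdot|\states|$ reachable states --- which already yields \PTIME. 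Your route can be repaired, e.g.\ by placing the reward on transitions \emph{entering} $T_j$ from outside (with separate care for runs starting inside $T_j$), or by switching to an expected reachability-reward formulation that stops accumulating upon absorption, but as written the claimed identity is false and your \PTIME argument for~\ref{item:absorbing} does not go through.
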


\iftoggle{extended}{\begin{proof}
    \ref{item:fixed-param}: $m$ is the only parameter occurring exponentially in the runtime complexity of \cref{alg:linear_general} with exact reward computation.
    
    \ref{item:absorbing}: If all target sets are absorbing, then for each $c \in \comb$, the number of reachable states in $\indexc$ is $|(\states \times \emptyset) \cup \bigcup_i T_i \times \{T_i\}| = |\states| + \sum_{i=1}^{m} |T_i| \leq (m+1)\cdot|\states|$. 
    Hence, the size of the unfolding is linear in the size of the original MDP and $m$.
    
    \ref{item:independent}: If each probability operator is evaluated under a different scheduler, then $|\comb| = n= m$ and for each $c \in \comb$ we have $|\relInd(c)| = 1$. Hence, for each $c \in \comb$, we have $|\indexc[\states]| = |\states \times 2^{\relInd(c)}| = |\states| \cdot 2$ and thus the size of the unfolding with respect to $\relInd(c)$ is linear in the size of the original MDP.
    \hfill\qed
\end{proof}}{}

\subsection{Memoryless Deterministic Schedulers}

We now consider \RelReachMD, the \RelReach problem over MD schedulers. \RelReachMD is in \NP because we can non-deterministically guess schedulers and verify whether they are witnesses in polynomial time by computing the (exact) reachability probabilities in the induced DTMC~\cite[Ch.~10]{bk08-book}.
Further, \RelReach is \emph{strongly} \NP-hard\footnote{A problem is strongly \NP-hard if it is \NP-hard even if all numerical quantities (here: rational transition probabilities) in a given input instance are encoded in unary.
} \cite{gareyStrongNPCompleteness1978}
over MD schedulers already for simple variants with equality.

\begin{theorem}
    \label{th:MD_NP-complete}
    \label{th:MD_NP}
    \RelReachMD is strongly \NP-complete.
    Strong \NP-hardness already holds for the following special cases:
    For a given MDP $\mdp$, 
    initial states $\state_1,\state_2 \in \states$,
    target sets $T_1, T_2 \subseteq \states$,
    decide if
    \begin{enumerate}[label=\textnormal{(\alph*)},itemsep=2pt]
        \item 
        \label{item:1sched1state}
        \label{item:1sched2state}
        $
            \exists \sched \in \SchedsMD .\ 
            \Pr^{\sched}_{\state_1}(\Finally T_1) - \Pr^{\sched}_{\state_2}(\Finally T_2) = 0 \ .
        $
        
        \item 
        \label{item:2sched2state}
        $
            \exists \sched_1, \sched_2 \in \SchedsMD .\ 
            \Pr^{\sched_1}_{\state_1}(\Finally T_1) - \Pr^{\sched_2}_{\state_2}(\Finally T_2) = 0 \ .
        $
    \end{enumerate}
    Strong \NP-hardness of \ref{item:1sched1state} and \ref{item:2sched2state} holds irrespective of whether 
    $\state_1 = \state_2$
    and whether $T_1$ and/or $T_2$ are absorbing.
    Moreover, \ref{item:1sched1state} and \ref{item:2sched2state} with relation $\approx_\epsilon$, $\epsilon > 0$, are \NP-hard.
\end{theorem}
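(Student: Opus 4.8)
I would prove strong \NP-completeness in two parts. Membership in \NP\ is easy and already sketched in the text: guess the MD schedulers $\sched_1,\ldots,\sched_n$ (each of polynomial size), evaluate every $\Pr^{\sched_{k_i}}_{s_i}(\Finally T_i)$ exactly by solving the linear reachability equations in the induced DTMC in polynomial time, and then check the comparison (for $\approx_\epsilon$ this is one extra polynomial test against $q_{m+1}\pm\epsilon$). The bulk of the work is the hardness, for which I would reduce from \textsc{Subset-Sum}: given weights $w_1,\ldots,w_k\in\N$ in binary and a target $B$, decide whether some $I\subseteq\{1,\ldots,k\}$ satisfies $\sum_{i\in I} w_i = B$. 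I pad the instance with weight-$0$ dummies so that $k$ becomes a power of two, and fix $N$ with $w_i, B < 2^N$ for all $i$.

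\textbf{The value gadget (crux) and construction for (a).} The key device is a \emph{value gadget}: for an integer $0 \le v < 2^M$ one builds, using transition probability $\tfrac12$ \emph{everywhere} and only $O(M)$ states, a gadget that reaches a designated state with probability \emph{exactly} $v/2^M$. Concretely, it reads a uniform $M$-bit string most-significant-bit first and compares it on the fly with the bits of $v$, branching to the target as soon as the sampled prefix falls below $v$. This is what makes the reduction \emph{strong}: it turns a \emph{binary} weight into a \emph{polynomial-size} gadget whose only numeric constant is $\tfrac12$. Using it, I build $\mdp$ as follows. From $s_1$, a balanced tree of fair coins fans out to $k$ slots, each reached with probability $1/k$. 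In slot $i$ the scheduler chooses between \enquote{include} (enter a value gadget reaching $T_1$ with probability $w_i/2^N$) and \enquote{exclude} (move to a non-target sink). Thus an MD scheduler selecting the index set $I$ yields $\Pr^{\sched}_{s_1}(\Finally T_1) = \tfrac{1}{k2^N}\sum_{i\in I} w_i$. From $s_2$ I place a single \emph{choice-free} value gadget reaching $T_2$ with the fixed probability $B/(k2^N) = B/2^{N+\log k}$. Hence $\Pr^{\sched}_{s_1}(\Finally T_1) = \Pr^{\sched}_{s_2}(\Finally T_2)$ iff $\sum_{i\in I} w_i = B$, i.e.\ iff the \textsc{Subset-Sum} instance is positive. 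Every probability equals $\tfrac12$, so all numeric quantities are unary-encodable while the MDP has size polynomial in the binary input; this is precisely a \emph{strong} reduction.

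\textbf{Robustness and case (b).} The $T_2$-side carries no choices, so a single MD scheduler suffices, and the $T_1$- and $T_2$-sides are disjoint subgraphs; consequently the argument is unaffected by whether $T_1, T_2$ are made absorbing (adding self-loops on target states does not change any reachability probability). To force $s_1 = s_2$, I prepend a shared initial state that branches $\tfrac12 / \tfrac12$ to the two disjoint sides, which scales both probabilities by $\tfrac12$ and preserves the equivalence. For case (b) I keep the same two gadgets but let $\sched_1$ govern the $T_1$-slots and $\sched_2$ the (choice-free) $T_2$-side; nothing else changes, and this again works whether or not $s_1 = s_2$.

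\textbf{Approximate variant and main obstacle.} For the relation $\approx_\epsilon$ with $\epsilon > 0$, note that in a negative instance the two probabilities are \emph{distinct} multiples of $\tfrac{1}{k2^N}$ under every scheduler, hence differ by at least $\tfrac{1}{k2^N}$, whereas a positive instance admits exact equality. Choosing the input parameter $\epsilon := \tfrac{1}{2k2^N}$, whose binary encoding has polynomially many digits, therefore separates the two cases and yields \NP-hardness of (a) and (b) with $\approx_\epsilon$; this variant is only ordinarily \NP-hard, not strongly, since $\epsilon$ has an exponential denominator in unary. The single genuine obstacle in the whole proof is obtaining \emph{strong} (rather than weak) hardness: a naive encoding that writes the weights $w_i$ directly into transition probabilities uses numbers of exponential magnitude and collapses to polynomial time under unary encoding. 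The value gadget is exactly what circumvents this, trading numeric magnitude for polynomial structure.
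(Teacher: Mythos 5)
Your proof is correct, but it takes a genuinely different route from the paper. The paper reduces from the \emph{Hamiltonian path} problem (which is strongly \NP-hard and essentially number-free, following an idea of Foote): the MDP walks the input graph, halving the reachability probability of a target $s_a$ at every step, so that $\Pr^{\sched}_{\state_0}(\Finally \{s_a\})$ equals $1/2^{n+1}$ where $n$ is the number of vertices visited before stopping; a separate choice-free chain fixes the reference probability $1/2^{|V|}$, and equality of the two probabilities forces a simple path through \emph{all} vertices. Since all transition probabilities are in $\{0,\tfrac12,1\}$, the transformation is pseudo-polynomial and strong hardness is inherited directly from the source problem --- the textbook Garey--Johnson mechanism. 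You instead reduce from \textsc{Subset-Sum}, which is only \emph{weakly} \NP-hard, and obtain strong hardness by a subtler (but valid) observation: strong \NP-hardness is a property of the \emph{target} problem's numeric parameters, and your binary-comparison value gadget compiles each binary weight $w_i$ into a polynomial-size chain whose only probability is $\tfrac12$, so the produced \RelReachMD instances have constant-magnitude numbers and the unary-encoded problem is \NP-hard. Your gadget (accept iff a uniform $M$-bit sample is below $v$, giving probability exactly $v/2^M$), the power-of-two padding, and the MD-scheduler/subset correspondence are all sound, and your handling of the variants mirrors the paper's: the two sides are disjoint and the $T_2$-side is choice-free, so one versus two schedulers and shared versus distinct initial states are immaterial (in the shared-state variant your separation gap halves to $\tfrac{1}{2k2^N}$, so $\epsilon$ must shrink accordingly --- a trivial adjustment). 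Both proofs degrade identically for $\approx_\epsilon$ with $\epsilon>0$: the separating gap is exponentially small ($1/(k2^N)$ for you, $1/2^{|V|+1}$ for the paper), so $\epsilon$ is not unary-bounded and only ordinary \NP-hardness follows, exactly as the theorem claims. What each approach buys: the paper's route is more self-contained and avoids any gadget-correctness argument; yours yields a reusable device for encoding arbitrary dyadic constants as exact reachability probabilities, and makes the arithmetic content of the hardness (comparing a scheduler-selected sum against a fixed value) completely explicit.
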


\iftoggle{extended}{
    We show \NP-hardness by giving a polynomial transformation from the Hamiltonian path problem, 
which is known to be strongly \NP-hard~\cite{gareyStrongNPCompleteness1978}, inspired by~\cite{footePolynomialTime}.
For exact equality, we establish strong \NP-hardness by showing that our transformation is pseudo-polynomial.

\begin{definition}[Hamiltonian Path Problem]
    Given a directed graph $G = (V, E)$ 
    (where $V$ is a set of vertices and $E \subseteq V \times V$ a set of edges) 
    and some initial vertex $v \in V$,
    decide whether there exists a path from $v$ in $G$ that visits each vertex exactly once.
\end{definition}

\begin{figure}[t]
    \centering
    \begin{subfigure}[b]{0.3\textwidth}
        \centering
        \begin{tikzpicture}
            \node[state] (v0) {$v_\init$};
            \node[state, below= of v0] (v1) {$v_1$};
            \node[state, right= of v0] (v2) {$v_2$};
            \node[state, right= of v1, xshift=0.1cm] (v3) {$v_3$};

    
            \path[-latex', draw]
                (v0) edge (v1)
                (v1) edge (v2)
                (v1) edge (v3)
                (v2) edge (v0)
                (v2) edge (v3);
        \end{tikzpicture}%
    \caption{Graph $G = (V, E)$}
    \label{fig:ill_ham-path_reduction_graph}
    \end{subfigure}%
    ~
    \begin{subfigure}[t]{0.7\textwidth}
        \centering
    \begin{tikzpicture}
        \node[state] (v0) {$v_\init$};
        \node[state, below= of v0] (v1) {$v_1$};
        \node[state, right= of v0] (v2) {$v_2$};
        \node[state, right= of v1, xshift=0.1cm] (v3) {$v_3$};
        
        \node[dist] at ($(v0)!0.4!(v1)$) (d01) {};
        \node[dist] at ($(v0)!0.6!(v2)$) (d02) {};
        \node[dist] at ($(v1)!0.4!(v2)$) (d12) {};
        \node[dist] at ($(v1)!0.4!(v3)$) (d13) {};
        \node[dist] at ($(v2)!0.4!(v3)$) (d23) {};
        
        \node[] at ($(v2)!0.5!(v3)$) (sinkhelp) {};
        \node[state, right= of sinkhelp, color=gray] (sink) {$s_\bot$};
        
        \node[] at ($(v0)!0.5!(v1)$) (sahelp) {};
        \node[state, left= of sahelp, color=blue] (sa) {$s_a$}; 

        \node[right of = v2, xshift=3cm] (s1help) {};
        \node[state, yshift=1.5cm] at (s1help) (s1) {$s_1$};  
        \node[dist, below of=s1, yshift=0.5cm] (ds1) {};
        \node[below of=ds1] (ds2) {\vdots};
        \node[state, below of=ds2] (svm1h) {\phantom{$s_{|V|}$}};
        \node at (svm1h) (svm1) {$s_{\scriptscriptstyle |V|-1 }$};
        \node[dist, below of=svm1h, yshift=0.3cm] (dsvm1) {};

        \node[state, below= of svm1h, yshift=0.5cm, color=purple] (sb) {$s_b$}; 

        \node[state, yshift=1.5cm] at ($(v0)!0.5!(s1help)$) (sinit) {$\statei$};
        \node[dist, below= of sinit, yshift=0.75cm] (dinit) {};
        
        \path[-latex', draw]
            (v0) edge (v1)
            (v1) edge (v2)
            (v1) edge (v3)
            (v2) edge (v0)
            (v2) edge (v3);

        \path[-, draw]
            (sinit) edge (dinit)
            (s1) edge (ds1) 
            (svm1h) edge (dsvm1)
            ;
        \path[-latex', draw]
            (dinit) edge[color=blue, bend right=10] node[above] {} (v0)
            (dinit) edge[color=purple, bend right=10] node[above] {} (s1)
            (ds1) edge[color=gray] node[above] {} (sink)
            (ds1) edge[color=purple] node[right] {} (ds2)
            (dsvm1) edge[color=gray] node[above] {} (sink)
            (dsvm1) edge[color=purple] node[right] {} (sb); 

        \path[-latex', draw, color=gray]
            (d01) edge[bend left=20] node[pos=0.9, above] {} (sink)
            (d02) edge[bend left=80] node[pos=0.8, right] {} (sink)
            (d12) edge[bend right=20] node[pos=0.8, below] {} (sink)
            (d13) edge[bend right=80] node[pos=0.8, right] {} (sink)
            (d23) edge[bend right=10] node[pos=0.3, below] {} (sink);
            
        \path[-latex', draw, color=blue]
            (v0) edge[bend right] (sa)
            (v1) edge[bend left] (sa)
            (v2) edge[bend right=50] (sa)
            (v3) edge[bend left=50] (sa);

        \path[-latex', draw]
            (sink) edge[in=300, out=330, loop, color=gray] (sink)
            (sa) edge[loop below, color=blue] (sa)
            (sb) edge[loop right, color=purple] (sb);
    \end{tikzpicture}
    \caption{MDP constructed from $G$ and initial vertex $v_0$. All distributions are uniform. }
    \label{fig:ill_ham-path_reduction_mdp}
    \end{subfigure}
    \caption{Illustration of the MDP construction for the reduction from the Hamiltonian path problem. 
    }
    \label{fig:ill_ham-path_reduction}
\end{figure}
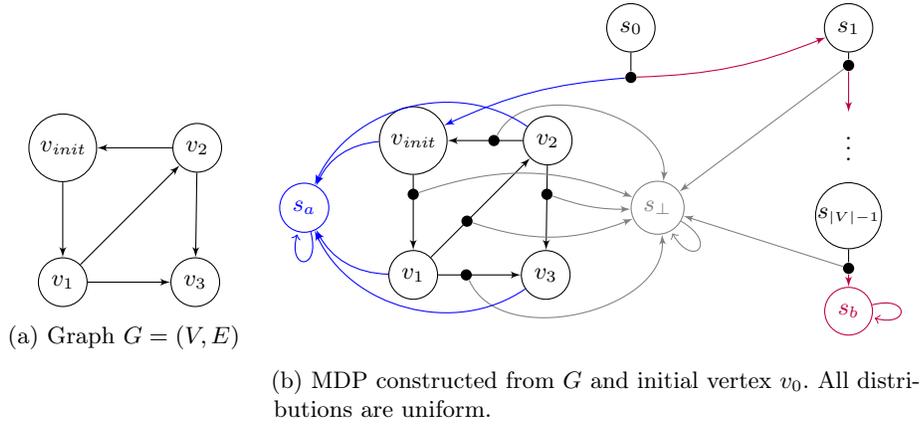

\begin{proof}[of \cref{th:MD_NP-complete}]
    Membership: Given some \RelReach property, MDP $\mdp$ and a memoryless deterministic scheduler $\sched \in \Scheds$, we can verify whether $\sched$ is a witness for the property 
    by computing the (exact) reachability probabilities in the induced DTMC, see e.g.,~\cite[Ch.~10]{bk08-book}.
    This is possible in time polynomial in the size of the state space~\cite{10.1007/3-540-60692-0_70}.
    
    \NP-hardness:
    We first give the reduction for \ref{item:1sched1state} with $\state_1 = \state_2$
    and explain how to adjust the construction for the other cases afterwards.
    For a given instance of the Hamiltonian path problem $G = (V, E)$ and $v_{\init} \in V$, 
    we construct the MDP $\mdp = \mdptup$ with
    \begin{itemize}
        \item $\states = V \cup \{ \statei, s_\bot, s_a, s_b \} \cup \{ \state_i \mid i = 1, \ldots, |V|-1\}$
        
        \item $\Act = E \cup \{ \tau \}$
        
        \item 
        $\Trans(\statei, \tau, v_\init) = \frac{1}{2}$ and $\Trans(s_0, \tau, s_1) = \frac{1}{2}$ \\
        $\Trans(s_{i}, \tau, s_{i+1}) = \frac{1}{2}$ and $\Trans(s_{i}, \tau, s_\bot) = \frac{1}{2}$ for $i=1,\ldots, |V|-2$\\
        $\Trans(s_{|V|-1}, \tau, s_b) = \frac{1}{2}$ and $\Trans(s_{|V|-1}, \tau, s_\bot) = \frac{1}{2}$ \\
        $\Trans(v, (v,v'), v') = \frac{1}{2}$ and $\Trans(v, (v,v'), s_\bot) = \frac{1}{2}$ for all $v, v' \in V, (v,v') \in E$ \\
        $\Trans(v, \tau, s_a) = 1$ for all $v \in V$ \\
        $\Trans(s_\bot, \tau, s_\bot) = 1$, $\Trans(s_a, \tau, s_a) = 1$ and $\Trans(s_b, \tau, s_b) = 1$ \\
        $\Trans(s, \alpha, s') = 0$ otherwise
    \end{itemize}
    \cref{fig:ill_ham-path_reduction} illustrates the construction: 
    \cref{fig:ill_ham-path_reduction_mdp} shows the MDP constructed from the graph $G$ with initial vertex $v_0$ depicted in \cref{fig:ill_ham-path_reduction_graph}.

    We observe that for all schedulers $\sched \in \SchedsMD$, it holds that $\Pr^\sched_{\statei}(\Finally \{s_b\}) = \frac{1}{2^{|V|}}$.
    Further, for all schedulers $\sched \in \SchedsMD$, we have
    \begin{align*}
        {\Pr}^{\sched}_{\statei}(\Finally \{s_a\}) 
        &= \frac{1}{2} \cdot {\Pr}^{\sched}_{v_{\init}}(\Finally \{s_a\}) 
    \end{align*}

    \paragraph{Claim:} \emph{For any $\epsilon < \frac{1}{2^{|V|+1}}$, it holds that there exists a Hamiltonian path from $v_\init$ in $G$ iff there exists some $\sched \in \SchedsMD$ such that $\Pr^\sched_{\statei}(\Finally \{s_a\}) - \Pr^\sched_{\statei}(\Finally \{s_b\}) \approx_\epsilon 0$.}

    ``$\Rightarrow$'': 
    Assume there exists a Hamiltonian path from $v_\init$ in $G$.
    Then, we construct $\sched$ by following this path in $\mdp$ and transitioning to $s_\bot$ from the last vertex. 
    By construction, it holds that ${\Pr}^{\sched}_{\statei}(\Finally \{s_a\}) = \frac{1}{2} \cdot \Pr^{\sched}_{v_{\init}}(\Finally \{s_a\}) = \frac{1}{2^{|V|}}$.
    Since $\Pr^{\sched}_{\statei}(\Finally \{s_b\}) = \frac{1}{2^{|V|}}$ for all schedulers $\sched \in \SchedsMD$, this implies $\Pr^{\sched}_{\statei}(\Finally \{s_a\}) = \Pr^{\sched}_{\statei}(\Finally \{s_b\})$ and hence in particular also $\Pr^\sched_{\statei}(\Finally a) - \Pr^\sched_{\statei}(\Finally b)| \approx_\epsilon 0$.

    ``$\Leftarrow$'': 
    Assume there exists $\sched \in \SchedsMD$ such that $|\Pr^\sched_{\statei}(\Finally \{s_a\}) - \Pr^\sched_{\statei}(\Finally \{s_b\})| \leq \epsilon$.
    We first show that $\Pr^\sched_{\statei}(\Finally \{s_a\}) > 0$: Assume $s_a$ is not reachable from $\statei$. Then $\frac{1}{2^{|V|+1}} > \epsilon \geq |\Pr^\sched_{\statei}(\Finally \{s_a\}) - \Pr^\sched_{\statei}(\Finally \{s_b\})| = |\Pr^\sched_{\statei}(\Finally \{s_b\})| = \frac{1}{2^{|V|}}$, which is a contradiction.
    So $s_a$ is reachable from $\statei$, and hence also from $v_\init$ and the finite path from $v_\init$ to $s_a$ cannot contain a loop since since $\sched$ is memoryless deterministic.
    Let $n$ be the number of $V$-states on the path from $v_\init$ to $s_a$ under $\sched$ (not counting $v_\init$ itself).
    Then $\Pr^{\sched}_{\statei}(\Finally \{s_a\}) = \frac{1}{2^{n+1}}$.
    Since $n \leq |V|-1$, this implies that $\Pr^{\sched}_{\statei}(\Finally \{s_a\}) \geq \frac{1}{2^{|V|}}$.
    Further, 
    \begin{align*}
        |\Pr^{\sched}_{\statei}(\Finally \{s_a\}) - \Pr^{\sched}_{\statei}(\Finally \{s_b\})| \leq \epsilon 
        \iff
        \Pr^{\sched}_{\statei}(\Finally \{s_a\}) - \frac{1}{2^{|V|}} \leq \epsilon ~ .
    \end{align*}
    Since $\epsilon < \frac{1}{2^{|V|+1}}$, this implies 
    \[ 
        \Pr^{\sched}_{\statei}(\Finally \{s_a\}) < \frac{1}{2^{|V|}} + \frac{1}{2^{|V|+1}} < \frac{1}{2^{|V|-1}} ~ .
    \]
    From $\frac{1}{2^{|V|-1}} > \Pr^{\sched}_{\statei}(\Finally \{s_a\}) = \frac{1}{2^{n+1}} \geq \frac{1}{2^{|V|}}$ we can conclude that $n = |V|-1$.
    Hence, the path from $v_\init$ to $s_a$ visits $|V|-1$ states corresponding to vertices in $G$, which means it must visit each vertex of $G$ exactly once and thus corresponds to a Hamiltonian path from $v_\init$ in $G$.

    \paragraph{Claim:} \emph{The above construction defines a pseudo-polynomial time transformation to \ref{item:1sched1state}, and a polynomial-time transformation to \ref{item:1sched1state} with approximate equality with $\epsilon>0$.}
    
    The constructed MDP has $2 \cdot |V| + 3$ states and $|E|+1$ actions. Each state has at most two successors.
    All transition probabilities in the MDP are either $0$, $0.5$, or $1$. Hence, the magnitude of the largest number occurring in the constructed MDP is a constant and thus polynomial in the size of the original graph.

    For approximate equality, we must choose some $0 < \epsilon < \frac{1}{2^{|V|+1}}$ for the constructed property. 
    The magnitude of the largest number occurring in the constructed \RelReach instance thus depends exponentially on the size of the original Hamiltonian path problem instance.
    Hence, the transformation is not pseudo-polynomial.
    The transformation is, however, still polynomial since clearly $\epsilon <1$ and hence all numbers are polynomially bounded.

    For exact equality, however, $\epsilon=0$ and hence the magnitude of the largest number occurring in the constructed \RelReach instance is polynomial in the size of the Hamiltonian path problem instance.
    Hence, the transformation is pseudo-polynomial.

    \paragraph{Handling the other cases.}
    \begin{itemize}
        \item \ref{item:2sched2state}, $s_1 = s_2$:
        We construct the MDP as above, and the following property:
        \enquote{$\exists \sched_1, \sched_2 \in \SchedsMD .\ \Pr^{\sched_1}_{s_0}(\Finally \{s_a\}) - \Pr^{\sched_2}_{s_0}(\Finally \{s_b\}) \approx_\epsilon 0$}.
        The proof works completely analogously since for all schedulers $\sched_1 \in \SchedsMD$, it holds that $\Pr^{\sched_1}_{\statei}(\Finally \{s_b\}) = \frac{1}{2^{|V|}}$
        and for all schedulers $\sched_2 \in \SchedsMD$, we have
       ${\Pr}^{\sched_2}_{\statei}(\Finally \{s_a\}) = \frac{1}{2} \cdot {\Pr}^{\sched_2}_{v_{\init}}(\Finally \{s_a\})$.        
        
        \item \ref{item:1sched2state}, $s_1 \neq s_2$:
        The MDP construction works analogously to the construction above, the only difference being that we do not introduce a fresh initial state. 
        We construct the property
        \enquote{$\exists \sched \in \SchedsMD .\ \Pr^\sched_{v_0}(\Finally \{s_a\}) - \Pr^\sched_{s_1}(\Finally \{s_b\}) \approx_\epsilon 0$}.
        Observe that for all schedulers $\sched \in \SchedsMD$, it holds that $\Pr^{\sched}_{s_1}(\Finally \{s_b\}) = \frac{1}{2^{|V|-1}}$.
        The proof works analogously.

        \item \ref{item:2sched2state}, $s_1 \neq s_2$:
        Again, we construct the MDP as above but without the fresh initial state.
        We construct the property \enquote{$\exists \sched_1, \sched_2 \in \SchedsMD .\ \Pr^{\sched_1}_{v_0}(\Finally \{s_a\}) - \Pr^{\sched_2}_{s_1}(\Finally \{s_b\}) \approx_\epsilon 0$}.
        The proof works analogously.
        \hfill \qed
    \end{itemize}
    
\end{proof}
}{
    \begin{proof}[Sketch]
        We show strong \NP-hardness by giving a pseudo-polynomial transformation from the \emph{Hamiltonian path} problem, which is known to be strongly \NP-hard \cite{gareyStrongNPCompleteness1978}, inspired by \cite{footePolynomialTime}.
        \NP-hardness of the cases for approximate equality follows by an analogous transformation, but for $\epsilon$ the transformation is only polynomial, not pseudo-polynomial, hence establishing \NP-hardness but not strong \NP-hardness.
        We refer to \cite{extendedVersion} for the full proof.
        \hfill\qed
    \end{proof}
}

Note that the hardness of the problem does not rely on whether all probability operators are evaluated under the same scheduler and from the same initial state. 

We observe that Steps 1--4 detailed in \cref{sec:algo} construct memoryful randomized witness schedulers in case of approximate equality, if they exist.
Memory and/or randomization are necessary for constructing a scheduler that exactly achieves some specified reachability probability, in general\iftoggle{extended}{}{ (see \cite{extendedVersion} for details)}.

\begin{theorem}
    Memory and randomization are necessary for \RelReach with approximate equality.
    \label{th:general_necessary}
\end{theorem}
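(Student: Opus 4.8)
The plan is to prove the two assertions separately by exhibiting one small witnessing \RelReach instance for each: an instance where \emph{randomization} cannot be dispensed with, and an instance where \emph{memory} cannot. I would state both with the exact-equality operator $\approx_0$ (a special case of approximate equality) and then note that the separations established are bounded away from $0$, so each instance remains a valid witness for all sufficiently small $\epsilon > 0$ as well.

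\emph{Randomization.} For the first instance I take the one-decision MDP with initial state $s_0$, target $T = \{t\}$ ($t$ absorbing), an absorbing non-target $s_\bot$, and two actions at $s_0$: $\alpha$ with $\Trans(s_0,\alpha,t)=1$ and $\beta$ with $\Trans(s_0,\beta,s_\bot)=1$. I consider the property $\exists \sched.\ \Pr^\sched_{s_0}(\Finally T) \approx_0 \tfrac12$. Every \emph{deterministic} scheduler, memoryless or memoryful, must commit to $\alpha$ or $\beta$ on the only reachable decision (taken on the empty history, so memory is irrelevant here), yielding $\Pr^\sched_{s_0}(\Finally T) \in \{0,1\}$; whereas the randomized scheduler playing $\alpha$ with probability $\tfrac12$ achieves exactly $\tfrac12$. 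Thus randomization is necessary, and since the deterministic values differ from $\tfrac12$ by $\tfrac12$, this persists for every $\epsilon < \tfrac12$.

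\emph{Memory.} For the second instance I compare \emph{one} scheduler evaluated from \emph{two} initial states, so that the coupling inherent to memoryless scheduling bites. Take states $s_1, s_2, s, t, s_\bot$ with $T=\{t\}$ and $t, s_\bot$ absorbing, a decision state $s$ with $\Trans(s,\alpha,t)=1$ and $\Trans(s,\beta,s_\bot)=1$, and forced moves $\Trans(s_1,\gamma,t)=\tfrac12$, $\Trans(s_1,\gamma,s)=\tfrac{3}{10}$, $\Trans(s_1,\gamma,s_\bot)=\tfrac15$, as well as $\Trans(s_2,\gamma,s)=\tfrac12$, $\Trans(s_2,\gamma,s_\bot)=\tfrac12$. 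The property is $\exists \sched.\ \Pr^\sched_{s_1}(\Finally T) - \Pr^\sched_{s_2}(\Finally T) \approx_0 0$. A memoryless scheduler choosing $\alpha$ at $s$ with probability $p$ gives $\Pr^\sched_{s_1}(\Finally T)=\tfrac12+\tfrac{3}{10}p$ and $\Pr^\sched_{s_2}(\Finally T)=\tfrac12 p$, whose difference $\tfrac12-\tfrac15 p$ lies in $[\tfrac{3}{10},\tfrac12]$ and is therefore \emph{strictly positive for every} $p\in[0,1]$. Hence no memoryless scheduler equalizes. A \emph{deterministic memoryful} scheduler, however, may act differently at $s$ depending on whether the history began in $s_1$ or $s_2$: choosing $\beta$ after $s_1\gamma s$ and $\alpha$ after $s_2\gamma s$ yields $\Pr^\sched_{s_1}(\Finally T)=\tfrac12=\Pr^\sched_{s_2}(\Finally T)$. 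This origin-dependence is exactly the independence across state-scheduler combinations exploited in \Cref{thm:whyComb}, and it shows memory is necessary (note randomization is \emph{not} needed here). As the memoryless gap is at least $\tfrac{3}{10}$, the conclusion holds for all $\epsilon < \tfrac{3}{10}$.

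I expect the main obstacle to be the memory half, and specifically the impossibility over \emph{memoryless randomized} schedulers rather than merely MD ones. Ruling out MD schedulers is immediate, but to rule out MR I must argue over the full continuum of randomization parameters; the crux is to reduce both reachability probabilities to explicit affine functions of the single parameter $p$ at the shared state $s$ and exhibit a uniform positive gap between them. Dually, I must verify that the proposed memoryful scheduler is well-defined (the two relevant finite histories $s_1\gamma s$ and $s_2\gamma s$ are distinct, so assigning different distributions is legitimate) and that it indeed equalizes the two probabilities.
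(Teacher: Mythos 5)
Your proposal is correct, and the randomization half is essentially identical to the paper's: the paper uses the same one-decision MDP (actions $\alpha$ to an absorbing target, $\beta$ to an absorbing sink) with the property $\exists \sched.\ \Pr^\sched_s(\Finally \{t\}) \approx_\epsilon 0.5$, observing that deterministic schedulers, memoryful or not, only achieve $0$ or $1$. Your memory half, however, takes a genuinely different route. The paper's witness has a \emph{single} initial state and \emph{two} target sets $\{t_1\},\{t_2\}$ linked by a cycle ($t_1$ returns to $s$): a memoryless randomized scheduler choosing $\alpha$ with probability $p<1$ gives $\Pr^\sched_s(\Finally \{t_1\})=p$ but $\Pr^\sched_s(\Finally \{t_2\})=1$, while $p=1$ gives $1$ versus $0$, so no memoryless scheduler equalizes; the deterministic memoryful scheduler taking $\alpha$ first and $\beta$ on the revisit achieves both probabilities $1$. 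So the paper isolates memory \emph{about which targets have already been visited} --- exactly the information added by the goal unfolding of \Cref{def:goalUnfolding} --- and shows memory is needed even with one state-scheduler combination and one initial state. Your witness instead uses one scheduler evaluated from \emph{two} initial states sharing a decision state, with a \emph{single absorbing} target, and isolates memory \emph{about the initial state} --- the mechanism behind \Cref{thm:whyComb} and the paper's \Cref{ex:oneschedtwostate-exists-greater-memory} (which the paper uses only for disequality, not equality). Your affine-in-$p$ computation ruling out all memoryless randomized schedulers is sound ($\tfrac12 - \tfrac{p}{5} \geq \tfrac{3}{10} > 0$), and your example additionally shows memory is needed even when all targets are absorbing, complementing Th.~\ref{th:MD_PTIME}\ref{item:MD_n-comb}, whose \PTIME/MD-sufficiency conclusion requires equal initial states per scheduler (and excludes $\approx_\epsilon$ anyway); the paper's example conversely shows memory is needed even with a unique initial state, which yours does not. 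Both witnesses are valid and each illuminates a different source of memory in the Step 1/Step 2 constructions.
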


\iftoggle{extended}{
    \begin{figure}[t]
    \centering
    \begin{subfigure}[b]{0.49\textwidth}
        \centering
    \begin{tikzpicture}
        \node[state] (sinita) {$s$};  
        
        \node[state, below left= of sinita] (t) {\phantom{$s_{2}$}}; 
        \node at (t) (th) {$t$};
        \node[state, below right= of sinita] (sx) {$s_\bot$}; 

        \path[-latex', draw]
            (sinita) edge node[left] {$\alpha$} (t)
            (sinita) edge node[right] {$\beta$} (sx);

        \path[-latex', draw]
            (t) edge[loop left] (t)
            (sx) edge[loop right] (sx);
    \end{tikzpicture}
    \caption{$\exists \sched .\ \Pr^\sched_s(\Finally \{t\}) \approx_\epsilon 0.5$ 
    }
    \label{fig:randomization-necessary}
    \end{subfigure}%
    \begin{subfigure}[b]{0.49\textwidth}
        \centering
    \begin{tikzpicture}
        \node[state] (s) {$s$};
        
        \node[state, below left= of s] (t1) {$t_1$}; 
        \node[state, below right= of s] (t2) {$t_2$}; 

       \path[-latex', draw]
            (s) edge[bend left] node[right] {$\alpha$} (t1)
            (s) edge node[right] {$\beta$} (t2);

        \path[-latex', draw]
            (t1) edge[bend left] node[right] {$\gamma$} (s)
            (sx) edge[loop right] (sx);
    \end{tikzpicture}
    \caption{$\exists \sched .\ \Pr^{\sched}_{s}(\Finally \{t_1\}) = \Pr^{\sched}_{s}(\Finally \{t_2\})$}
    \label{fig:memory-necessary}
    \end{subfigure}%
    \caption{
        MDP where memory and/or randomization are necessary for relational reachability properties with approximate equality.
    }
\end{figure}
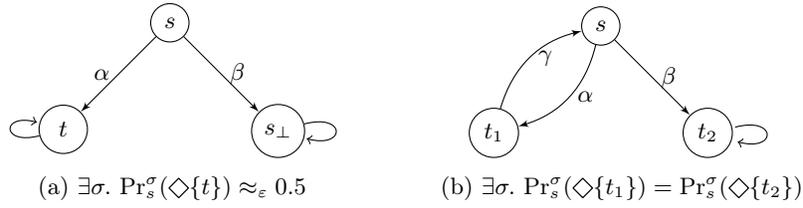

\begin{proof}
    We first show that randomization is necessary:
    Consider the MDP in \cref{fig:randomization-necessary}.
    Over general schedulers, the property 
    \enquote{$\exists \sched .\ \Pr^\sched_s(\Finally \{t\}) \approx_\epsilon 0.5$} holds for any $\epsilon$, but over (possibly memoryful) deterministic schedulers it holds only for $\epsilon \geq 0.5$.

    We further see that memory is also necessary, i.e., there are instances where memoryless randomized schedulers do not suffice:
    Consider the MDP in \cref{fig:memory-necessary}.
    Over general schedulers, the property
    \enquote{$\exists \sched_1 .\ \Pr^{\sched}_{s}(\Finally \{t_1\}) = \Pr^{\sched}_{s}(\Finally \{t_2\})$} holds, 
    but over memoryless schedulers it does not.
    Consider a memoryless randomized scheduler choosing $\alpha$ with probability $p$. If $p=1$ then the property does not hold. If $p<1$ then $\Pr^{\sched}_{s}(\Finally \{t_1\}) = p \neq \Pr^{\sched}_{s}(\Finally \{t_2\}) = \sum_{i=0}^{\infty} p^i (1-p) = \frac{1-p}{1-p} = 1$.
    \hfill\qed
\end{proof}
}{}

Note that \cref{th:MD_NP-complete,th:general_necessary} only make statements about properties with (approximate) equality.
Let us now consider \RelReach with inequality or disequality ($\comp~ \in \{\geq, >, \not\approx_\epsilon \mid \epsilon \geq 0 \}$).
Recall that, for inequality, we only have to check the maximizing (or, for $\not\approx_\epsilon$, possibly also the minimizing) schedulers for the goal unfoldings and transform them back to schedulers for the original MDP.
This transformation introduces memory in general:
If several probability operators are associated with the same scheduler but different initial states, then, intuitively, it might be necessary to switch behavior depending on the initial state.
Further, if several probability operators are evaluated under the same scheduler but for different target sets, then it might be necessary to switch behavior depending on which target sets were already visited. 

The next example illustrates that memory may be necessary for relational reachability properties with disequality even for just a single, absorbing target.

\begin{example}
    \label{ex:oneschedtwostate-exists-greater-memory}
    Consider the MDP depicted below and the property
    \[
        \exists \sched \in \Scheds .\ \Pr^{\sched}_{s_2}(\Finally \{ t \})  < \Pr^{\sched}_{s_1}(\Finally \{ t \}) \ .
    \]
    {\makeatletter
    \let\par\@@par
    \par\parshape0
    \everypar{}\begin{wrapfigure}{r}{0.4\textwidth}
    \vspace{-0.75cm}
    \scalebox{0.9}{
        \begin{tikzpicture}
            \node[state] (sinita) {$\statea$};   
            \node[state, right= of sinita] (sinitb) {$\stateb$};   
            
            \node[state, below left= of sinita, xshift=0.5cm] (t) {\phantom{$s_{2}$}}; 
            \node at (t) (th) {$t$};
            \node[state, below right= of sinita] (sx) {$s_\bot$}; 
    
            \node[dist, below left of=sinita, node distance=6ex] (alpha) {};
            \node[dist, below right of=sinita, node distance=6ex] (beta) {};
            \path[-, draw]
                (sinita) edge node[left] {$\alpha$} (alpha)
                (sinita) edge node[right] {$\beta$} (beta);
    
            \path[-latex', draw]
                (alpha) edge node[pos=0.25, left] {$1$} (t)
                (beta) edge node[pos=0.25, below] {$\frac{1}{2}$} (t)
                (beta) edge node[above] {$\frac{1}{2}$} (sx);
    
            \path[-latex', draw]
                (sinitb) edge (sinita)
                (t) edge[loop left] (t)
                (sx) edge[loop right] (sx);
        \end{tikzpicture}
    }
    \label{fig:oneschedtwostate-exists-greater-memory}
    \end{wrapfigure}
        Here, both probability operators are evaluated under the \emph{same scheduler} but \emph{different initial states}, and have the same, absorbing target set.
        Over MD schedulers this property cannot be satisfied: For all MD schedulers it holds that $\Pr^{\sched}_{s_1}(\Finally \{ t \}) = \Pr^{\sched}_{s_2}(\Finally \{ t \})$.
    \par}%
    In contrast, there does exist a memoryful scheduler such that the probability of reaching $t$ from $\statea$ exceeds the probability of reaching $t$ from $\stateb$, namely the scheduler that chooses $\alpha$ at $\statea$ if the execution was started at $\statea$, and $\beta$ otherwise.
    This also implies that 
    \[
        \exists \sched \in \Scheds .\ \Pr^{\sched}_{s_1}(\Finally \{ t \}) \neq \Pr^{\sched}_{s_2}(\Finally \{ t \})  \ 
    \]
    can be satisfied over general schedulers but not over MD schedulers.
\end{example}

\noindent However, under some mild restrictions on the target sets and/or the structure of the state-scheduler combinations, the procedure detailed in \cref{sec:algo} returns MD schedulers in \PTIME, if they exist. 
Correctness (\cref{thm:correctness}) then directly implies that MD schedulers suffice in these cases.
It remains future work to determine whether there are variants of \RelReachMD with disequality that are \NP-hard.

\begin{theorem}
    \label{th:MD_PTIME}
    For the following special cases of \RelReach with $\comp~ \in \{\geq, >, \not\approx_\epsilon \mid \epsilon \geq 0 \}$, the procedure detailed in \cref{sec:algo} runs in polynomial time and returns MD schedulers, if they exist:
    \begin{enumerate}[label=(\alph*)]
        \item \label{item:MD_independent}
        $n=m$.

        \item \label{item:MD_n-comb}
        Probability operators with the same scheduler variable have the same initial state 
        (formally, $\forall i,i' .\ \sched_{k_i} {=} \sched_{k_{i'}} \implies \state_i {=} \state_{i'}$)
        and all target sets are absorbing.

        \item \label{item:MD_sign}
        Probability operators with the same scheduler variable have equally signed coefficients and the same target sets
        (formally, $\forall i,i' .\ \sched_{k_i} {=} \sched_{k_{i'}} \implies ((q_i \geq 0 \iff q_{i'} \geq 0) \wedge T_i {=} T_{i'})$).
    \end{enumerate}
\end{theorem}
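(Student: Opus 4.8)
The plan is to verify, for each of the three cases, the two facts that together yield the claim via correctness of the procedure (\Cref{thm:correctness}): first, that every goal unfolding $\indexc$ built in Step~2 has size polynomial in $|\mdp|$ and $m$, so that Step~3 runs in polynomial time; and second, that the reward-optimal schedulers on the goal unfoldings (the maximizing ones for $\comp \in \{\geq, >\}$, and additionally the minimizing ones for $\not\approx_\epsilon$) can be realized as \emph{memoryless deterministic} schedulers on the original MDP $\mdp$ --- and, crucially, \emph{consistently} across all state-scheduler combinations that share a scheduler variable. Once both points hold, \Cref{thm:correctness} guarantees that comparing $v^{\max}$ (and $v^{\min}$) against $q_{m+1}$ correctly decides the property, and the reconstructed MD schedulers are witnesses; in particular MD schedulers suffice exactly when the property holds over general schedulers.

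For the size bounds I would argue as in \Cref{th:general_PTIME}. In case~\ref{item:MD_independent} ($n=m$) every combination $c$ has $|\relInd(c)| = 1$, and in case~\ref{item:MD_sign} all operators sharing a scheduler variable have the same target set, which forces $\mathcal T_c = \{T\}$; in both cases $\indexc$ has only $2|\states|$ states. In case~\ref{item:MD_n-comb} the target sets are absorbing, so the only reachable states with non-empty memory are absorbing target states, bounding the reachable part of $\indexc$ by $|\states| + \sum_i |T_i| \le (m+1)\,|\states|$. In all three cases the unfolding is polynomial and Step~3 is in \PTIME.

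The heart of the proof is MD realizability together with consistency. In cases~\ref{item:MD_independent} and~\ref{item:MD_sign}, where each $\mathcal T_c = \{T\}$ is a single target, optimizing $\Expected^{\indexc,\sched}_{s_c}(\indexc[\rew])$ reduces by \Cref{thm:weightedReachProbsAsExpectedRew} to optimizing $q_T \cdot \Pr_{s}(\Finally T)$; since reward is collected only on the first visit to $T$, the memory bit is irrelevant before reaching $T$ and vacuous afterwards, so any optimal MD scheduler $\hat{\sched}$ on $\indexc$ induces an MD scheduler on $\mdp$ via $\sched(s) := \hat{\sched}(s,\emptyset)$. The delicate point is consistency in case~\ref{item:MD_sign}, where one scheduler variable may occur in several combinations with \emph{different} initial states but the \emph{same} target $T$ and \emph{equally signed} coefficients. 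Here I would invoke the classical fact that the extremal (maximal or minimal) reachability probability of a fixed target is attained by a \emph{single, globally optimal} MD scheduler that is simultaneously optimal from \emph{every} state (Bellman optimality). The equal-sign condition guarantees that all combinations sharing that variable pull in the \emph{same} direction (all max, or all min), so one globally optimal MD scheduler realizes all their optima at once, giving a consistent MD assignment to the shared variable. This simultaneous-optimality step is where I expect the main obstacle, and the two syntactic restrictions (same target, same sign) are exactly what make it go through. In case~\ref{item:MD_n-comb}, consistency is automatic: the restriction ``same scheduler $\Rightarrow$ same initial state'' forces each scheduler variable into a single combination, and the absorbing targets ensure that all nontrivial choices of the MD scheduler on $\indexc$ occur in memory-$\emptyset$ states, so it can be read off directly as an MD scheduler on $\mdp$.

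Finally, for $\comp \in \{\geq, >\}$ the procedure only compares $v^{\max}$ to $q_{m+1}$, and the maximizing assignment constructed above is MD; for $\not\approx_\epsilon$ it additionally computes $v^{\min}$ and succeeds if either extreme lies outside $[q_{m+1}-\epsilon,\, q_{m+1}+\epsilon]$, the corresponding all-max or all-min assignment again being MD by the same argument with the opposite optimization direction. Hence in every case the procedure runs in polynomial time and, whenever an MD witness exists, returns one.
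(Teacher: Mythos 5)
Your proposal is correct and follows essentially the same route as the paper's proof: polynomial size bounds on the goal unfoldings as in \cref{th:general_PTIME}, reading off MD schedulers from the memory-$\emptyset$ part of each unfolding in cases \ref{item:MD_independent} and \ref{item:MD_n-comb}, and, for case \ref{item:MD_sign}, invoking a single MD scheduler that optimizes the reachability probability of the common target \emph{simultaneously from all states}, with the equal-sign condition ensuring all summands are optimized in the same direction. If anything, you make explicit the cross-combination consistency step that the paper only treats for the ``simplest case'' ($n=1$, identical targets, non-negative coefficients) before declaring the general case analogous, so your write-up is a faithful and slightly more complete rendering of the same argument.
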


\begin{corollary}
    \label{th:MD_suffice}
    For the \RelReach variants from \cref{th:MD_PTIME}, MD schedulers suffice.
\end{corollary}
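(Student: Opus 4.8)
The plan is to argue that, for each of the three listed cases and every $\comp\in\{\geq,>,\not\approx_\epsilon\}$, the procedure of \Cref{sec:algo} (i) only ever builds goal unfoldings of \emph{polynomial} size, and (ii) produces, per state-scheduler combination, an MD scheduler \emph{on the original} $\mdp$, chosen \emph{consistently} across combinations that share a scheduler variable. Correctness (\Cref{thm:correctness}) then immediately gives the theorem, and \Cref{th:MD_suffice} follows because a general witness exists iff the procedure returns one, which is then MD; hence whenever a general witness exists, an MD witness does too.

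I would rely on two ingredients. First, by \Cref{thm:exRewPtimeMD} each optimum $\opt_{\sched}\Expected^{\indexc,\sched}_{s_c}(\indexc[\rew])$ is attained by an MD scheduler on $\indexc$ and is computable in time polynomial in $|\indexc|$; and for $\comp\in\{\geq,>,\not\approx_\epsilon\}$ only these maximizing/minimizing schedulers are ever needed (as recalled before the theorem). Second, I would use the standard fact that for a \emph{single} target set $T$ there is a \emph{uniformly} optimal MD scheduler maximizing (resp.\ minimizing) $\Pr_s(\Finally T)$ \emph{simultaneously from every state} $s$ \cite{bk08-book}. Throughout I would also invoke the observation that reward in $\indexc$ is collected only on the \emph{first} visit to each target, so the scheduler's choices in any unfolded copy $(\state,\mathcal T)$ whose memory $\mathcal T$ already records all targets still reachable are irrelevant to the expected reward.

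For case \ref{item:MD_independent} ($n=m$) every combination satisfies $|\relInd(c)|=1$, so $\indexc$ has only $2|\states|$ states (\Cref{th:general_PTIME}, case \ref{item:independent}) and carries a single target. The MD optimum on $\indexc$ may differ across the two copies, but by the first-visit observation I would overwrite its behaviour on the ``already-visited'' copy with that on the fresh copy, yielding an equally optimal MD scheduler on $\mdp$; since all scheduler variables are distinct, no cross-combination consistency is required. For case \ref{item:MD_n-comb} (same initial state per scheduler, absorbing targets), \Cref{th:general_PTIME}, case \ref{item:absorbing} bounds $|\indexc|$ linearly in $|\mdp|$ and $m$; moreover, since all target states are absorbing, no non-target state is reachable once a target is entered, so every reachable \emph{non-target} unfolded state has empty memory, while at the absorbing target states the chosen action is irrelevant. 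Hence the MD optimum on $\indexc$ induces a well-defined MD scheduler on $\mdp$, and the same-initial-state restriction ties each scheduler variable to one initial state, ruling out the initial-state dependence exhibited in \Cref{ex:oneschedtwostate-exists-greater-memory}.

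The crux is case \ref{item:MD_sign}. Here all combinations sharing a scheduler variable have the \emph{same} target $T$, so $\mathcal T_c=\{T\}$ and $\indexc$ again merely doubles the state space (polynomial); equally signed coefficients make $\indexc[\rew]=q_T\cdot\rew_T$ with $q_T$ of fixed sign, so optimizing reduces to maximizing (if $q_T\geq 0$) or minimizing (if $q_T<0$) the reachability probability of $T$. The difficulty is that the shared scheduler variable may be evaluated from \emph{several} initial states, whereas an MD scheduler on $\mdp$ cannot depend on the initial state. I would resolve this precisely with the uniform MD-optimality fact above: one MD scheduler $\sched^\ast$ maximizes (resp.\ minimizes) $\Pr_s(\Finally T)$ from \emph{all} states at once, so assigning $\sched^\ast$ to that variable is simultaneously optimal for every one of its combinations, and the first-visit observation folds the unfolding's two copies into this single scheduler on $\mdp$. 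I expect this reconciliation --- extracting a single initial-state-independent MD scheduler optimal across all combinations of a shared variable --- to be the main obstacle, as it is exactly the step that fails (and genuinely forces memory, cf.\ \Cref{ex:oneschedtwostate-exists-greater-memory}) once the ``equally signed, same target'' restriction is dropped, since then one would need to maximize from some states and minimize from others with the same memoryless choices.
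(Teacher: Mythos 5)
Your proposal is correct and follows essentially the same route as the paper: it derives the corollary from the fact that the algorithm of \cref{sec:algo} is correct over general schedulers (\cref{thm:correctness}) yet, in the three special cases, returns MD witnesses, with the same case analysis as the paper's proof of \cref{th:MD_PTIME} --- single target per combination for \ref{item:MD_independent}, absorbing targets plus a unique initial state per scheduler variable for \ref{item:MD_n-comb}, and a uniformly optimal MD scheduler (the paper's $\sched^{\max}$) reconciling the several initial states of a shared variable for \ref{item:MD_sign}. Your explicit ``first-visit'' folding of the goal unfolding back onto $\mdp$ is only a more detailed spelling-out of the transformation the paper states tersely.
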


\iftoggle{extended}{
    Intuitively,%
}{
    The proof of \cref{th:MD_PTIME} can be found in \cite{extendedVersion}. 
    We focus here on the intuition.
    Firstly,%
}
if $n=m$ (\ref{item:MD_independent}),  
then all probability operators are independent in the sense that we only need to solve $n=m$ independent single-objective queries, for which there exist optimal MD schedulers~\cite{puterman1994markov}.\footnote{Note that this reasoning does not work for $\approx_\epsilon$, because there we may need to combine the optimal MD schedulers into memoryful randomized schedulers to obtain a witness. }

For \ref{item:MD_n-comb}, we consider the statement for $n=1$. 
If all probability operators are associated with the same scheduler and initial state and all target states are sinks, then
MD schedulers suffice as there is nothing to remember: We know which state we started from (since there is only a single initial state), and we know which target sets have already been visited (since all targets are sinks).   

Lastly, consider \ref{item:MD_sign} for $n=1$ and non-negative coefficients.
In this case,
since all target sets are the same, there must exist a MD scheduler maximizing all reachability probabilities $\Pr^\sched_{\state_i}(\Finally T)$ at the same time~\cite{puterman1994markov}.
Since all coefficients are non-negative, this scheduler also maximizes the weighted sum over the probabilities. We can analogously reason about minimizing the weighted sum.

\iftoggle{extended}{
    \begin{proof}[of \cref{th:MD_PTIME}]    
    For each case, we show that \cref{alg:linear_general} returns a MD scheduler and runs in polynomial time.
    
    \paragraph{\ref{item:MD_independent}:}
    If $n=m$ and $\comp \in \{\geq, >, \not\approx_\epsilon \mid \epsilon \geq 0 \}$,
    then $|\relInd(c)|=1$ for each $c \in \comb$ and each scheduler $\sched_i \in \{\sched_1, \ldots, \sched_n\}$ corresponds to a MD scheduler for $\indexc$ for the unique $c \in \comb$ with $c = (-, \sched_i)$.
    Further, any MD scheduler on $\indexc$ can be transformed back to a MD scheduler for $\mdp$ since there is only a single target set.
    
    We have already argued in \cref{th:general_PTIME} that \cref{alg:linear_general} runs in polynomial time in this case.

    \paragraph{\ref{item:MD_n-comb}:}
    We show the claim for the simplest case, namely $n=1$ and $s_1 = \ldots = s_m$ and all target sets are absorbing and $\comp \in \{\geq, >, \not\approx_\epsilon \mid \epsilon \geq 0 \}$.
    Then $|\comb| = 1$, so there is a unique $c \in \comb$.
    We can transform any MD scheduler for $\indexc$ back to a MD scheduler on $\mdp$ since the target sets are absorbing and there is a unique initial state $s_1 = \ldots = s_m$.
    
    Further, since target states are absorbing, the size of the goal unfolding $\indexc$ is linear in $\mdp$ and hence \cref{alg:linear_general} runs in polynomial time.
    
    The reasoning for the general case follows analogously.

    \paragraph{\ref{item:MD_sign}:}
    We show the claim for the simplest case, namely $n=1$ and $T_1 = \ldots = T_m =: T$ and for all $i=1, \ldots, m$ we have $q_i \geq 0$ and $\comp \in \{\geq, >, \not\approx_\epsilon \mid \epsilon \geq 0 \}$.
    We know that there exists some MD scheduler $\sched^{\max} \in \SchedsMD$ maximizing the reachability probability of $T$ for all $s \in \states$. 
    Since all coefficients $q_1, \ldots, q_m$ are non-negative, it follows that
    \begin{align*}
        \exists \sched \in \Scheds .\ \sum_{i=1}^{m} q_i \cdot \Pr^{\sched}_{\state_i} (\Finally T) \geq q_{m+1}
        \iff &
        \max_{\sched \in \Scheds} \sum_{i=1}^{m} q_i \cdot \Pr^{\sched}_{\state_i} (\Finally T) \geq q_{m+1}
        \\ \iff & 
        \sum_{i=1}^{m} q_i \cdot \Pr^{\sched^{\max}}_{\state_i} (\Finally T) \geq q_{m+1}
        \\ \iff & 
        \max_{\sched \in \SchedsMD} \sum_{i=1}^{m} q_i \cdot \Pr^{\sched}_{\state_i} (\Finally T) \geq q_{m+1}
    \end{align*}
    and analogously for $>$ and $\not\approx_\epsilon$.
    Hence, \cref{alg:linear_general} returns true iff the property is satisfiable over MD schedulers.
    
    Further, since $T_1 = \ldots = T_m$, the size of the goal unfolding is linear in the size of the original MDP and hence \cref{alg:linear_general} runs in polynomial time.
    
    The reasoning for the general case follows analogously.
    \hfill\qed
\end{proof}
}{}

\section{Implementation and Evaluation}
\label{sec:evaluation}

We have implemented a model checking algorithm based on the procedure described in \cref{sec:algo}, which we use to investigate what model sizes our approach can handle, and how it performs compared to existing tools that can check relational reachability properties.

\subsection{Setup}
Our prototype\footnote{Source code and benchmarks: \url{https://github.com/carolinager/RelReach}, artifact: \url{https://doi.org/10.5281/zenodo.15209574}}
 is implemented on top of (the python bindings of) the probabilistic model checker \storm~\cite{henselProbabilisticModel2022} and  supports an expressive fragment of relational reachability properties. It takes as input an MDP (in the form of a \PRISM file \cite{knp11}) with $m$ (not necessarily distinct) initial states and $m$ (not necessarily distinct) target sets, as well as a \emph{universally quantified} relational reachability property.
 Most available case studies for relational reachability properties are universally quantified.
Recall that, while the previous sections addressed existentially quantified relational reachability properties, we can transfer our results to universally quantified relational reachability properties by considering their negation.
If we conclude that the given (universal) property does not hold, out implementation can return witnesses in the form of schedulers returned by \storm for the expected rewards computed in Step 3, plus a solution for $\lambda$ as defined in the proof of \cref{thm:cruxApproxEqual} in case of (universal quantification with) $\not\approx$.
Scheduler export is disabled by default and also in our experiments.

Note that the tool’s output (`Yes'/`No') depends only on comparing an expected reward to a threshold (\iftoggle{extended}{\cref{alg:linear_general},~Lines~\ref{line:linear_general_return_first}--\ref{line:linear_general_return_last}}{\cref{alg:linear_general_simplified},~\cref{line:linear_general_simplified_return}}). 
Since the same calculations are performed regardless of the comparison's outcome, performance is generally unaffected by an instance's falsifiability.

We use \storm's internal single- and multi-objective model checking capabilities for the computation of expected rewards (Step 3 in \cref{sec:algo}), where single-objective model checking employs \emph{Optimistic Value Iteration}~\cite{hartmannsOptimisticValue2020} for approximate expected reward computation with a default tolerance of $10^{-6}$.\footnote{Note that \storm returns a single value $x$ with relative difference at most $10^{-6}$ to the exact result instead of sound lower and upper bounds here. We ensure soundness by computing conservative lower and upper bounds as $\tfrac{x}{1 + 10^{-6}}$ and $\tfrac{x}{1 - 10^{-6}}$.} 
The multi-objective model checking employs \emph{Sound Value Iteration}~\cite{DBLP:conf/cav/QuatmannK18} with the same tolerance.\footnote{\storm (currently) returns a single value with relative difference at most $10^{-6}$ to the exact result as both lower and upper bound and we again ensure soundness manually.}
We note that the implementation also supports \emph{exact} solving. 
Here, we report results for approximate computation of expected rewards with tolerance $10^{-6}$.
Our experiments were performed on a laptop with a single core of a $1.80$GHz Intel~i7 CPU and 16GB~RAM under Linux Ubuntu 24.04.1 LTS.

We compare our tool against two baselines:
\HyperProb\footnote{\url{https://github.com/TART-MSU/HyperProb}}~\cite{dabb21} and  \HyperPaynt\footnote{\url{https://github.com/probing-lab/HyperPAYNT}, we used this docker container: \url{https://zenodo.org/records/8116528}. Also referred to as \enquote{AR loop} in \cite{andriushchenkoDeductiveController2023}.}~\cite{andriushchenkoDeductiveController2023}. 
These tools handle (fragments of much more expressive) probabilistic hyperproperties that partially overlap with relational reachability properties. 
\HyperProb encodes the \HyperPCTL model-checking problem in SMT with an exponential number of variables.
\HyperPaynt uses abstraction refinement to model-check a fragment of \HyperPCTL, potentially exploring an exponential number of schedulers.
To the best of our knowledge, no other tools support a (nontrivial) fragment of relational reachability properties.
\HyperProb and \HyperPaynt support approximate comparison operators via an equivalent conjunction or disjunction of two inequalities.\footnote{While not covered in \cite{andriushchenkoDeductiveController2023}, \HyperPaynt also allows to add a constant to one side of the (in)equality.}

Both \HyperProb and \HyperPaynt search for policies and restrict themselves to MD schedulers. Solving a property over MD schedulers is not always equivalent to checking it over general schedulers (see, e.g., \cref{th:general_necessary}, \cref{ex:oneschedtwostate-exists-greater-memory}), but it coincides sometimes (\cref{th:MD_suffice}).
Further, \HyperProb supports only properties with a single scheduler quantifier (i.e., all probability operators are evaluated over the same MD scheduler).
\HyperPaynt, on the other hand, supports an arbitrary number of scheduler quantifiers.%
\footnote{For both \HyperProb and \HyperPaynt, one could alternatively consider a single scheduler on a manually created self-composition to simulate multiple schedulers (from the same initial state). Based on the large performance gap with \HyperProb and \HyperPaynt seen below, we did not consider such tweaks.}
To account for the different scheduler classes considered by the tools, we will in the following denote the validity of a property over general as well as over MD schedulers as \emph{HR result} and \emph{MD result}, respectively, when comparing the tools.%
\footnote{Note that \HyperPaynt expects existentially quantified properties, but we report the validity w.r.t.\ the universally quantified property, so the MD result stated here is the opposite of the result reported by \HyperPaynt.}
We will also denote for each benchmark family whether the considered property belongs to a fragment for which checking MD schedulers is equivalent to checking general schedulers.

Our tool's core procedure consists of polynomially many calls to well-established, practically efficient subroutines based on value iteration~\cite{QuatmannD0JK16}.
In contrast, \HyperProb and \HyperPaynt use exponential algorithms to solve the \NP-hard \RelReach problems over MD schedulers and do not optimize for the \PTIME special cases (\cref{th:general_PTIME}).
Further, to optimize a linear combination of expected rewards over different schedulers, we can treat the different state-scheduler combinations independently and then aggregate them in Step~4.

\subsection{Case Studies}

Let us first illustrate that \RelReach covers interesting problems.

\paragraph{Generalization of Von Neumann's trick (\VN).}
We generalize Von Neumann's trick from the motivating example of \cref{sec:intro} to $2N$ bits.
The idea is as follows: 
Extract the first $2N$ bits from the stream; if the number of zeros equals the number of ones, return the value of the first bit; otherwise try again.
We check whether
\[
    \forall \sched .\
    \Pr_{s_0}^{\sched}(\Finally \{\texttt{return 0}\}) \approx_{\epsilon} \Pr_{s_0}^{\sched}(\Finally \{\texttt{return 1}\})
\]
for $\epsilon=0$ and $\epsilon=0.1$ and varying values for $N$. 
By \cref{th:MD_suffice}\ref{item:absorbing}, checking this property over MD schedulers is equivalent to checking it over general schedulers since all target sets are absorbing.
Therefore, \HyperProb and \HyperPaynt can also check the general problem in this case even though they restrict to MD schedulers.

\begin{table}[t] 
    \centering
    \caption{
        Experimental results.
        \emph{HR res.}/\emph{MD res.}: Does the universally quantified property hold over general (HR)/MD schedulers? 
        \emph{$\equiv?$}: Is checking the property over general scheduler equivalent to checking over MD schedulers?
        \emph{Time} is rounded to the nearest second, with the exception of values ${<}1s$ which are denoted as such.
        For every tool, we give the total time and, in brackets,
        (1) for our tool: the total time excluding model building,  
        (2) for \HyperProb: z3 solving time, 
        (3) for \HyperPaynt: the reported `synthesis time'. 
        \TO: Timeout (1 hour). \OOM: Out of memory.
    }
    \label{tab:eval_new}
    \adjustbox{max width=\textwidth}{  
    \begin{tabular}{l l l r || c | r >{\color{gray}}r || c || c | r >{\color{gray}}r | r >{\color{gray}}r}
        \toprule
         \multicolumn{4}{c||}{Case study} & \multicolumn{3}{c||}{Our tool} & \multirow{2}{*}{$\equiv$?} & \multicolumn{5}{c}{Comparison \textbf{over MD sched.}} \\
         & \multicolumn{2}{c}{Variant} & \makecell{$|\states|$} & HR res. & \multicolumn{2}{c||}{Time} & &
         MD res. & 
         \multicolumn{2}{c|}{\HyperProb} & \multicolumn{2}{c}{\HyperPaynt}
         \\
         \midrule 
         \multirow{8}{*}{\VN} 
         & $N{=}1$ & $\epsilon{=}0$ & 5 
         & No & ${<}1s$ & (${<}1s$)
         & \multirow{12}{*}{\makecell{$\equiv$ \\ Cor.~\ref{th:MD_suffice}}}
         &  No
         & ${<}1s$ & (${<}1s$)
         &  ${<}1s$ & (${<}1s$)
         \\
         & $N{=}1$ & $\epsilon{=}0.1$ & 5 
         & Yes & ${<}1s$ & (${<}1s$)
         &
         & Yes
         & ${<}1s$ & (${<}1s$)
         & ${<}1s$ & (${<}1s$)
         \\
         & $N{=}10$ & $\epsilon{=}0$ & 383 
         & No & ${<}1s$ & (${<}1s$)
         & & No
         & \TO & -
         & ${<}1s$ & (${<}1s$)
         \\
         & $N{=}10$ & $\epsilon{=}0.1$ & 383 
         & No & ${<}1s$ & (${<}1s$)
         & & No
         & \TO & -
         & ${<}1s$ & (${<}1s$)
         \\
         & $N{=}100$ & $\epsilon{=}0$ & $39\,803$ 
         & No & $3s$ & ($3s$)
         & & -
         & \TO & -
         & \TO & -
         \\
         & $N{=}100$ & $\epsilon{=}0.1$ & $39\,803$ 
         & No & $3s$ & ($3s$)
         & & -
         & \TO & -
         & \TO & -
         \\
         & $N{=}200$ & $\epsilon{=}0$ & $159\,603$
         & No & $218s$ & ($217s$)
         & & -
         & \OOM & -
         & \TO  & -
         \\
         & $N{=}200$ & $\epsilon{=}0.1$ & $159\,603$
         & No & $209s$ & ($209s$)
         & & -
         & \OOM & -
         & \TO & -
         \\
         & $N{=}250$ & $\epsilon{=}0$ & $249\,503$
         & No & $1\,358s$ & ($1\,357s$)
         & & -
         & \OOM & -
         & \TO  & -
         \\
         & $N{=}250$ & $\epsilon{=}0.1$ & $249\,503$
         & No & $1\,323s$ & ($1\,322s$) 
         & & -
         & \OOM & -
         & \TO & -
         \\
         & $N{=}300$ & $\epsilon{=}0$ & $359\,403$
         & - & \TO & -
         & & -
         & \OOM & -
         & \TO  & -
         \\
         & $N{=}300$ & $\epsilon{=}0.1$ & $359\,403$
         & - & \TO & -  
         & & -
         & \OOM & -
         & \TO & -
         \\
         \midrule
         \multirow{8}{*}{\RT} 
         & $N{=}10$ & $(N,N)$ & $933$ 
         & Yes & ${<}1s$ & (${<}1s$)
         & \multirow{8}{*}{\makecell{$\equiv$ \\ Cor.~\ref{th:MD_suffice}}}
         & Yes
         & \multicolumn{2}{c|}{\multirow{8}{*}{n/a}}
         & ${<}1s$ & (${<}1s$)
         \\
         & $N{=}10$ & $(N,N{-}1)$ & $1\,021$ 
         & No & ${<}1s$ & (${<}1s$)
         & & No
         & & 
         & ${<}1s$ & (${<}1s$)
         \\
         & $N{=}100$ & $(N,N)$ & $994\,803$ 
         & Yes & $6s$ & (${<}1s$)
         & & -
         & & 
         & \TO & -
         \\
         & $N{=}100$ & $(N,N{-}1)$ & $1\,004\,701$ 
         & No & $5s$ & (${<}1s$)
         & & -
         & & 
         & \TO & -
         \\
         & $N{=}200$ & $(N,N)$ & $7\,979\,603$
         & Yes & $48s$ & ($10s$)
         & & -
         & &
         & \TO & - 
         \\
         & $N{=}200$ & $(N,N{-}1)$ & $8\,019\,401$
         & No & $44s$ & ($6s$)
         & & -
         & &
         & \TO & -
         \\
         & $N{=}300$ & $(N,N)$ & $26\,954\,403$
         & - & \OOM & -
         & & -
         & &
         & \OOM & -
         \\
         & $N{=}300$ & $(N,N{-}1)$ & $27\,044\,101$
         & - & \OOM & -
         & & -
         & &
         & \OOM & -
         \\
         \bottomrule
    \end{tabular}
    }
\end{table}

\paragraph{Robot tag (\RT).}
Consider a $N {\times} N$ grid world with a robot and a janitor, where both move in turns, starting with the robot.%
\footnote{Our model is based on a \href{https://www.prismmodelchecker.org/casestudies/robot.php}{\PRISM model} and a \href{https://github.com/sjunges/gridworld-by-storm/blob/master/gridstorm/models/files/evade-two-mdp.nm}{Gridworld-By-Storm model}.}
The robot starts in the lower left corner, the janitor in the upper right corner.
The robot has fixed the following strategy: It moves right until it reaches the lower right corner, then moves up until it reaches the upper right corner, its target.
The janitor can hinder the robot from reaching its target by occupying a cell that the robot wants to move to. 
We now want to check whether this strategy is approximately robust against adversarial behavior by the janitor, 
in the sense that the probability of reaching the target should be approximately independent of the janitor strategy.
Formally,
\[
    \forall \sched_1 , \sched_2 .\ \Pr^{\sched_1}_s(\Finally \{t\}) \approx_{10^{-5}} \Pr^{\sched_2}_s(\Finally \{t\})
    ~,
\]
where $s$ represents the initial state with robot and janitor in their initial locations, and
$t$ represents that the robot has reached its target.
We check this property for different grid sizes $N$ and different starting positions for the janitor:
If the janitor starts in the upper right corner, they cannot hinder the robot, but if they start in location $(N, N{-}1)$, then they can always stop the robot. 

By \cref{th:MD_suffice}\ref{item:independent}, MD schedulers suffice here. 
However, \HyperProb does not natively support this problem as it only allows a single scheduler quantifier and the property is trivially satisfied if we use a single scheduler quantifier for both probability operators.

\paragraph{Results.}
The results for both case studies are presented in \cref{tab:eval_new}.
Firstly, we observe that the quality of the coin simulation decreases with a growing number of bits $N$: While approximate equality with $\epsilon{=}0.1$ holds for $N{=}10$, it does not hold anymore for $N{=}100$.
Further, our tool handles 1 million states in $5$--$6s$ for \RT but times out for the \VN instances of comparable size.
Besides the size of the state space, the computation time of expected rewards also depends on structural aspects, e.g. many cycles (as in \VN) cause slower convergence of the approximation algorithm for expected rewards.
On problem instances that are also supported by \HyperProb or \HyperPaynt, our tool performs drastically better: \HyperProb times out already for \VN with $N{=}10$, and \HyperPaynt for \VN with $N{=}100$ while our tool solves these instances in a matter of seconds.

\subsection{Benchmarks for Probabilistic Hyperproperties}
\label{sec:case-studies}

Next, we investigate the scalability of our tool on benchmarks from the literature on probabilistic hyperproperties that are relational reachability properties. These benchmarks are typically motivated by security use cases.
In particular, three out of four \HyperPCTL case studies for MDPs presented in \cite{abbd20-atva} are covered by our approach: 
\iftoggle{extended}{%
    We consider (mild variations of) \TA, \PW, \TS from \cite{abbd20-atva}. 
    Further, we consider \SD from \cite{andriushchenkoDeductiveController2023}.
    \label{app:case-studies}

\paragraph{\TA}
    A standard application of probabilistic hyperproperties
    is to check whether an implementation of modular exponentiation for RSA public-key encryption 
    has a side-channel \emph{timing leak} \cite{abbd20-atva}. 
    Concretely, the setting is as follows: One thread computes $a^b \text{ mod } n$ for a given plaintext $a$, encryption key $b$, and modulus $n$ (all non-negative integers), 
    while a second (attacker) thread tries to infer the value of $b$ by keeping a counter $c$ to measure the time taken by the first thread.
    \cref{fig:modexp} shows the implementation of the modular exponentiation algorithm that we want to check for side-channels.

    If $b$ has $k$ bits, we can model the parallel execution of the two threads as an MDP with $2^k$ initial states $\Init$, each representing a different value of $b$.
    In our models, a (memoryful randomized) scheduler for the MDP corresponds to a thread scheduler.
    In contrast, in \cite{abbd20-lpar,andriushchenkoDeductiveController2023} the authors hard-coded fair thread schedulers in their models, and a (memoryless deterministic) scheduler for the MDP corresponds to a secret input. 

    The goal is to now check whether the implementation satisfies \emph{scheduler-specific probabilistic observational determinism} (SSPOD) \cite{nsh13}, i.e., whether for any scheduling of the two threads that no information about the secret input (here, $b$) can be inferred by the publicly observable information (here, the time taken by the modular exponentiation thread).
    
    In \cite{abbd20-atva}, the desired property is formulated in \HyperPCTL as follows:
    \[\forall \varsched_1 \forall \varsched_2 .\ \forall \varstate_1(\varsched_1) \forall \varstate_2(\varsched_2) .\ 
    (\init_{\varstate_1} \wedge \init_{\varstate_2}) 
    \implies 
    \bigwedge_{j=0}^{2k} \Prob(\Finally (c=j)_{\varstate_1}) = \Prob(\Finally (c=j)_{\varstate_2})\]
    where 
    $\init$ marks initial states for different values of the secret input $b$, 
    $b$ has $k$ bits, and 
    $c$ is the counter of the attacking side-channel thread.
    An MDP $\mdp$ satisfies the above \HyperPCTL formula iff it holds that
    \begin{align*}
        &\forall \sched_1, \sched_2 \in \Scheds .\ \bigwedge_{\state_1,\state_2 \in \Init} \bigwedge_{j=0}^{2k} \Pr^{\sched_1}_{\state_1}(\Finally (c=j)) = \Pr^{\sched_2}_{\state_2}(\Finally (c=j))
        ~.
    \end{align*}
    Since universal quantification distributes over conjunction, we can reformulate this to a conjunction of universally quantified relational reachability properties as follows:
    \begin{align*}
        \bigwedge_{\state_1,\state_2 \in \Init} \bigwedge_{j=0}^{2k} \forall \sched_1, \sched_2 .\ \Pr^{\sched_1}_{\state_1}(\Finally (c=j)) = \Pr^{\sched_2}_{\state_2}(\Finally (c=j))
        ~.
    \end{align*}

    In our experiments, we check the first conjunct of this property, i.e., we compare initial values $b=0$ and $b=1$ for $j=0$.
    We check the property for $M \in \{8,16,24,28,32\}$ where $M=2k$ where $k$ is the number of bits for the secret input $b$. 

\begin{figure}[t]
    \centering
    \begin{subfigure}[t]{0.45\textwidth}
    \centering
    \begin{lstlisting}[style=CStyle,tabsize=2,language=ML,basicstyle=\scriptsize,escapechar=/,backgroundcolor=\color{white},]
int mexp(a,b,n){
  d = 0; e = 1; i = k;
  while (i >= 0){
    i = i-1; d = d*2;
    e = (e*e) % n;
    if (b(i) = 1)
      d = d+1;
      e = (e*a) % n;
    }
} \end{lstlisting}
    \end{subfigure}%
    \quad
    \begin{subfigure}[t]{0.45\textwidth}
    \centering
    \begin{lstlisting}[style=CStyle,tabsize=2,language=ML,basicstyle=\scriptsize,escapechar=/,backgroundcolor=\color{white},]
  t = new Thread(mexp(a,b,n)); 
  c = 0; M = 2 * k;
  while (c < M & !t.stop) c++;   \end{lstlisting}
    \end{subfigure}
    \caption{Modular exponentiation (left) and attacker thread (right) \cite{abbd20-lpar}.
    } 
    \label{fig:modexp}
\end{figure}

\paragraph{\PW} 
    Another example for an implementation with possible timing leaks is a careless implementation of string comparison for password verification where a timing leak may reveal information about the password \cite{abbd20-lpar}.
    As for \TA, we want check whether an attacker thread observing the time taken by the main thread (executing a string comparison algorithm) can infer information about the password.

    \cref{fig:string} displays the string comparison algorithm that we want to check for information-leaks.
\begin{figure}[t]
    \centering
    \begin{subfigure}[t]{0.5\textwidth}
    \centering
    \begin{lstlisting}[style=CStyle,tabsize=2,language=ML,basicstyle=\scriptsize, backgroundcolor=\color{white},
    %    ,escapechar=/
      ]
int str_cmp(char * r){
  char * s = 'Bg\$4\0';
  i = 0;
  while (s[i] != '\0'){
    i++;
    if (s[i]!=r[i]) return 0;
    }
    return 1;
}    \end{lstlisting}
    \end{subfigure}
    \caption{String comparison \cite{abbd20-lpar}.}
    \label{fig:string}
\end{figure}

    We model this by mapping strings to integers and allowing $2^{8n}$ different input strings for a string length of $n$.
    As for \TA, our model differs from \cite{abbd20-lpar,andriushchenkoDeductiveController2023}: In our model, (memoryful randomized) MDP schedulers correspond to thread schedulers, while they hard-coded fair thread schedulers in their models, and a (memoryless deterministic) scheduler for the MDP corresponds to a secret input. 

    We check the same property as for \TA:
    In our experiments, we compare string inputs represented by the integers 0 and 1, for $j=0$.
    We check the property for $M \in \{2,4\}$ where $M=2n$ where $n$ is the length of the string.

\paragraph{\TS}
    Consider the following classic example of a simple insecure multi-threaded program~\cite{smith03} 
    \[ 
    th \colon \text{\bf{while }} h>0 \text{\bf{ do }} \{h \leftarrow h-1\};\; l \leftarrow 2 
    \quad \parallel \quad
    th' \colon l \leftarrow 1  
    \]
    where $h$ is a secret input and $l$ a public output.
    Intuitively, this program is not secure because the final value of $l$ allows to make a probabilistic inference on the initial value of the secret input $h$.
    For example, under a fair scheduler (that schedules all available threads with the same probability), the probability of outputting $l=2$ is much higher than the probability of outputting $l=1$.
    Formally, the program violates \emph{scheduler-specific probabilistic observational determinism} (SSPOD)~\cite{nsh13}.
    SSPOD stipulates that for all schedulings of the threads 
    the probability of observing $l=1$ in the end should be the same as the probability of observing $l=2$. 
    \cite{abbd20-atva} formulates the desired property in \HyperPCTL as follows:
    \[\forall \varsched .\ \forall \varstate_1(\varsched) \forall \varstate_2(\varsched) .\ 
    (\init_{\varstate_1} \wedge \init_{\varstate_2}) 
    \implies 
    \bigwedge_{j=1}^{2} \Prob(\Finally (l=j)_{\varstate_1}) = \Prob(\Finally (l=j)_{\varstate_2})\]
    where 
    $\init$ marks initial states for different values of the secret input $h$.
    
    This is equivalent to the following conjunction of universally quantified relational reachability properties:
    \begin{align*}
        \bigwedge_{s_1, s_2 \in \Init} \bigwedge_{j=1}^{2} \forall \sched .\ \Pr^{\sched}_{\state_1}(\Finally (l=j)) = \Pr^{\sched}_{\state_2}(\Finally (l=j))
        ~.
    \end{align*}
    where $\Init$ is the set of states representing initial states of the program for different values of $h$ (up to a certain bound).

    We compare the initial values of $h$ indicated in the table, for $j=1$.
    Note that this is actually equivalent to checking the full conjunction since $\Pr^\sched_s(\protect\Finally (l=1)) = 1 - \Pr^\sched_s(\protect\Finally (l=2))$.

\paragraph{\SD}    
    \cite{andriushchenkoDeductiveController2023} extend the notion of \emph{stochastic domination} \cite{bartheProbabilisticCouplings2020} to MDP states by defining that a state $s_1$ stochastically dominates a state $s_2$ w.r.t.\ a target set $T$ iff 
    \[ 
        \forall \sched .\ \Pr^{\sched}_{s_1}(\Finally T) \geq \Pr^{\sched}_{s_2}(\Finally T) ~,
    \]
    which is a natural universally quantified relational reachability property.
    In \cite{andriushchenkoDeductiveController2023}, this property is applied to various robot-maze problems with a fixed pair of initial locations each, checking whether the first location can guarantee a better reachability probability than the other location, no matter how the robot behaves.
    We check the same property (over memoryful randomized schedulers) on all mazes provided in \cite{andriushchenkoDeductiveController2023}.

}{%
    We consider (mild variations of) \TA, \PW, \TS from \cite{abbd20-atva}. 
    \TA and \PW check properties of the form
    \begin{align*}
        \textstyle\bigwedge_{\state_1,\state_2 \in \Init} \bigwedge_{i=0}^{M} \forall \sched_1, \sched_2 .\ \Pr^{\sched_1}_{\state_1}(\Finally T_i) = \Pr^{\sched_2}_{\state_2}(\Finally T_i)
        ~,
    \end{align*}
    where $\Init$ is a set of initial states.
    Here, we benchmark only the first conjunct (which can be falsified) for all tools.
    We consider two variations of \TA and \PW: \TAone and \PWone use only a single scheduler quantifier for both probability operators while \TAtwo and \PWtwo use two scheduler quantifiers, as in the original formulation~\cite{abbd20-atva}.\footnote{Note that both variants are equivalent over general schedulers.}
    The property for \TS is analogous, but with only a single scheduler quantifier and we fix a different pair of initial states for each instance. 
    Further, we consider \SD from \cite{andriushchenkoDeductiveController2023}, which checks 
    \[ 
        \forall \sched .\ \Pr^{\sched}_{s_1}(\Finally T) \geq \Pr^{\sched}_{s_2}(\Finally T) ~.
    \]
    Details on all four benchmarks can be found in \cite{extendedVersion}, including the differences in our models to the models from \cite{abbd20-atva,andriushchenkoDeductiveController2023}.
}

\begin{table}[t] 
    \centering
    \caption{
        Comparison on benchmarks for probabilistic hyperproperties.
        The meaning of columns and abbreviations is the same as in \cref{tab:eval_new}.
    }
    \label{tab:eval}
    \adjustbox{max width=\textwidth}{  
    \begin{tabular}{l l r || c | r >{\color{gray}}r || c || c | r >{\color{gray}}r | r >{\color{gray}}r}
        \toprule
         \multicolumn{3}{c||}{Case study} & \multicolumn{3}{c||}{Our tool} & \multirow{2}{*}{$\equiv$?} & \multicolumn{5}{c}{Comparison \textbf{over MD sched.}} \\
         & Variant & \makecell{$|\states|$} & HR res. & \multicolumn{2}{c||}{Time} & &
         MD res. & 
         \multicolumn{2}{c|}{\HyperProb} & \multicolumn{2}{c}{\HyperPaynt} \\
         \midrule 
         \multirow{5}{*}{\TAone} 
         & $M{=}8$ & $423$
         & No & ${<}1s$  & (${<}1s$)  
         & \multirow{5}{*}{\makecell{$\not\equiv$ \\ Ex~\ref{ex:oneschedtwostate-exists-greater-memory}}}
         & No 
         & $3\,055s$ & ($2\,858s$) 
         & ${<}1s$  & (${<}1s$)  
         \\
         & $M{=}16$ & $13\,039$ 
         & No & ${<}1s$  & (${<}1s$) 
         & & No 
         & \OOM & - 
         & $21s$ &  ($1s$)
         \\
         & $M{=}24$ & $307\,175$ 
         & No & $20s$ & (${<}1s$)
         & & - 
         & \OOM & - 
         & \TO & -
         \\
         & $M{=}28$ & $1\,425\,379$ 
         & No & $490s$ & (${<}1s$)
         &  & -
         & \OOM & - 
         & \TO & -
         \\
         & $M{=}32$ & $6\,488\,031$
         & -  & \TO & -
         & & -
         & \TO &
         & \TO & -
         \\
         \midrule
         \multirow{5}{*}{\TAtwo} 
         & $M{=}8$ & 423 
         & No & ${<}1s$ & (${<}1s$) 
         &\multirow{5}{*}{\makecell{$\equiv$ \\ Cor.~\ref{th:MD_suffice}}}
         & No 
         & \multicolumn{2}{c|}{\multirow{5}{*}{n/a}}  
         & ${<}1s$ & (${<}1s$)
         \\
         & $M{=}16$ & $13\,039$ 
         & No & ${<}1s$ & (${<}1s$) 
         & & No 
         &  &  
         & $115s$ & ($2s$)
         \\
         & $M{=}24$ & $307\,175$ 
         & No & $20s$ & (${<}1s$) 
         & & - 
         &  &  
         & \TO & - 
         \\
         & $M{=}28$ & $1\,425\,379$ 
         & No & $493s$ & (${<}1s$)
         & & -
         & & 
         & \TO & -
         \\
         & $M{=}32$ &  $6\,488\,031$
         & - & \TO & -
         & & -
         &  &
         & \TO & -
         \\
         \midrule
         \multirow{2}{*}{\PWone} 
         & $M{=}2$ & $2\,307$ 
         & No & ${<}1s$ & (${<}1s$) 
         & \multirow{2}{*}{ \makecell{$\not\equiv$ \\ Ex~\ref{ex:oneschedtwostate-exists-greater-memory}} }
         & No 
         & \OOM & - 
         & ${<}1s$ & (${<}1s$) 
         \\
         & $M{=}4$ & $985\,605$ 
         & - & \TO & - 
         &  & -
         & \TO & - 
         & \TO & -
         \\
         \midrule
         \multirow{2}{*}{\PWtwo} 
         & $M{=}2$ & $2\,307$ 
         & No & ${<}1s$ & (${<}1s$)  
         & \multirow{2}{*}{ \makecell{$\equiv$ \\ Cor.~\ref{th:MD_suffice}} }
         & No 
         & \multicolumn{2}{c|}{\multirow{2}{*}{n/a}} 
         & $5s$ & (${<}1s$)
         \\
         & $M{=}4$ & $985\,605$ 
         & - & \TO & - 
         & & - 
         & &  
         & \TO  & -
         \\
         \midrule
         \multirow{5}{*}{\TS} 
         & $h{=}(10,20)$ & 252 
         & No & ${<}1s$ & (${<}1s$) 
         & \multirow{5}{*}{ \makecell{$\not\equiv$ \\ Ex~\ref{ex:oneschedtwostate-exists-greater-memory}} }
         & No 
         & $112s$ & ($38s$)
         & ${<}1s$ & (${<}1s$)
         \\
         & $h{=}(20,200)$ & $2\,412$ 
         & No & ${<}1s$ & (${<}1s$) 
         &  & No
         & \OOM & - 
         & $2s$ & (${<}1s$)
         \\
         & $h{=}(20,5\,000)$ & $60\,012$ 
         & No & ${<}1s$ & (${<}1s$) 
         &  & No
         & \OOM & -
         & $1\,076s$ & ($20s$)
         \\
         & $h{=}(50,10\,000)$ & $120\,012$ 
         & No & ${<}1s$ & (${<}1s$) 
         &  & -
         & \OOM & -
         & \TO & -
         \\
         & $h{=}(50,20\,000)$ & $240\,012$ 
         & No & ${<}1s$ & (${<}1s$) 
         &  & 
         & \OOM & -
         & \TO & -
         \\
         \midrule 
         \multirow{7}{*}{\SD} 
         & simple & 10 
         & No & ${<}1s$ & (${<}1s$)  
         & \multirow{7}{*}{ \makecell{$\not\equiv$ \\ Ex~\ref{ex:oneschedtwostate-exists-greater-memory}} }
         & Yes 
         & $3s$  & (${<}1s$) 
         & $2s$ & ($2s$)
         \\
         & splash-1 & 16 
         & No & ${<}1s$ & (${<}1s$) 
         &  & No 
         & $1\,184s$  & ($1\,183s$) 
         & ${<}1s$ & (${<}1s$) 
         \\
         & splash-2 & 25 
         & No & ${<}1s$ & (${<}1s$) 
         & & Yes 
         & \TO & - 
         & $2\,487s$ & ($2\,485s$)
         \\
         & larger-1 & 25 
         & No & ${<}1s$ & (${<}1s$) 
         & & No 
         & \TO  & - 
         & $310s$ & ($310s$)
         \\
         & larger-2 & 25 
         & No & ${<}1s$ & (${<}1s$) 
         & & Yes 
         & \TO & - 
         & $925s$ & (${<}1s$)
         \\
         & larger-3 & 25 
         & No & ${<}1s$ & (${<}1s$)
         & & No 
         & \TO & - 
         & ${<}1s$ & (${<}1s$)
         \\
         & train & 48 
         & No & ${<}1s$ & (${<}1s$)
         & & Yes 
         & \TO & - 
         & $14s$ & ($14s$)
         \\
         \bottomrule
    \end{tabular}
    }
\end{table}

\paragraph{Results.}
\cref{tab:eval} presents our 
experimental 
results on these four case studies and provides a comparison with \HyperProb and \HyperPaynt.
We observe that for all benchmarks in \cref{tab:eval}, the running time of our tool consists almost entirely of building the model.

For the benchmarks with a single scheduler quantifier (\TAone, \PWone, \TS, \SD), checking over general schedulers is, in general, not equivalent to checking over MD schedulers (see \cref{ex:oneschedtwostate-exists-greater-memory}). 
Notably, already instances with $100\,000$ states prove to be challenging over MD schedulers for \HyperProb and \HyperPaynt, while our tool solves these instances over general schedulers in over a minute and can handle instances of \TAone with 1 million states in less than 10 minutes.

For the benchmarks with two scheduler quantifiers (\TAtwo and \PWtwo), 
checking over MD schedulers is equivalent to checking over general schedulers since both probability operators are independent (\cref{th:MD_suffice}).%
\footnote{Checking for two scheduler quantifiers on a manually created self-composition of the MDP with \HyperProb already exceeds memory bounds for \TA with $m{=}8$.}
Notably, our tool solves the instance for $M{=}24$ in over a minute while \HyperPaynt times out.
Our tool can solve instances of \TAtwo with more than 1 million states, but times out for the instance of \PWtwo with almost 1 million states.

In summary, our tool is orders of magnitude faster than existing tools, which restrict to MD schedulers but nevertheless solve an equivalent problem in some cases (\TAtwo, \PWtwo).
For problem instances belonging to fragments whose decision problems are \NP-hard over MD schedulers but in \PTIME over general schedulers, we show that solving the \NP-hard problem via SMT solving (\HyperProb) or an abstraction-refinement approach (\HyperPaynt) is also much harder in practice than solving the \PTIME problem.

\section{Related Work}
\label{sec:related}

We discuss two main areas of related work: \emph{multi-objective MDP model checking} and \emph{probabilistic hyperlogics}.

\paragraph{Multi-objective model checking.}
The techniques and results of this paper are strongly related to multi-objective model checking (MOMC) for 
MDPs~\cite{chatterjeeMarkovDecision2006,chatterjeeMarkovDecision2007,etessamiMultiObjectiveModel2008,quatmannVerificationMultiobjective2023}. In 
MOMC, the key question is: \emph{Is there a scheduler such that target set $A$ is reached with probability \underline{at least} $\lambda_A$ \underline{and} target set $B$ is reached 
with probability \underline{at least} $\lambda_B$?}
With appropriate preprocessing, MOMC can moreover ask whether a set 
of states is reached with probability \emph{exactly}, say, $0.5$ --- but not whether there exists a scheduler such that $A$ and $B$ are reached with equal (or approximately equal) probability.
Indeed, MOMC and the relational properties we study are incomparable 
in the sense that they are disjoint up to single-objective queries:
MOMC focuses on \emph{conjuctive} queries but each conjunct can only compare a probability to a \emph{constant}.
In contrast, our relational properties are single constraints involving multiple probability operators; they do not support conjunctions but allow \emph{comparing} probabilities.
Similar to our results, however, MOMC with a fixed number of targets  is \NP-complete over MD schedulers~\cite{quatmannVerificationMultiobjective2023}, and solvable in \PTIME over general schedulers.
Moreover, prominent algorithms~\cite{forejtParetoCurves2012} for MOMC also rely on optimizing weighted sums of probabilities just like we do.
MOMC is supported in probabilistic model checkers like \PRISM~\cite{knp11} and \storm~\cite{henselProbabilisticModel2022}.
Due to the similarities outlined above, we were able to reuse some of \storm's MOMC capabilities in our implementation.

\paragraph{Probabilistic hyperlogics.}
Our paper was motivated by recent emerging interest in algorithms for probabilistic 
hyperlogics, like \HyperPCTL~\cite{ab18,abbd20-atva} and 
\PHL~\cite{dft20}.
Relational reachability properties are a strict fragment of \HyperPCTL. 
While \PHL cannot naturally compare probabilities from different initial states, every MDP $\mdp$ and \RelReach property $\varphi$ can be transformed to an MDP $\mdp'$ and \PHL formula $\varphi'$ s.t.\ $\mdp$ satisfies $\varphi$ iff $\mdp'$ satisfies $\varphi'$ over general schedulers (by making a copy of the MDP for every state-scheduler combination).
\HyperPCTL and \PHL both can express properties like \emph{Does there exist a scheduler such that all paths are trace-equivalent almost-surely?}, which \RelReach does not cover.
Due to their high expressiveness, the corresponding model-checking problems are undecidable. 
This paper contributes two main points to the study of probabilistic hyperlogics.
First, searching for randomized and memoryful schedulers may be beneficial complexity-wise, both in theory and in practice.
Second, many of the motivating case studies for probabilistic hyperlogics can also be treated by dedicated and therefore much more efficient routines, which also motivated the use of an AR-loop in~\cite{andriushchenkoDeductiveController2023}. 
However, in \cite{andriushchenkoDeductiveController2023} a \emph{search} over a finite amount of (MD) schedulers is suggested, while we study the computational complexity and consider an algorithm for general (and thus uncountably many) schedulers.
While \uppaalsmc is not a tool for probabilistic hyperproperties, it also supports \emph{statistical} model-checking for comparison of two cost-bounded reachability  probabilities \emph{on DTMCs}~\cite{davidTimeStatistical2011}. 
\section{Conclusion}
\label{sec:conclusion}

In this paper, we investigated probabilistic model checking for relational reachability properties in MDPs. 
We presented a practically and theoretically efficient algorithm for  these properties and demonstrated the hardness of some other fragments. 
Relational reachability properties are a sub-class of probabilistic hyperproperties and we showed that, compared to existing approaches for model checking probabilistic hyperproperties, our approach does not (need to) restrict itself to only memoryless deterministic schedulers. In addition, it scales orders of magnitudes better on various benchmarks used to motivate probabilistic hyperproperties.

For future work, we aim to study more expressive fragments. In particular, we would like to consider Boolean combinations (generalizing to multi-objective model checking), relate $\omega$-regular properties or properties that require trace equivalence, and consider models like POMDPs with restricted scheduler classes.

\begin{credits}
\subsubsection{\ackname} 
We thank Oyendrila Dobe for fruitful discussions on earlier versions of this work.
We are grateful to Tim Quatmann for his continuous and reliable \storm support.
Lina Gerlach and Tobias Winkler are supported by the DFG RTG 2236/2 \textit{UnRAVeL}. 
Sebastian Junges is supported by the NWO VENI Grant ProMiSe (222.147).
This work is partially sponsored by the United States National Science Foundation (NSF) Award SaTC 2245114.

\subsubsection{\discintname}
The authors have no competing interests to declare that are relevant to the content of this article. 
\end{credits}

\bibliographystyle{splncs04}
\bibliography{bibliography}

\end{document}